\newif\ifarxiv
\arxivtrue

\ifarxiv
    \documentclass[a4paper,USenglish,cleveref,autoref,thm-restate]{lipics-v2021} %
    \hideLIPIcs  %
    \nolinenumbers
\else
    \documentclass[a4paper,USenglish,cleveref,autoref,thm-restate]{lipics-v2021} %
\fi

\usepackage[skins]{tcolorbox}   %

\usepackage{xargs}  %

\usepackage{tikz}
\usetikzlibrary{positioning,fit,shapes,arrows.meta,decorations.pathreplacing, calc,decorations.text, decorations.markings}

\usepackage{thmtools,thm-restate}

\author{Joanne Dumont}{Univ. Orl\'eans, INSA Centre Val de Loire, LIFO UR 4022, F-45067 Orl\'eans, France}{}{}{}
\author{Edouard Nemery}{Universit\'{e} Paris-Dauphine, PSL University, CNRS UMR7243, LAMSADE, Paris, France}{}{}{}
\author{Anthony Perez}{Univ. Orl\'eans, INSA Centre Val de Loire, LIFO UR 4022, F-45067 Orl\'eans, France}{}{}{}
\author{Florian Sikora}{Universit\'{e} Paris-Dauphine, PSL University, CNRS UMR7243, LAMSADE, Paris, France}{}{}{}

\usepackage[commandRef=cref,createShortEnv, conf={normal, no link to proof}]{proof-at-the-end} 

\newenvironment{mainresult}[4]{%
    \begin{#1E}
    \label{#2}
    #3
    \end{#1E}
}

\newenvironment{apxresult}[4]{%
\ifarxiv
    \begin{#1E}
    \label{#2}
    #3
    \end{#1E}
\else
    \begin{#1E}[$\star$][end, restate, no link to proof, category=#4]
    \label{#2}
    #3
    \end{#1E}
\fi
}

\authorrunning{J. Dumont and E. Nemery and A. Perez and F. Sikora} %

\Copyright{Joanne Dumont and Edouard Nemery and Anthony Perez and Florian Sikora} %

\ccsdesc[500]{Mathematics of computing~Graph algorithms}

\keywords{Parameterized Complexity, Broadcast Problem, Structural Parameter} 

\relatedversion{} %

\acknowledgements{The authors thank Florent Foucaud and Prafullkumar Tale for discussions about the problem. We are also grateful to anonymous referees for insightful comments that really helped improve the presentation of this work. }%

\title{On the Parameterized Complexity of \BI{} and \BP}
\titlerunning{Parameterized Complexity of \BI{} and \BP} %

\renewcommand{\le}{\leqslant}
\renewcommand{\ge}{\geqslant}
\renewcommand{\leq}{\leqslant}
\renewcommand{\geq}{\geqslant}

\def\BI{\textsc{Broadcast Independence}}
\def\PBI{$p$-\textsc{Broadcast Independence}}
\def\WBI{\textsc{Weighted \BI{}}}
\def\WPBI{\textsc{Weighted} \PBI}
\def\BP{\textsc{Broadcast Packing}}
\def\PBP{$p$-\textsc{Broadcast Packing}}
\def\WBP{\textsc{Weighted Broadcast Packing}}
\def\WPBP{\textsc{Weighted} \PBP}

\newtcolorbox{mypb}[2][]
{
    enhanced,
    boxed title style = {colframe=white},
    attach boxed title to top left={
        xshift=0.5cm,
        yshift= -3.5mm,     
    },
    top=4mm,
    coltitle=black,
    beforeafter skip=\baselineskip,
    colframe = lightgray,
    colback  = white,
    colbacktitle  = white,
    coltitle = black,  
    fonttitle = \scshape,
    titlerule = 0mm, 
    title    = {#2},
    #1
}

\newcommand{\Pb}[3]{%
    \begin{mypb}{#1}
       \textbf{\textsf{Input}}: #2%
       \par\noindent%
       \textbf{\textsf{Output}}: #3%
       \smallskip%
       \par\noindent%
    \end{mypb}%
}

\newcommand{\bra}[1]{[#1]}
\newcommand{\braz}[1]{[#1]_0}

\newcommand{\brp}{\rho}
\newcommand{\bri}{\sigma}

\newcommand{\diam}{\operatorname{diam}}
\newcommand{\ecc}{\operatorname{ecc}}
\newcommand{\tw}{\operatorname{tw}}
\newcommand{\fvs}{\operatorname{fvs}}
\newcommand{\pw}{\operatorname{pw}}
\newcommand{\vc}{\operatorname{vc}}
\newcommand{\td}{\operatorname{td}}

\newcommand{\val}{value}
\newcommand{\vals}{values}

\newcommand{\ABP}[1]{#1^{\brp}} %
\newcommand{\ABI}[1]{#1^{\bri}} %

\usepackage{usebib}
\newbibfield{authortext}
\bibinput{bib}

\newcommand{\authorcite}[1]{\usebibentry{#1}{authortext}~\cite{#1}}

\usepackage{mathtools}
\newcommand\myeq{\stackrel{\mathclap{\mbox{def}}}{=}}

\begin{document}

\maketitle

\begin{abstract}
A broadcast on graphs models is a natural and well-studied way to capture limited-range transmissions. 
Formally, a broadcast on a connected graph $G = (V,E)$ is a function $f$ that assigns each vertex $v$ an integer $f(v)$ with $0 \leq f(v) \leq \ecc(v)$ 
where $\ecc(v)$ denotes the eccentricity of $v$. 
A vertex $u$ \emph{hears} a broadcasting vertex $v$ (with $f(v)>0$) if $u$ is at distance at most $f(v)$ from $v$. 
Beyond the classical broadcast domination problem, where every vertex is required to hear at least one vertex, two variants 
raise intriguing combinatorial and algorithmic questions. 
In an \emph{independent broadcast}, no broadcasting vertex hears another broadcasting vertex, while a \emph{broadcast packing} 
requires that every vertex hears at most one broadcasting vertex.
The corresponding 
problems \BI{} and \BP{} ask for broadcasts of \vals{} at least $k$ under these constraints, where the \val{} is the sum of the broadcast \vals{}. 
We initiate a systematic study of the parameterized complexity of such problems. 
We prove that \BI{} and \BP{} are FPT parameterized by the treewidth plus the diameter of $G$, with a family of dynamic-programming algorithms over nice tree decompositions. 
We obtain as a corollary that both problems are FPT parameterized by $k$ and the treewidth of $G$ and XP for treewidth only. 
The latter result shows that the known algorithm for trees (Bessy and Rautenbach, DAM 2022) can indeed be extended to bounded-treewidth graphs. 
On the negative side, we show that \BI{} is W[1]-hard parameterized by the pathwidth of $G$. 
Note that this result completes the picture for parameter $k$ and treewidth for \BI{} since it is known to be W[1]-hard for $k$ only.
We complement these results by showing that a weighted version of both problems, where the input comes with a weight function 
on the edges, is W[1]-hard parameterized by the vertex cover of $G$. 
Finally, we provide a constant-factor approximation algorithm parameterized by treewidth for \BI. 
\end{abstract}

\section{Introduction}
\label{sec:intro}

The concept of broadcast domination and broadcast independence has been introduced by Erwin in his PhD thesis~\cite{Erwin2001}. 
A broadcast of a connected graph $G = (V,E)$ is a function $f \colon V \to \{0, \ldots, \diam(G)\}$ 
such that $f(v) \leqslant \ecc(v)$ for every vertex $v$ of $V$, where $\ecc(v)$ denotes the eccentricity of $v$, i.e. the maximum distance from $v$ to any other vertex, and $\diam(G)$ its diameter, i.e. its maximum eccentricity.  
A vertex $v$ is \emph{broadcasting} if $f(v) > 0$ and  
a vertex $u$ \emph{hears} a broadcasting vertex $v$ if $u$ is at distance at most $f(v)$ from $v$ (in particular each broadcasting vertex hears itself).
The \val{} of a broadcast $f$ is $\sum_{v \in V}f(v)$. 
Finding a broadcast of minimum \val{} such that every vertex hears a broadcasting vertex is thus a natural variant of the classical \textsc{Dominating Set} problem. 
Surprisingly and in sharp contrast with \textsc{Dominating Set} which is one of Karp's 21 NP-complete problems~\cite{Karp1972}, a beautiful polynomial-time algorithm finding an optimal broadcast domination in polynomial time has been given by Heggernes and Lokshtanov~\cite{HL06}. 

\medskip

Several other broadcast variants have been studied in the literature, notably broadcast \emph{independence}, \emph{irredundance} and \emph{packing} (see for instance the introductory section of~\cite{DBLP:journals/dam/BouchouikaBS20} for more details). 
In this paper, we focus on independence and packing. 
A broadcast $\bri$ is \emph{independent}\footnote{In the remainder of the paper and to avoid confusion between the two notions we use $\bri$ to denote independent broadcasts and $\brp$ for broadcast packings.} if no broadcasting vertex hears another broadcasting vertex; more precisely, for any distinct broadcasting vertices $u,v$, we must have $d_G(u,v) > \max(\bri(u),\bri(v))$ where $d_G$ is the distance in $G$.
A broadcast $\brp$ is a \emph{packing}\footnotemark[1] if every vertex hears at most one broadcasting vertex; 
or, equivalently, for any distinct broadcasting vertices $u,v$, we must have $d_G(u,v) > \brp(u)+\brp(v)$. 
The corresponding optimization problems ask for a broadcast of maximum \val{} $\sum_{v \in V} f(v)$ that satisfies one of the aforementioned properties.
We denote these problems by \BI{} and \BP, respectively, and define them as follows in their decision version:

\Pb{\BI}
{a graph $G = (V,E)$, an integer $k \in \mathbb{N}$.}
{a broadcast $\bri$ of $G$ of \val{} at least $k$ such that for any $u,v \in V$ with $u \neq v$ and $\bri(u)>0, \bri(v)>0$, it holds that $d_G(u,v) > \max(\bri(u),\bri(v))$.}

\Pb{\BP}
{a graph $G = (V,E)$, an integer $k \in \mathbb{N}$.}
{a broadcast $\brp$ of $G$ of \val{} at least $k$ such that for any $u,v \in V$ with $u \neq v$ and $\brp(u) > 0$, $\brp(v) > 0$, it holds that $d_G(u,v) > \brp(u) + \brp(v)$.}

\subparagraph*{Related work} Although many structural bounds have been given in the last decade about both broadcast variants~\cite{DBLP:journals/dam/AhmaneBS18,bessy2018girth,DBLP:journals/dm/BessyR19,DBLP:journals/dam/BouchouikaBS20,brewster2024broadcast,DBLP:journals/dam/DunbarEHHH06,hoepner2023boundary,McDonald25}  
the complexity status remained open for quite some time even on simple graph classes. 
For instance, the complexity status of \BI{} on trees was repeatedly asked for in the literature~\cite{DBLP:journals/dam/AhmaneBS18,DBLP:journals/dam/DunbarEHHH06,hedetniemi2006unsolved}  
and was settled only recently by \authorcite{DBLP:journals/dam/BessyR22} who gave an $O(n^9)$ time algorithm. 
This result was subsequently improved to $O(n^8)$ in~\cite{McDonald25}. 
Algorithms for other specific subclasses of trees (such as the one for locally uniform 2-lobster of fixed length) have also been developed, see e.g.~\cite{DBLP:journals/dam/AhmaneBS18,DBLP:journals/dmgt/AhmaneBS24}.  
Regarding the general case there is little hope 
to have a polynomial-time algorithm as Bessy and Rautenbach proved that \BI{} is NP-hard to approximate within $n^{1-\varepsilon}$ on general graphs, and also NP-hard on planar graphs with maximum degree $4$~\cite{DBLP:journals/dam/BessyR22}. 
We note that the authors of~\cite{DBLP:journals/dam/BessyR22} left open the possibility to adapt their algorithm on trees to graphs of bounded treewidth. 
The \BI{} problem shares some similarity with the \textsc{Distance $p$-independent set}~\cite{eto2014distance,montealegre2016distance} (also called \textsc{$p$-scattered set} in~\cite{DBLP:journals/dam/KatsikarelisLP22}) in which 
one seeks an independent broadcast of maximum \val{} %
where every vertex broadcasts at distance exactly $p$. Notably, it is FPT parameterized by treewidth for fixed $p$, but becomes $W[1]$-hard when $p$ is given in the input~\cite{DBLP:journals/dam/KatsikarelisLP22}.

\subparagraph*{Our results} We follow this line of research by initiating a study of \BI{} and \BP{} from the parameterized complexity viewpoint. 
In such a context the problem comes together with some parameter $k \in \mathbb{N}$ %
and the aim is to decide the problem in time $f(k) \cdot poly(n)$ where $f$ is any computable function and $n$ is the size of the instance. If such an algorithm exists then the problem is Fixed-Parameter Tractable (FPT for short). 
It is possible to show that a problem is unlikely to be fixed-parameter tractable by proving W[1]-hardness via parameterized reductions.
In such a case one seeks an XP algorithm with complexity $f(k) \cdot n^{g(k)}$, where $f$ and $g$ are computable functions. %
We refer the reader to \cite{DBLP:books/sp/CyganFKLMPPS15} for precise definitions of the W-classes. 
A natural parameterization for both problems is the \val{} $k$ required for the sought broadcast. 
We moreover consider several \emph{structural} parameterizations of the problems by mainly considering the well-known treewidth ($\tw$), pathwidth ($\pw$), vertex cover ($\vc$), and feedback vertex set ($\fvs$) parameters. We refer the reader to~\cite{DBLP:books/sp/CyganFKLMPPS15} for more definitions regarding parameterized complexity. %

\medskip

Our contributions are both positive (with algorithms) and negative (with parameterized hardness), see also \cref{fig:bi_complexity} for an illustration. 
Below we state our main results for both problems. 

\begin{figure}
    \centering
    \begin{tikzpicture}[
	param/.style={draw, rounded corners, minimum width=2.6cm, minimum height=0.9cm, align=center},
	arrow/.style={->, thick},
	scale=0.8, transform shape
	]
	
	\node[param,fill=green!50] (vc)  at (2,0)   {Vertex Cover\\ \cref{cor:bi_fpt_vc_td}};
	
	\node[param,fill=orange!50] (fvs) at (0,3)   {Feedback Vertex Set \\\cref{thm:BIFVSPWhard}};
	\node[param,fill=green!50] (td)  at (4,2)   {Treedepth\\\cref{cor:bi_fpt_vc_td}};
	
	\node[param,fill=orange!50] (pw)  at (4,4)   {Pathwidth\\\cref{thm:BIFVSPWhard}};
	
	\node[param,fill=green!50] (twk)  at (8,4)   {Treewidth + $k$\\\cref{cor:BIBPtwkfpt}};
	
	\node[param,fill=orange!50] (k)  at (8,6)   {$k$\\\cite[Proposition 1.1]{DBLP:journals/dam/BessyR22}};
	
	\node[param,fill=green!50] (twdiam)  at (12,4)   {Treewidth + diameter\\ \cref{thm:bitw}};
	
	\node[param,fill=red!50] (diam)  at (12,6)   {Diameter\\ \cite{DBLP:journals/dam/BessyR22}};
	
	\node[param,fill=orange!50] (tw)  at (3,6)   {Treewidth\\\cref{thm:BIFVSPWhard}};
	
	\node[param,fill=red!50] (dtp) at (-1,6) {Distance to\\ Planar \cite{DBLP:journals/dam/BessyR22}};

	\draw[arrow] (vc) -- (fvs);
	\draw[arrow] (vc) -- (td);
	\draw[arrow] (td) -- (pw);
	\draw[arrow] (pw) -- (tw);
	\draw[arrow] (fvs) -- (tw);
	\draw[arrow] (fvs) -- (dtp);
	\draw[arrow] (twk) -- (tw);
	\draw[arrow] (twk) -- (k);
	\draw[arrow] (twdiam) -- (tw);
	\draw[arrow] (twdiam) -- (diam);
	\draw[arrow] (td) -- (twdiam);
\end{tikzpicture}
    \caption{Complexity of \BI\ parameterized by several parameters. In red: \BI\ is NP-hard even when this parameter has fixed value. In orange: \BI\ is W[1]-hard for this parameter. In green: \BI\ admits an FPT algorithm for this parameter. Note that for boxes in orange,  
    XP algorithms are known (for parameter $k$) or provided in this work. 
    An arrow from parameter $x$ to parameter $y$ means that there is a function $f$ such that $y(G) \leq f(x(G))$ for any given graph $G$.
    Hence, positive results propagate downwards, while negative results propagate upwards.
    }
    \label{fig:bi_complexity}
\end{figure}

\begin{restatable}{theorem}{ThmBIFVSPWhard}
\label{thm:BIFVSPWhard}
    \BI{} is W[1]-hard when parameterized by $\fvs+\pw$. 
\end{restatable}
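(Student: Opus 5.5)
We reduce from \textsc{Multicolored Clique} parameterized by the number of colours $\ell$. Given $H$ with colour classes $V_1,\dots,V_\ell$, each of size $n$ after padding, identify $V_i$ with $\{1,\dots,n\}$. The basic idea is to encode the choice of a vertex in $V_i$ as the broadcast \val{} of one vertex $c_i$. Because broadcast \vals{} are unbounded integers, a single vertex can carry $\Theta(\log n)$ bits of choice. Long paths then turn numeric constraints into distance constraints, while only $O(\ell^2)$ cycles and path-like gadgets are created. This is exactly what should give bounded $\fvs+\pw$.

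\textbf{Selection gadgets.} For each colour $i$ we create a centre $c_i$ and fix a large base length $L \gg n$. A vertex $c_i$ broadcasting at \val{} $L+x$ encodes the choice $x\in\{1,\dots,n\}$ of vertex $x \in V_i$. To force $c_i$ to broadcast, and to push its \val{} into $[L+1,L+n]$, we attach long pendant paths to $c_i$. These paths contain designated vertices that would collect a large \val{} if $c_i$ were silent. We also set the target $k$ so that every centre must broadcast with \val{} close to $L$.

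Pairs of colours are handled through independence with respect to auxiliary broadcasters.

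\textbf{Edge gadgets.} For each pair $\{i,j\}$ and each non-edge $(x,y)\in V_i\times V_j$ of $H$, we need to forbid $\bri(c_i)=L+x$ together with $\bri(c_j)=L+y$. This cannot be done per non-edge without destroying the parameter. Instead, for each pair $\{i,j\}$ we use one path $P_{ij}$ joining $c_i$ to $c_j$, of length about $2L+n$, and hang a "complementary" broadcaster $a_{ij}$ on it. The budget should require $a_{ij}$ to broadcast with a \val{} that determines a pair $(x',y')$ encoded in mixed radix as $x'\cdot N + y'$, using a further scale $N$. Independence constraints along two arms of suitable lengths then force $x'=x$ and $y'=y$: the distance from $a_{ij}$ to $c_i$ must exceed $\max$ of the two \vals{}, and the budget forces equality. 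Finally, $(x',y')$ must be an edge of $H$. To enforce this, we attach a single gadget of path-like structure, a caterpillar of pendant paths on one spine, in which $a_{ij}$ may reach \val{} $v$ only if a pendant "blocker" at the appropriate distance is absent. Blockers are present exactly at distances encoding non-edges. The blockers are attached to one spine, so the pathwidth stays $O(1)$ per gadget. Deleting the $O(\ell^2)$ vertices $c_i$ and $a_{ij}$ leaves a forest of caterpillars/spiders, so both $\fvs$ and $\pw$ are $O(\ell^2)$.

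\textbf{Proof steps, in order.} (1) Fix the numeric scales $L, N$ and the budget $k$, which is the sum of the intended \vals{} over centres, connectors and the fixed optimal contributions of pendant paths. (2) Prove that if $H$ has a multicoloured clique, the intended broadcast is independent. (3) Prove the converse by a counting and tightness argument. Since every gadget has a local upper bound on what it can contribute, reaching $k$ forces each gadget to be tight, which pins down all \vals{}. (4) Bound $\fvs$ and $\pw$.

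\textbf{Main obstacle.} Step (3) is the hard part, specifically preventing a large broadcast from a non-intended vertex. Eccentricities are huge, so a single vertex broadcasting at \val{} close to the diameter could collect as much as many small broadcasters together. I would first try to prevent this by making the total budget dominated by many disjoint long pendant paths. On each such path the best contribution is achieved by local broadcasters and is destroyed by any far-reaching broadcast. A global broadcast therefore loses more than it gains, which I would prove with an exchange argument.
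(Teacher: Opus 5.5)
\textbf{There is a genuine gap, and it sits at the core of your edge gadget: in an independent broadcast, broadcast values are never coupled to one another.}

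The condition $d(u,v) > \max(\sigma(u),\sigma(v))$ is just the two separate bounds $\sigma(u) < d(u,v)$ and $\sigma(v) < d(u,v)$. So once the set $S$ of broadcasters is fixed, the feasible values form a box. An optimal solution then sets $\sigma(u)=\min(\ecc(u),\min_{v\in S\setminus\{u\}} d(u,v)-1)$ for every $u\in S$, independently of the other values. This breaks your construction in three places.
\begin{itemize}
\item \emph{Edge gadget.} The constraint between the fixed vertices $a_{ij}$ and $c_i$ only says that both values lie below the fixed number $d(a_{ij},c_i)$. No budget or tightness argument can force $x'=x$, because $\sigma(a_{ij})$ does not depend on $\sigma(c_i)$ at all. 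Your ``complementary'' encoding implicitly relies on an additive constraint $\sigma(u)+\sigma(v)<d(u,v)$, which is the packing condition, not independence.
\item \emph{Blocker caterpillar.} Non-broadcasting vertices impose no constraint whatsoever. Broadcasting blockers only give upper bounds, so the admissible values of $a_{ij}$ always form an initial segment $\{1,\dots,t\}$. You therefore cannot exclude an arbitrary set of non-edge codes while still allowing larger edge codes.
\item \emph{Selection gadget.} The value of $c_i$ is determined by the position of its nearest other broadcaster. So $L+x$ is not a free choice; at best it is a proxy for a positional choice.
\end{itemize}

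\textbf{The missing idea: encode choices by \emph{which} vertices broadcast, all at a common value, and test equality with two opposite one-sided distance thresholds.} The paper first proves W[1]-hardness of \WBI{} parameterized by vertex cover, then removes the weights.
\begin{itemize}
\item Each $v_i^j$ is joined to hubs $l_i,r_i$ by edges of weight $2j$ and $2n+2-2j$.
\item Only \emph{edges} $e=v_i^{m_i}v_j^{m_j}$ of $H$ become vertices $v_e$; non-edges get no vertex, and this is how adjacency is encoded. Each $v_e$ is joined to $l_i,r_i$ with weights $2b+2n+2-2m_i$ and $2b+2m_i$, and symmetrically for $j$. This gives $d(v_i^{m'},v_e)=2b+2n+2-2|m'-m_i|$.
\item A forbid gadget consists of many pendant vertices $f_x$ at distance $\alpha$ from a hub $s$. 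They dominate the budget and force some $f_x$ to broadcast at $2\alpha-1$, which silences $s$, all $l_i,r_i$ and all $c_{ij}$. Your instinct in the last paragraph is essentially this gadget.
\item Budget counting then forces exactly one broadcaster per gadget, each at value $2b+2n+1$. Independence holds iff $m'=m_i$.
\item To remove the weights, each edge is subdivided into a path, and its midpoint is joined to $s$ by a path of length $(D-\omega(e))/2$, where $D$ is the weighted diameter. Deleting a vertex cover of the weighted graph together with $s$ leaves a forest of pathwidth at most $3$. This gives $\fvs\le\vc+1$ and $\pw\le\vc+4$.
\end{itemize}
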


\begin{restatable}{theorem}{ThmBItw}
\label{thm:bitw}
    \BI{} can be solved in time $O^*(\diam(G)^{O(\tw)})$\footnote{We use the $O^*$ notation to indicate that polynomial factors are omitted from the expression. That is, we write $O^*(f(n))$  whenever the complexity is $O\left(f(n) \cdot poly(n)\right)$.} on graphs of treewidth $\tw$ given a nice tree-decomposition of $G$. 
\end{restatable}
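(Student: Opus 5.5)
We design a dynamic program over the given nice tree decomposition $(T,\{X_t\}_{t})$ of width $\tw$. Let $D=\diam(G)$; every broadcast value lies in $\{0,\dots,D\}$. For a node $t$, let $V_t$ be the set of vertices in bags of the subtree rooted at $t$. The difficulty is that independence is a metric condition in the whole graph $G$, not in $G[V_t]$. So the DP state cannot just record the values assigned to bag vertices; it must summarise how the broadcasts chosen in $V_t$ interact, through distances in $G$, with the rest of the graph. Since $X_t$ separates $V_t\setminus X_t$ from $V\setminus V_t$, every shortest path from a vertex $u\in V_t\setminus X_t$ to a vertex $w\notin V_t$ passes through $X_t$. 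Hence $d_G(u,w)=\min_{x\in X_t}\bigl(d_G(u,x)+d_G(x,w)\bigr)$. We precompute all distances $d_G$ in polynomial time.

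\textbf{State.} For each bag vertex $x\in X_t$ we record two quantities, each an integer in $\{0,\dots,D\}$ or a symbol $\infty$:
\begin{itemize}
\item $\bri(x)$, the value of $x$ (possibly $0$);
\item $a(x)$, the minimum of $d_G(u,x)$ over broadcasting vertices $u\in V_t\setminus X_t$.
\end{itemize}
The constraint between such a $u$ and an outside broadcaster $w$ is $d_G(u,w)>\max(\bri(u),\bri(w))$. This needs more than the minimum distance, because a far broadcaster with a large value may be more constraining.

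We therefore record, for each $x\in X_t$, a \emph{requirement} $r(x)=\max_u\bigl(\bri(u)-d_G(u,x)\bigr)$ over broadcasting $u\in V_t\setminus X_t$, clamped to $\{-1,\dots,D\}$. Any future broadcaster $w$ must then satisfy $d_G(x,w)>r(x)$ for all $x$. Symmetrically, $a(x)$ handles the condition $d_G(u,w)>\bri(w)$: $w$ must satisfy $\bri(w)<\min_x\bigl(a(x)+d_G(x,w)\bigr)$. This is exact because both $\max$ and $\min$ decompose through the separator, and both conditions depend on $u$ only through these bag profiles.

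The state is thus a triple $(\bri|_{X_t},a,r)$, giving $(D+2)^{3(\tw+1)}=D^{O(\tw)}$ states per node. The table entry stores the maximum total value of a partial independent broadcast on $V_t$ consistent with the state, meaning:
\begin{itemize}
\item all pairs of broadcasters inside $V_t$ satisfy the independence condition in $G$;
\item the profiles $a$ and $r$ are exactly realised.
\end{itemize}
The eccentricity bound $\bri(v)\le\ecc(v)$ is checked when a value is assigned.

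\textbf{Transitions.}
\begin{itemize}
\item \emph{Leaf.} The empty state has value $0$.
\item \emph{Introduce $v$ with value $s$.} We check $s\le\ecc(v)$. If $s>0$ we check independence against the other bag vertices using $d_G$, and against the forgotten vertices via $a$ and $r$: we need $d_G(x,v)>r(x)$ for all $x$, and $s<\min_x(a(x)+d_G(x,v))$. We set $a(v)=\min_x(a(x)+d_G(x,v))$ and $r(v)=\max_x(r(x)-d_G(x,v))$. These equal the true values by the separator property, since $V_t\setminus X_t$ is separated from $v$ by $X_t\setminus\{v\}$.
\item \emph{Forget $v$.} If $\bri(v)>0$, we fold $v$ into the profiles of the remaining bag vertices: $a(x)\gets\min(a(x),d_G(v,x))$ and $r(x)\gets\max(r(x),\bri(v)-d_G(v,x))$. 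We then maximise over the states that collapse to the same new state.
\item \emph{Join.} The bag values must agree. We take the pointwise $\min$ of the two $a$ profiles and the pointwise $\max$ of the two $r$ profiles. We must also verify independence between forgotten broadcasters of the two sides. Any path between them passes through $X_t$, so the condition $d_G(u,u')>\max(\bri(u),\bri(u'))$ is guaranteed by checking $a_1(x)+d_G(x,y)+a_2(y)$ against the two requirement profiles.
\end{itemize}

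\textbf{Main obstacle.} The hardest step is the join. Pairwise exactness through aggregated profiles is subtle, because the pair achieving $r$ on one side need not be the pair achieving $a$. To handle this, I would instead record, for each $x$, the requirement $r(x)$ alone and verify the condition per vertex $u$ at the moment $u$ is forgotten or introduced. Concretely, independence of $u,w$ is equivalent to both $d_G(u,w)>\bri(u)$ and $d_G(u,w)>\bri(w)$. Each of these one-sided conditions is a "$w$ is outside the ball of $u$" test, and it can be checked through the separator from whichever side holds the larger ball. It is therefore enough to check $r_1(x)<a_2(x')+d_G(x,x')$ for all $x,x'\in X_t$, together with the symmetric condition. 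Correctness follows by induction with the separator lemma. The running time is $D^{O(\tw)}\cdot\mathrm{poly}(n)$ per node, the join cost included, which gives $O^*(\diam(G)^{O(\tw)})$.
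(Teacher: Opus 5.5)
Your proposal is correct and is essentially the paper's dynamic program: your profiles $r$ and $a$ over forgotten broadcasters coincide, up to a shift by one and clamping, with the paper's signature $(s_1,s_2)$, and your final join test is equivalent to the paper's per-vertex test $s'_1(v)\le s''_2(v)+1$, $s''_1(v)\le s'_2(v)+1$ (the cases $x\neq x'$ follow from the case $x=x'$ by the triangle inequality, so the worry in your last paragraph is unfounded). The only design difference is that the paper fixes each vertex's value at its forget node rather than at introduce nodes; this removes $\sigma|_{X_t}$ from the state and avoids both the agreement check at joins and the need to correct for $\sigma(X_t)$ being counted in both children, a correction your join step leaves implicit.
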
 

As we shall see in \cref{sec:progdyn}, \cref{thm:bitw} is obtained through a variant of \BI. 
Since the diameter of a graph is bounded by its number of vertices, \cref{thm:bitw} yields an XP algorithm parameterized by treewidth with running time $n^{O(\tw)}$. This confirms that the algorithm of \authorcite{DBLP:journals/dam/BessyR22} on trees can indeed be extended to bounded-treewidth graphs. %
\medskip

In weighted variants of both problems, namely \WBI{} and \WBP, the input comes with a weight function $\omega \colon E \rightarrow \mathbb{N}$ representing the \emph{length} of each edge. %
In this setting, the distances (and diameters) are calculated relatively to those weights in the classical way. 
We hereafter work with distances as stated in the formal definitions of both (unweighted) problems.
Note that in the case of \WBP, the equivalence between stating that any vertex hears at most one vertex and stating that $d_G(u,v) > f(u) + f(v)$ for any distinct broadcasting vertices $u$ and $v$ does not hold. 

\begin{restatable}{theorem}{ThmWBPVCHard}
\label{thm:WBPVChard}
    \WBP{} parameterized by vertex cover is W[1]-hard even when the weights are polynomially bounded.
\end{restatable}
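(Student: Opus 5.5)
We reduce from \textsc{Multicolored Clique} parameterized by the number $k$ of colour classes. This problem is W[1]-hard, and we may assume each class $V_1,\dots,V_k$ has exactly $n$ vertices, indexed $1,\dots,n$. The plan is to build a weighted graph with a vertex cover of size $O(k^2)$. All the combinatorial information will sit in polynomially bounded edge weights, and the \val{} of the packing will force one choice per colour class and per pair of classes.

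\emph{Selection gadgets.} For each class $i$ we create a hub vertex $c_i$ in the vertex cover. For each vertex $v\in V_i$ of index $a$ we add a pendant vertex $x_{i,a}$ adjacent to $c_i$; these pendants are independent. To prevent long broadcasts from dominating the construction, we also add a few global hub vertices that pairwise join all gadgets with short edges, so the diameter stays bounded by a large constant $M$. The intended packing has exactly one broadcaster per class, namely $x_{i,a}$, with $\brp(x_{i,a})=\beta_i(a)$. The edge weights $\omega(c_i x_{i,a})$ are tuned so that taking index $a$ leaves a ``budget'' around $c_i$ that encodes $a$. A standard choice is $\omega(c_ix_{i,a}) = L - a$ with $\brp(x_{i,a}) = L-a+\text{const}$, which makes the ball around $x_{i,a}$ reach exactly distance $\approx a$ beyond $c_i$.

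\emph{Edge verification.} For each pair $i<j$ we add a hub $c_{ij}$ in the cover. For each edge $uv$ of $G$ with $u\in V_i$ of index $a$ and $v\in V_j$ of index $b$, we add a pendant $y_{uv}$ adjacent to $c_{ij}$ and joined to $c_i$ and $c_j$ by edges of weights depending on $a$ and $b$, say $\omega(y_{uv}c_i)=N+a$ and $\omega(y_{uv}c_j)=N+b$ for a large $N$; these weights are complementary to the encoding in the selection gadget. The inequality $d(u,v)>\brp(u)+\brp(v)$ between $y_{uv}$ (broadcasting at a prescribed value) and the chosen $x_{i,a'}$ then requires $a\ge a'$. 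Pairing each edge with a mirrored copy, using weights $N+(n-a)$, forces $a\le a'$ as well. Hence $y_{uv}$ can broadcast only if $uv$ is consistent with the chosen vertices.

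\emph{Target and correctness.} The vertex cover is $\{c_i\}\cup\{c_{ij}\}\cup$ the global hubs, which has size $O(k^2)$. All weights are $O(n)$ or $O(n\cdot\text{poly})$. The target \val{} is $k\cdot B_1+\binom{k}{2}\cdot B_2$, where $B_1$ and $B_2$ are large enough that any optimal packing must place exactly one broadcaster in each selection gadget and one in each verification gadget. We argue this by a counting/upper-bound lemma: two broadcasters in the same gadget are within distance $2\cdot$(pendant weight) of each other, which limits their combined value below the value of a single maximal one. The forward direction is direct: from a multicolored clique we define the broadcast and check the pairwise distance inequalities.

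\emph{Main obstacle.} The backward direction is the hardest step. Broadcasts are unconstrained in value up to $\ecc$, so a packing might gain value by placing broadcasters on hub vertices, or on pendants with non-intended values. The plan is to show that any packing of \val{} at least $k$ can be normalized into the intended form without loss of value. This requires choosing the constants $L\ll N\ll M$ carefully so that a single broadcast on a hub is bounded by roughly $M$ while blocking a whole gadget worth more. It also requires checking that the shortest paths between pendants really go through the intended hubs rather than detouring through another gadget. Verifying this distance structure is where I expect most of the technical work to lie.
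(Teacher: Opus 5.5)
Your overall architecture matches the paper's: a reduction from multicolored clique, $O(k^2)$ hub vertices forming the vertex cover, independent pendants for selection and verification, indices encoded in polynomially bounded weights, and a counting argument forcing one broadcaster per gadget. However, two load-bearing steps fail as written.

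\textbf{Index check.} A single hub $c_i$ per colour class cannot pin the chosen index. Every short path from the chosen pendant $x_{i,a'}$ to a verification vertex $y$ leaves through $c_i$. Compatibility therefore reads $\omega(x_{i,a'}c_i) > \brp(x_{i,a'})+\brp(y)-d(c_i,y)$. The value $\brp(x_{i,a'})$ must not depend on $a'$, otherwise the target value depends on which clique is chosen. So each verification vertex, mirrored or not, only imposes a lower threshold on $\omega(x_{i,a'}c_i)$. The compatible index sets are then nested and cannot isolate a single index per edge. Concretely, your mirrored copy with weight $N+(n-a)$ bounds $a'+a$; it does not force $a'\ge a$. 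Also, with your literal choice $\omega(c_ix_{i,a})=L-a$ and $\brp(x_{i,a})=L-a+\mathrm{const}$, the ball extends the same amount beyond $c_i$ for every $a$, so nothing is encoded at all.

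The paper uses two hubs $l_i,r_i$ per class with complementary weights $\omega(l_iv_i^j)=j+1$ and $\omega(v_i^jr_i)=n+2-j$. It joins $v_e$, for $e=v_i^{m_i}v_j^{m_j}$, to $l_i$ and $r_i$ with weights $2b+2n-m_i$ and $2b+n+m_i-1$. With all gadget broadcasters at value $b+n$, the route through $l_i$ forces $m'\ge m_i$ and the route through $r_i$ forces $m'\le m_i$.

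\textbf{Capping broadcast values.} You have no mechanism bounding each gadget's broadcaster from above, and that is exactly what makes the target value tight. Bounding the diameter does not achieve this. Moreover, global hubs joined to all gadgets by short edges would shrink every distance and destroy the encoding.

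The paper's bounding gadget has three parts:
\begin{itemize}
\item a universal vertex $s$ whose edges have weight about $b+n$, calibrated so that routes through $s$ between intended broadcasters have length $2b+2n+2$, just above the threshold $2b+2n$;
\item $a=11k^2n^3b$ pendants $f_x$ at weight $2$ from $s$;
\item the target $M=a+1+(k+\tfrac{k(k-1)}{2})(b+n)$.
\end{itemize}
Counting forces at least two $f_x$ to broadcast, hence $\brp(F)\le a+1$ and $s$ is silent. Every other vertex lies within $b+n+3$ of a broadcasting $f_{x_1}$, so its value is at most $b+n+1$. From this cap, the tightness argument gives exactly one broadcaster of value exactly $b+n$ in each $I'_i$ and $E'_{ij}$, none of them on $l_i$, $r_i$ or $c_{ij}$. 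The two-sided distance check then finishes the proof. This is precisely the normalization you flagged as the main obstacle, and it is the idea missing from your plan.
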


We prove a similar result for \WBI{} which is in turn used to establish \cref{thm:BIFVSPWhard}. 
In contrast, we were not able to adapt this reduction to obtain a similar result for \WBP{}.  %
However, the algorithm used in \cref{thm:bitw} can be adapted to this setting, leading to the following result. 

\begin{restatable}{theorem}{ThmBPtw}
\label{thm:bptw}
    \BP{} can be solved in time $O^*(\diam(G)^{O(\tw)})$ on graphs of treewidth $\tw$ given a nice tree-decomposition of $G$. 
\end{restatable}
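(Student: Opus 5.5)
We propose a dynamic program over the given nice tree decomposition $(T,\{X_t\})$ of width $\tw$. Its state at each bag records, for every bag vertex, two distance-like quantities bounded by $\diam(G)$. Write $D=\diam(G)$. Since $\brp(v)\le \ecc(v)\le D$ and distances are at most $D$, all relevant numbers lie in $\{0,\dots,D\}$. The packing condition $d(u,v)>\brp(u)+\brp(v)$ is equivalent to requiring that the balls $B(u,\brp(u))$ and $B(v,\brp(v))$ be disjoint. The DP therefore guesses, for each bag vertex $x$, a pair $(a_x,b_x)$, where:
\begin{itemize}
\item $a_x = \min_{u}(d(x,u)-\brp(u))$ over broadcasting $u$ (the ``coverage slack'' of $x$);
\item $b_x$ is the guessed distance from $x$ to the nearest broadcaster together with its identity class.
\end{itemize}
To keep the state small, we follow the approach used for \cref{thm:bitw}, whose variant we may adapt. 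For each bag vertex $x$ we guess the value $h_x=\min_{u:\brp(u)>0}(d_G(x,u)-\brp(u))\in\{-D,\dots,D\}$, where a negative value means that $x$ hears some broadcaster. This gives $(2D+1)^{\tw+1}$ states per bag.

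The key step is a local reformulation of the packing condition in terms of these values. Every vertex hears at most one broadcaster. In addition, a vertex $x$ that hears $u$ with slack $s=\brp(u)-d(x,u)\ge 0$ must not be within distance $\brp(v)-t$ of another broadcaster $v$, where $t\ge0$. We therefore guess, for every vertex $x$, the value $g_x=\max_u(\brp(u)-d(x,u))$ (the ``remaining radius'') and a second-best value $g'_x$ taken over broadcasters other than the maximizer. Packing is then equivalent to $g'_x<0$ everywhere. Checking this requires tracking which broadcaster achieves the maximum. It suffices to know, along each edge $xy$, whether the maximizers coincide, and this is decided by the consistency rules $g_y\ge g_x-1$ for adjacent vertices. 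The approach mirrors the Heggernes–Lokshtanov style of propagating ``signal strength'' locally. Concretely, $g$ must satisfy $g_v=\brp(v)$ at broadcasters and, elsewhere, $g_x=\max(-1,\max_{y\in N(x)}g_y-1)$. Disjointness of balls then amounts to the absence of an edge $xy$ with $g_x,g_y\ge 0$ whose values ``come from different sources''. We encode this by giving every nonnegative-$g$ vertex a parent pointer toward a neighbor with $g$ one larger, and by requiring that adjacent heard vertices belong to the same tree of pointers.

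The main obstacle is the global ``same source'' requirement: it is not bag-local. Our first attempt would be to replace it with the observation that, in a valid packing, the set of heard vertices decomposes into disjoint balls, and two adjacent heard vertices in different balls $B(u,r)$ and $B(v,s)$ violate packing only if $d(u,v)\le r+s$. We must therefore actually forbid the ball-touching case when it produces $d(u,v)\le r+s$. Using the local values, an edge $xy$ with $g_x=i\ge0$, $g_y=j\ge0$ from different sources certifies $d(u,v)\le (r-i)+1+(s-j)\le r+s$ whenever $i+j\ge1$. The only problematic case is $i=j=0$, which is harmless since $d(u,v)\le r+s+1$ is allowed there. We will thus require the following rule: for adjacent $x,y$ with $g_x+g_y\ge1$ and both nonnegative, $|g_x-g_y|=1$ and the smaller points to the larger. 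We also impose a forced-tightness check, introduced vertices being verified when introduced and edges at introduce-edge nodes. This condition is exactly local.

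The DP then proceeds by the standard cases. At an introduce node we guess $g_v$ and whether $v$ broadcasts, in which case we add $\brp(v)=g_v$ to the value. At an introduce-edge node we check the edge rule. At a forget node we verify that $v$ has a supporting neighbor with $g_v=g_y-1$ when $g_v\ge0$ and $v$ does not broadcast, and that no neighbor violates the rule. A join node requires matching values. We maximize the total value and finally compare it to $k$. The correctness proof shows in one direction that any valid labeling yields a packing, by following the pointers to a unique source, and in the other direction that a packing yields a valid labeling. The running time is $O^*((2D+2)^{O(\tw)})=O^*(\diam(G)^{O(\tw)})$.
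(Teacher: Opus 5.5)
There is a genuine gap, at the step you yourself call the main obstacle. The preliminary observations are correct. Packing is equivalent to the second-best value $g'_x$ being negative everywhere. An edge $xy$ with $g_x,g_y\ge 0$ and $g_x+g_y\ge 1$, whose endpoints are heard from different centres $u\neq v$, does certify $d(u,v)\le \rho(u)+\rho(v)$.

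However, whether the two endpoints are heard from the same centre cannot be read off the $g$-values. The local rule you substitute is not equivalent to packing. That rule requires, for every such edge, $|g_x-g_y|=1$ with the smaller endpoint pointing to the larger. Inside a single ball, $g=\rho(u)-d(u,\cdot)$ routinely has vertices with several neighbours one step closer to $u$, and edges whose endpoints are equidistant from $u$.

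\emph{The rule as stated is too strong.} In $C_4$ the optimum broadcast packing has value $2$ and is attained only by one vertex broadcasting $2$. The antipodal vertex then has label $0$ and two neighbours with label $1$, so with a single parent pointer it cannot point to both. Your rule rejects every optimal solution and the DP returns $1$. The diamond $K_4$ minus an edge fails in the same way, and additionally contains an edge with label $1$ at both ends.

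\emph{The natural relaxation is too weak.} Suppose you allow a vertex to have several larger neighbours, as $C_4$ requires. Take the star $K_{1,3}$ with all three leaves broadcasting $1$, labels $1$ on the leaves and $0$ on the centre. This labelling satisfies every local condition you impose. Yet the centre hears three broadcasters, and the value $3$ exceeds the optimum $2$.

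So the correctness argument (``follow the pointers to a unique source'') cannot be completed. Bag-local information about $g$ cannot distinguish two shortest paths to one centre from paths to two different centres.

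The missing idea, which is how the paper's proof works, is not to localise the packing test to edges at all. All distances $d_G$ are precomputed, and $p=\diam(G)$. The signature of a bag vertex $w$ is $\max\bigl(-p-1,\max_{u}(\rho(u)-d_G(w,u))\bigr)$, taken only over broadcasters $u$ already forgotten below the current node. This is your $h_x$ up to sign, but restricted to $V_x$.

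\begin{itemize}
\item \emph{Forget node.} A vertex's value $l$ is chosen when it is forgotten, subject to the check $l+s'(v)<0$.
\item \emph{Join node.} The children's signatures are combined by $\max$, subject to the check $s'(w)+s''(w)<0$ for every $w\in B_x$.
\end{itemize}

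The join check is exact because every path between the two sides meets the bag, so $d_G(v_1,v_2)=\min_{w\in B_x}\bigl(d_G(v_1,w)+d_G(w,v_2)\bigr)$. Each pair of broadcasters is tested exactly when they first lie on opposite sides of a bag, or when the later one is forgotten. Only the maximum from each side matters, so the identity of the source never has to be tracked. This gives $(2p+2)^{\tw+1}$ states per bag.
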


Our FPT algorithms are obtained by providing dynamic programming on nice tree-decompositions of the input graph (see \cref{def:td,def:ntd}) relying on compact signatures, 
while our hardness proofs are inspired by a reduction of \authorcite{DBLP:journals/dam/KatsikarelisLP22} for the \textsc{$d$-scattered set} problem. 
We moreover observe that both \BI{} and \BP{} are FPT for parameter $k + \tw(G)$. This is a rather straightforward corollary from \cref{thm:bitw,thm:bptw}.  
Note that since \BI{} is W[1]-hard parameterized by $k$ only\footnote{There is a simple reduction from the W[1]-hard \textsc{Maximum Independent Set} problem with a universal vertex (see e.g.~\cite[Prop. 1.1]{DBLP:journals/dam/BessyR22}).}  
and by $\tw$ only (due to \cref{thm:BIFVSPWhard}), our results complete the picture for such parameters. 
To conclude, we provide a constant-factor approximation algorithm parameterized by treewidth for \BI. 

\begin{restatable}{theorem}{ThmBIapprox}
\label{thm:fptapproxbi}
    \BI{} can be approximated within a ratio of $\frac{1}{2}-\epsilon$ in time $O((\frac{1}{\epsilon})^{\tw}n^{O(1)})$ for all $\epsilon>0$.
\end{restatable}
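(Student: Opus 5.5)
The plan is to lose the factor $\frac12$ through a reduction from independence to packing, and the extra $\epsilon$ through coarsening the dynamic programming behind \cref{thm:bptw}.

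\textbf{Step 1: from independence to packing.} Let $\bri$ be an optimal independent broadcast. Fix distinct broadcasting vertices $u,v$ with $\bri(u)\ge\bri(v)$. Independence gives $d_G(u,v)\ge \bri(u)+1$. Set $\brp(w)=\lfloor \bri(w)/2\rfloor$. Then
\[
\brp(u)+\brp(v)\le \tfrac{\bri(u)+\bri(v)}{2}\le \bri(u)<d_G(u,v),
\]
so $\brp$ is a broadcast packing.

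Conversely, every packing is independent, since $d_G(u,v)>\brp(u)+\brp(v)\ge\max(\brp(u),\brp(v))$. Hence $\frac12$-approximating \BI{} essentially reduces to solving \BP{} almost optimally.

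The loss from flooring hurts only vertices of odd value, and most of all those with $\bri(w)=1$. I would handle these by a slightly different rule: keep value $1$ vertices, or a maximal independent subfamily of them, and charge them separately. The key observation is that two vertices with $\bri=1$ are at distance at least $2$, and such vertices are packing-compatible among themselves at distance $\ge 3$. I expect a short case analysis to show that a packing of value at least $(\frac12-\frac{\epsilon}{2})\,\bri(V)$ exists, for example by choosing between the flooring and ceiling rules according to which vertex classes dominate.

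\textbf{Step 2: an approximate packing DP with few states.} The goal is a packing of value at least $(1-\epsilon)\,\mathrm{OPT}_{\BP}$ in time $(1/\epsilon)^{O(\tw)}n^{O(1)}$. The exact algorithm of \cref{thm:bptw} stores, for each bag vertex, exact distance information, which costs $\diam(G)$ states per vertex.

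I plan to replace this by relative information. Since each vertex hears at most one broadcaster, a bag vertex $x$ lying in the ball of a broadcaster $w$ of radius $r$ can be described by the ratio $d(x,w)/r$, rounded to a multiple of $\epsilon$, together with a flag saying whether $w$ is on the processed side. This gives $O(1/\epsilon)$ states per bag vertex.

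To make such rounding sound, I would first shrink every radius by a factor $(1-\epsilon)$. This leaves a slack of $\epsilon r$ in each ball, so an error of $\epsilon r$ in the recorded distances never causes a violation. It costs only an $\epsilon$-fraction of the value, after dropping tiny radii, which are handled by a separate exact treatment of radii below $1/\epsilon$.

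\textbf{Main obstacle.} The hard part is Step 2. When balls are merged at join nodes and extended at introduce nodes, the ratio alone does not determine the absolute distance increment $1/r$ along an edge. So the DP must recover how fast the ratio grows without storing $r$.

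My first attempt would be to also store $r$ rounded to a power of $(1+\epsilon)$, but only once per broadcaster meeting the bag. Then I would argue, via a separator argument, that only $O(\tw)$ such broadcasters interact with a bag. Guessing the exact values of these radii is polynomial in $n$ per broadcaster, but since their number is bounded by $\tw$ this risks $n^{\tw}$. I would therefore aim to make rounded radii consistent by a global geometric grid. If that fails, I would fall back to proving correctness only against a restricted optimum in which radii are grid values, which loses another $(1+\epsilon)$ factor.

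Finally, rescaling $\epsilon$ yields the claimed ratio $\frac12-\epsilon$.
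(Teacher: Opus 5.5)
\textbf{Gap: Step 1 is false.} The flooring rule does produce a packing, but the value bound you hope for cannot hold. No case analysis will fix this, because the optimum of \BP{} is not within any constant factor of the optimum of \BI{}. Take the star $K_{1,n}$ with $n\geq 2$.
\begin{itemize}
\item Letting every leaf broadcast with value $1$ gives an independent broadcast, since leaves are pairwise at distance $2>1$. So the optimum independent broadcast has value at least $n$.
\item All pairwise distances are at most $2$, while two broadcasters of a packing need $d(u,v)>\brp(u)+\brp(v)\geq 2$. So a packing has at most one broadcaster and value at most $2$.
\end{itemize}
Hence the packing of value $(\frac12-\frac{\epsilon}{2})\bri(V)$ you expect does not exist in general. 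No algorithm that finds near-optimal packings can give a $(\frac12-\epsilon)$-approximation for \BI{}. Your idea of ``charging the value-$1$ vertices separately'' produces a broadcast that is no longer a packing, so your Step~2 would not find it. Combining with a maximum independent set does not rescue the approach either. In a spider with $n$ legs of length $2$, the leaves broadcasting $3$ each give value $3n$, but both the best packing and the best independent set have value only about $n$.

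\textbf{Gap: Step 2 is not a proof.} You leave open how to propagate rounded ratios through introduce and join nodes without knowing the radii. The fallback of storing rounded radii per broadcaster would give about $(\log n/\epsilon)^{O(\tw)}$ states rather than $(1/\epsilon)^{O(\tw)}$, and its soundness is not argued.

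\textbf{Missing idea.} No approximate dynamic program is needed: stay within independence and cap the broadcast values at some $p$. \Cref{thm:bistruct} shows that every independent broadcast $\bri$ can be turned into an independent $p$-broadcast $\bri'$ with $\bri'(V)\geq\frac{p}{2p+2}\bri(V)$.
\begin{itemize}
\item A broadcaster with $\bri(v)\leq 2p+2$ is truncated to $\min(\bri(v),p)$.
\item A larger broadcaster $v$ is replaced by broadcasters of value $p$ placed every $p+1$ steps, plus a remainder, along a shortest path of length $\lfloor(\bri(v)+1)/2\rfloor$ starting at $v$.
\end{itemize}
Using only half the radius keeps the new broadcasters coming from different original broadcasters independent. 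This is where the factor $\frac12$ really comes from.

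Then the exact dynamic program for \PBI{} (\cref{thm:WBPItw}) applies. Its signatures take values in $\{0,\dots,p\}$ because all relevant distances can be capped at $p$, so it runs in time $(p+1)^{O(\tw)}n^{O(1)}$. Choosing $p=\lfloor\frac{1}{2\epsilon}\rfloor$ gives $\frac{p}{2p+2}=\frac12-\frac{1}{2p+2}\geq\frac12-\epsilon$.
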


Along the way we provide a structural result that generalizes a known result by \authorcite{DBLP:journals/dm/BessyR19} who showed 
that the size of a maximum independent broadcast is at most $4$ times the size of a maximum independent set. 

\subparagraph*{Outline} \cref{sec:prelim} presents some preliminary notions for graphs and broadcast functions used in this paper. 
\cref{sec:progdyn} describes our dynamic programming algorithms for \cref{thm:bitw,thm:bptw} while \cref{sec:hardness} presents our W[1]-hardness reductions. Finally, \cref{sec:approx} presents a constant-factor approximation algorithm for \BI{} parameterized by treewidth. 

\section{Preliminaries}
\label{sec:prelim}
We consider undirected finite connected graphs, denoted $G=(V,E)$ where $V(G)$ is the vertex set of $G$ and $E(G)$ its edge set. For the sake of readability we use $V$ and $E$ whenever the context is clear. 
Throughout the paper we let $n := \vert V \vert$ and $m := \vert E \vert$. We sometimes consider 
weighted graphs, with a weight function $\omega \colon E \rightarrow \mathbb{N}$ defined over the edges of $G$.  
The (weighted) shortest distance between two vertices $u$ and $v$ is denoted by $d_G(u,v)$ (we drop the subscript whenever the context is clear). 
The eccentricity of a vertex $v$ is denoted $\ecc(v)$ and is the largest distance between $v$ and any other vertex of $G$. 
The diameter of a graph $G$ is denoted $\diam(G)$ and is the maximum eccentricity among vertices (note that since we consider connected graphs the diameter will always be finite).
Given any integer $n$, we let \bra{$n$} denote the set $\{1, \ldots, n\}$ and $\braz{n} = \{0\} \cup \bra{n}$. %
A function 
$f : V \to \braz{\diam(G)}$
is a \emph{broadcast} if $f(v) \leq \ecc(v)$ for every vertex $v$. A vertex $v$ is \emph{broadcasting} (or a \emph{broadcaster}) if $f(v) > 0$. 
The \emph{\val{}} of $f$ is $\sum_{v \in V} f (v)$. %
A vertex $u$ \emph{hears} $v$ if $d(u, v) \leq f(v)$. 
For a set $X \subseteq V$, we denote by $X^f := \{v \in X \mid f(v)>0 \}$ the set of broadcasting vertices in $X$. 
A broadcast is \emph{independent} if $d_G(u, v) > \max(f(u), f(v))$ for any $u$ and $v$ broadcasting vertices, i.e. no broadcasting vertex hears another broadcasting vertex. Similarly, a broadcast is \emph{packing}
if $d_G(u, v) > f(u) + f(v)$ for any broadcasting vertices $u$ and $v$, 
i.e. no vertex hears more than one vertex. 
Recall that in edge-weighted variants of both problems we consider the distance-based definitions of \emph{independent} and \emph{packing}.%
Given any broadcast $f$, we denote $N_f[v] = \{u \mid d(u,v) \leq f(v)\}$ the \emph{broadcast neighborhood} of $v$ (we can see it as a ball of center $v$ and radius $f(v)$). 
Given a graph $G=(V,E)$ and a broadcast $f$ of $G$, for all $X\subseteq V$ we note $f(X):=\sum_{v \in X} f (v)$. 

\medskip

We also consider a variant for both problems where the \val{} of any broadcasting vertex is bounded by some fixed integer $p$. We call such broadcasts $p$-broadcasts and such problems \PBI{} and \PBP, respectively. 
Note that we can assume that $p\leq \diam(G)$ and that $1$-\BI{} is equivalent to the independence number problem. 

\medskip

The following lemma shows that, for \BI{} and \BP{}, we can safely ignore the per-vertex eccentricity constraints and instead allow broadcast \vals{} to range up to the diameter of the graph.
We will consider this version of the definition in the rest of the paper, as it allows for simplifications in the reductions and in the design of algorithms.

\apxresult{lemma}{lem:bounddiam}
{
    Let $f:V\to \braz{\diam(G)}$ be any function. The following holds:
    \begin{enumerate}
        \item\label{bounddiamiii} if there exists a unique $u$ such that $f(u)>0$, then there exists an independent broadcast and a broadcast packing of \val{} at least $\sum_{v \in V} f(v) = f(u)$.
        \item\label{bounddiami} if $d(u,v) > \max (f(u),f(v))$ for any two vertices $u$ and $v$ such that $f(u)>0$ and $f(v)>0$,  
        then $f$ is an independent broadcast. 
        \item\label{bounddiamii} if $d(u,v) > f(u) + f(v)$ for any two vertices $u$ and $v$ such that $f(u)>0$ and $f(v)>0$, 
        then $f$ is a broadcast packing.
    \end{enumerate}
}
{prelim}

\begin{proofE}
    Let $G = (V,E)$ be a graph and $f:V\rightarrow \braz{\diam(G)}$ any function.  
    First, assume that $f$ has exactly one vertex with $f(v)>0$. In that case, let $v\in V(G)$ be such that $\ecc(v)=\diam(G)$ and $f'$ be the function defined by: 
    $f'(u):=\begin{cases}
        \diam(G) &\text{ if} \ u=v \\
        0 & \text{ otherwise} \\
    \end{cases}$.
    
    Then by construction $f'$ is an independent broadcast (and a broadcast packing) of \val{} superior or equal to the one of $f$, which proves \cref{bounddiamiii}.
    
    \medskip
    
    Otherwise, assume that $f$ respects the prerequisites of \cref{bounddiami} and let $u,v$, $u\neq v$, be two vertices such that $f(u)>0$ and $f(v)>0$. 
    Then, by assumption, $f(u)< d(u,v)\leq \ecc(u)$, implying that $f$ is a broadcast, and in particular an independent one. 
    Note that if \cref{bounddiamii} holds then \cref{bounddiami} holds and hence it is also a broadcast packing. 
\end{proofE}

\subparagraph*{Graph decompositions} 
We recall standard definitions of tree-decompositions that will be used in \cref{thm:bitw,thm:bptw}.

\begin{definition}[Tree Decomposition]%
\label{def:td}
A \emph{tree decomposition} of a graph $G = (V,E)$ is a pair $\mathcal{T} = (T,\{B_x\}_{x\in V(T)})$ in
which each vertex $x \in T$ has an assigned set of vertices $B_x \subseteq V$ (called a \emph{bag}) such that $\bigcup_{x \in T} B_x = V$ with the following properties:
\begin{itemize}
    \item for any $uv \in E$, there exists an $x \in V(T)$ such that $u, v \in B_x$.
    \item if $v \in B_x$ and $v \in B_y$, then $v \in B_z$ for all $z$ on the path from $x$ to $y$ in $T$
\end{itemize}
\end{definition}

\begin{definition}[Nice Tree Decomposition \cite{cygan2011solving}]
\label{def:ntd}
A \emph{nice tree decomposition} is a tree decomposition $\mathcal{T} = (T,\{B_x\}_{x\in V(T)})$ where $T$ is a rooted tree, with a special bag $z$ called the root with $B_z = \emptyset$ and in which each bag is of one of the following types:
\begin{itemize}
    \item \textbf{Leaf bag}: a leaf $x$ of $T$ with $B_x = \emptyset$.
    \item \textbf{Introduce vertex bag}: an internal vertex $x$ of $T$ with one child vertex $y$ for which $B_x = B_y \cup \{v\}$ holds for some $v \notin B_y$. This bag is said to \emph{introduce} $v$.
    \item \textbf{Forget bag}: an internal vertex $x$ of $T$ with one child bag $y$ for which $B_x = B_y \setminus \{v\}$ holds for some $v \in B_y$. This bag is said to \emph{forget} $v$.
    \item \textbf{Join bag}: an internal vertex $x$ with two child vertices $y$ and $z$ with $B_x = B_y = B_z$.
\end{itemize}
Moreover, given a tree decomposition, one can compute a nice tree decomposition of the same width in polynomial time.
\end{definition}

\subparagraph*{Structural parameters~\cite{DBLP:books/sp/CyganFKLMPPS15,Hierarchy}} 
The \emph{width} of a tree-decomposition of $G$ is the maximum size of any of its bags minus $1$, and 
\emph{treewidth} of $G$, denoted $\tw(G)$ (or $\tw$ for short) is the minimum width of a tree decomposition of $G$. 
The \emph{pathwidth} $\pw$ of $G$ is defined analogously by requiring that $T$ is a path in \cref{def:td}. 
The \emph{feedback vertex set} number $\fvs$ of $G$ is the minimum number of vertices to remove from $G$ to obtain a forest while its \emph{vertex cover} number $\vc$ is the minimum number of vertices to be removed to obtain an edgeless graph. %

\section{Parameterized algorithms for treewidth and diameter}
\label{sec:progdyn}

\subparagraph*{Notations} %
We consider a graph $G = (V,E)$ together with a tree decomposition of $G$ of width $\tw$. 
We will describe a dynamic programming algorithm over a nice tree decomposition $(T,\{B_x\}_{x\in V(T)})$ of $G$ of width $\tw$. 
Recall that such a nice tree decomposition can be computed in polynomial time from the one for $G$ (\cref{def:ntd}).  %
For each node $x\in V(T)$, let $V_x\subseteq V$ denote the set of all vertices of $G$ that appear in bags of the subtree $T_x$ rooted at $x$, excluding those in $B_x$, i.e. vertices which are forgotten for $x$. 
More formally, we let $V_x := \bigl(\bigcup_{t \in T_x} B_t\bigr) \setminus B_x$ %
and write $G_x:=G[V_x]$. Let $f_x$ be a broadcast of $G_x$. 
The \emph{natural extension} of $f_x$ to $G$ is the broadcast $f$ defined by: \[
    f(v) = 
    \begin{cases} 
        f_x(v) & \text{if } v\in V_x,\\ 
        0       & \text{otherwise.}
    \end{cases}
\]
For the sake of simplicity, we do not distinguish between a broadcast of an intermediate graph $G_x$ and its natural extension to the final graph $G$. 
Finally, recall that $V_x^f$ denotes the set of broadcasting vertices in $V_x$ (with respect to $f$). 
In the following proofs considering edge-weighted graphs, we will not refer explicitly to the weights of edges but rather to the distance function $d$ (or $d_G)$ that encompasses such weights. %
Such distances can be pre-computed beforehand in polynomial-time. 

We first turn our attention to proving \cref{thm:bitw} which we recall here for the sake of readability. 

\ThmBItw*

We actually prove a slightly more general result stated in the following theorem. 

\apxresult{theorem}{thm:WBPItw}
{
    The \WPBI{} problem is FPT when parameterized by the treewidth $\tw$ of the input graph. 
    In particular, there is a dynamic programming algorithm that, given a nice tree-decomposition of $G$ of width $\tw$, 
    solves the problem in time $O^*((p+1)^{O(\tw)})$. 
}
{progdyn}

\begin{proofE}
Recall that one can compute a nice tree-decomposition of $G$ of width $\tw$ in polynomial time (see \cref{def:ntd}). %
In the following proof, a broadcast is \emph{independent} if its natural extension is an independent broadcast in graph $G$. 
We define $\mathcal{B}(G_x)$ as the set of all broadcasts of $G_x$ and $\mathcal{B}^{in}(G_x)$ the set of all independent broadcasts of $G_x$.

Given a node $x \in V(T)$, a \emph{partial signature} is a function $s : B_x \to %
\braz{p}$. 
We assign to every node couples %
of partial signatures $(s_1, s_2)$ that will each provide a compact representation of how a partial broadcast over $G_x$ affects the vertices of $B_x$. 
We call such a couple of partial signatures a \emph{signature} and denote it by $s = (s_1, s_2)$.  We call $A_x$ the set of all signatures of $x$. 
Note that for each bag, there are at most $(p+1)^{2|B_x|}\leq (p+1)^{2\tw+2}$ different signatures.

\begin{definition}[Signature of a broadcast]
Let $x\in V(T)$ and $\bri\in \mathcal{B}(G_x)$. Let $s=(s_1,s_2)\in A_x$ be defined by, for all $v\in B_x$:
\[
    \begin{cases}
    s_1(v)=\max(0,\max\limits_{u\in V_x^\bri}(\bri(u)-d(u,v)+1)) \\
    s_2(v)=\min(p,\min\limits_{u\in V_x^\bri}(d(u,v)-1))
    \end{cases}
\]

Such a signature $s$ is called the \emph{signature of $\bri$ with respect to $x$}. We define the function $sgn_x \colon \mathcal{B}(G_x)\to A_x$ that assigns to each broadcast its signature with respect to $x$. 
\end{definition}

Intuitively, a signature $s=(s_1,s_2)$ encodes two pieces of information for every vertex $v\in B_x$: 
$s_1(v)$ is the smallest distance from v to any vertex that does not hear the same broadcaster as v (note that in some special cases this vertex might not actually exist)
while $s_2(v)$ represents the distance to the nearest broadcaster of $G_x$ (up to $p$). 
Let us recall that a vertex $v \in B_x$ does not belong to $G_x$. Hence note that $s_1(v) = 0$ mean that $v$ does not hear any vertex in (the natural extension of) $\bri$. 

\medskip

Let us now describe the Dynamic Programming table. We associate a \val{} to each signature $s$ of a node $x$ corresponding to the maximum \val{} of an independent broadcast with signature $s$, if it exists. Otherwise, we set the \val{} to $-\infty$. In the following, we aim to compute in a bottom-up fashion the \val{} of the only signature associated to $r$, the root of the tree decomposition. 
Formally, for all $x\in V(T)$, we define the function $val_x$ as follows:
\begin{align*}
    val_x\colon A_x & \to \mathbb{N} \\ 
    s & \mapsto \max\bigl\{ \bri(V_x)\ \big|\ \bri \in \mathcal{B}^{in}(G_x),\  sgn_x(\bri)=s \bigr\}
\end{align*}

If $sgn_x^{-1}\cap \mathcal{B}^{in}(G_x)=\emptyset$, we set $val_x(s)=-\infty$.

\medskip

We note that during the Dynamic Programming algorithm, an introduce node \emph{does not} decide whether the introduced vertex broadcasts or not, this decision will be taken at the corresponding forget node. 
Let $r$ be the root of $T$. By definition of a nice tree decomposition, $G_r=G$, and so, $\mathcal{B}^{in}(G_r)$ is exactly the set of all the independent broadcasts of $G$. 
Also, as $B_r=\emptyset$, there is only one signature $s\in A_r$. 
Therefore, $\mathcal{B}^{in}(G_r)\subseteq sgn_r^{-1}(s)$ and $val_r(s)$ is the \val{} of the best independent broadcast of $G$. Hence, if we have an efficient way to compute the \vals{} of $val_r$, then we can efficiently compute the solution to our problem. We now explain how to efficiently compute the \vals{} of $val_x$ for each type of node $x$. Take $x\in V(T)$.

\subparagraph*{Leaf node} If $x$ is a leaf node, then $B_x=\emptyset$ and $G_x$ is the empty graph. There is a single signature $s\in A_x$, and only one independent broadcast $\bri\in \mathcal{B}^{in}(G_x)$: the empty broadcast (of \val{} $0$). Hence $sgn_x(\bri)=s$ and $val_x(s)=0$.
Regarding the complexity, $val_x$ can be computed in constant time.

\subparagraph*{Introduce node} Suppose $x$ is an introduce node, and that its unique child is $y$. Let $v$ be the vertex of $V$ introduced such that $B_x=B_y\cup\{v\}$. Take a broadcast $\bri\in \mathcal{B}(G_x)$ and let $s=(s_1,s_2)=sgn_x(\bri)$. 
By definition of a tree decomposition, any path from 
 a vertex of $G_x$ to $v$ must go through the separator $B_y$. Therefore: 
 \begin{align*}
     s_1(v) & \myeq \max(0,\max_{u\in V_x^\bri}(\bri(u)-d(u,v)+1))\\
     & =\max(0,\max_{u\in V_x^\bri}(\max_{w\in B_y}(\bri(u)-d(u,w)-d(w,v)+1)))\\
     & =\max(0,\max_{w\in B_y}(s_1(w)-d(w,v)))
 \end{align*}
 and:
 \begin{align*}
     s_2(v) & \myeq \min(p,\min_{u\in V_x^\bri}(d(u,v)-1))\\
     & =\min(p,\min_{u\in V_x^\bri}(\min_{w\in B_y}(d(u,w) + d(w,v)-1)))\\
     & =\min(p,\min_{w\in B_y}(s_2(w)+d(w,v) )).
 \end{align*}
 
Let $s=(s_1,s_2)$ be a signature of $x$. If $s_1(v)=\max(0,\max_{u\in B_y}(s_1(u)-d(u,v)))$ and $s_2(v)=\min(p,\min_{u\in B_y}(s_2(u)+d(u,v)))$, the set of independent broadcasts having $s$ as a signature with respect to $x$ is exactly the set of independent broadcasts having $(s_1|_{B_y},s_2|_{B_y})$ as a signature with respect to $y$. Formally, $sgn_x^{-1}(s)\cap \mathcal{B}^{in}(G_x)=sgn_y^{-1}((s_1|_{B_y},s_2|_{B_y}))\cap\mathcal{B}^{in}(G_y)$. Otherwise, no independent broadcast has $s$ as a signature. For all signatures $s=(s_1,s_2)$ of $x$, we can deduce the following recursive formula:
\[
    val_x(s)=\begin{cases}
        val_y((s_1|_{B_y},s_2|_{B_y})) &\ \text{if}\ s_1(v)=\max(0,\max\limits_{u\in B_y}(s_1(u)-d(u,v)))\\ & \text{and}\ s_2(v)=\min(p,\min\limits_{u\in B_y}(s_2(u)+d(u,v))) \\
        -\infty & \text{otherwise}\\
    \end{cases}
\]

Regarding the complexity, assuming $val_y$ has already been computed, each \val{} of $val_x$ can be computed in time $O(|B_y|)=O(\tw)$, therefore, $val_x$ can be computed $O(\tw \cdot |A_x|)=O(\tw \cdot (p+1)^{2tw+2})$. 

\subparagraph*{Forget node} Suppose $x$ is a forget node and that its unique child is $y$. Let $v$ be a vertex of $V$ such that $B_y=B_x\cup \{v\}$. Take a broadcast $\bri\in \mathcal{B}(G_x)$, let $s=(s_1,s_2)=sgn_x(\bri)$, and $s'=(s'_1,s'_2)=sgn_y(\bri|_{V_y})$. If $\bri(v)=0$, then for all $u\in B_x$, $s_1(u)=s'_1(u)$ and $s_2(u)=s'_2(u)$. Otherwise, for all $u\in B_x$, it holds that:
\begin{align*}
    s_1(u)& \myeq \max\Bigl(0,\max_{\substack{w\in \ABI{V_x}}}(\bri(w)-d_G(u,w)+1)\Bigr) \\
    &=\max\Bigl(0, \max_{\substack{w \in \ABI{V_y}}} (\bri(w)-d_G(u,w))+1,\bri(v)-d_G(u,v)+1\Bigr)\\
    &=\max\Bigl(s'_1(u),\bri(v)-d_G(u,v)+1\Bigr)
\end{align*}

and:
\begin{align*}
    s_2(u)& \myeq \min\Bigl(p,\min_{\substack{w\in \ABI{V_x}}}(d_G(u,w))-1\Bigr) \\
    &=\min\Bigl(p, \min_{\substack{w \in \ABI{V_y}}} (d_G(u,w))-1,d_G(u,v)-1\Bigr)\\
    &=\min\Bigl(s'_2(u),d_G(u,v)-1\Bigr).
\end{align*}

Furthermore, $\bri\in \mathcal{B}^{in}(G_x)$ if and only if $\bri|_{V_y}\in \mathcal{B}^{in}(G_y)$ and either $\bri(v)=0$ or for all $w\in \ABI{V_y}$, $\bri(w)-d(v,w)<0$ and $\bri(v)-d(v,w)<0$, i.e., $s'_1(v)=0$ and $\bri(v)\leq s'_2(v)$. %
The set of independent broadcasts $\bri$ having $s$ as a signature is then exactly the set of broadcasts which verify one of the following properties: 
\begin{itemize}
    \item $\bri|_{V_y}\in \mathcal{B}^{in}(G_y)$, $\bri(v)=0$ and let $s'=(s'_1,s'_2)=sgn_y(\bri|_{V_y})$, where $(s'_1|_{B_x},s'_2|_{B_x})=s$.
    \item $\bri|_{V_y}\in \mathcal{B}^{in}(G_y)$, $\bri(v)>0$, and let $s'=(s'_1,s'_2)=sgn_y(\bri|_{V_y})$, where $s'_1(v)=0$ and $s'_2(v)\geq \bri(v)$.
\end{itemize}

For all signatures $s=(s_1,s_2)\in A_x$, we can deduce the following recursive formula:
\[
    val_x(s)=\max 
    \begin{cases}
        val_y((s'_1,s'_2)) & \mid s=(s'_1|_{B_x},s'_2|_{B_x}) \\
        val_y((s'_1,s'_2)) + l & \mid s'_1(v)=0,\ 1\leq l\leq s'_2(v) \text{ and for all } u\in B_x, \\
        & ~~s_1(u)=\max(s'_1(u), l- d(u,v)+1)\\ & ~~\text{ and }s_2(u)=\min(s'_2(u), d(u,v)-1).
    \end{cases}
\]

Regarding the complexity, assuming $val_y$ has already been computed, each \val{} of $val_y$ is used in the calculation of at most $p+1$ different \vals{} of $val_x$. %
Therefore, the complexity of this step is $O(p \cdot |A_y|)=O(p \cdot (p+1)^{2\tw+2})$.

\subparagraph*{Join Node} Suppose $x$ is a join node, and that its two children are $y$ and $z$. 
Let $\bri\in \mathcal{B}(G_x)$, $s'=(s'_1,s'_2)=sgn_{y}(\bri|_{V_{y}})$, $s''=(s''_1,s''_2)=sgn_{z}(\bri|_{V_{z}})$ and $s=(s_1,s_2)=sgn_x(\bri)$. As $V_x=V_{y}\biguplus V_{z}$, for all $v\in B_x$,
\begin{align*}
 s_1(v) & \myeq \max\bigl(0, \max_{\substack{u\in \ABI{V_x}}} (\bri(u) - d_G(v,u)+1)\bigr)\\
     &= \max\Bigl( 
            \max\bigl(0, \max_{\substack{u\in \ABI{V_{y}}}} (\bri(u)-d_G(v,u)+1)\bigr),
            \max\bigl(0, \max_{\substack{u\in \ABI{V_{z}}}} (\bri(u)-d_G(v,u)+1)\bigr)\Bigr) \\
    &= \max\bigl(s'_1(v),s''_1(v)\bigr)
\end{align*}

and:
\begin{align*}
 s_2(v) & \myeq \min\bigl(p, \min_{\substack{u\in \ABI{V_x}}} (d_G(v,u)-1)\bigr)\\
     &= \min\Bigl( 
            \min\bigl(p, \min_{\substack{u\in \ABI{V_{y}}}} (d_G(v,u)-1)\bigr),
            \min\bigl(p, \min_{\substack{u\in \ABI{V_{z}}}} (d_G(v,u)-1)\bigr)\Bigr) \\
    &= \min\bigl(s'_2(v),s''_2(v)\bigr).
\end{align*}

Furthermore, $\bri\in \mathcal{B}^{in}(G_x)$ if and only if $\bri|_{V_{y}}\in \mathcal{B}^{in}(G_{y})$, $\bri|_{V_{z}}\in \mathcal{B}^{in}(G_{z})$, 
and for all $v_1\in \ABI{V_{y}}$, $v_2\in \ABI{V_{z}}$, $\bri(v_1)<d_G(v_1,v_2)$ and $\bri(v_2)<d_G(v_1,v_2)$. 
By property of a tree decomposition, every path between $V_y$ and $V_z$ must go through the separator $B_x$. Hence, 
\begin{align} 
    & \forall v_1\in \ABI{V_{y}}, \forall v_2\in \ABI{V_{z}}, 
    \begin{cases}
        \bri(v_1)<d_G(v_1,v_2)\ \text{and}\\  
        \bri(v_2)<d_G(v_1,v_2)
    \end{cases}\\
    \iff 
    & \forall v_1\in \ABI{V_{y}}, \forall v_2\in \ABI{V_{z}}, 
    \begin{cases}
        \bri(v_1)<\min\limits_{v\in B_x}(d_G(v_1,v)+d_G(v,v_2))\ \text{and}\\ \bri(v_2)<\min\limits_{v\in B_x}(d_G(v_1,v)+d_G(v,v_2))\\
    \end{cases} \\
    \iff 
    & 
    \begin{cases}
        \max\limits_{v\in B_x} \bigl( \max\limits_{v_1\in \ABI{V_{y}}}(\bri(v_1)
        -d_G(v_1,v)+1)-\min\limits_{v_2\in \ABI{V_{z}}}(d_G(v_2,v)-1)\bigr)-1\leq 0\ \text{and} \\
        \max\limits_{v\in B_x} \bigl( \max\limits_{v_2\in \ABI{V_{z}}}(\bri(v_2)
        -d_G(v_2,v)+1)-\min\limits_{v_1\in \ABI{V_{y}}}(d_G(v_1,v)-1)\bigr)-1\leq 0 
    \end{cases} \label{eq:lasteq}
\end{align}

In \cref{eq:lasteq} the terms contained in the internal $\max$(s) and in the $\min$(s) are very similar to those of the signatures $s' = (s'_1, s'_2)$ and $s'' = (s''_1, s''_2)$. 
If the maximum (resp. minimum) is equal to $0$ (resp. $p$) then the equivalence still holds, which leads to the following: 
\[
    \cref{eq:lasteq} \iff \forall v\in B_x, s'_1(v)\leq s''_2(v)+1\ \text{and}\ s''_1(v)\leq s'_2(v)+1.
\]

We can deduce that the set of independent broadcasts $\bri$ having $s=(s_1,s_2)$ as a signature is exactly the set of broadcasts which verify the following:
\begin{itemize}
    \item $\bri|_{V_y}\in \mathcal{B}^{in}_y$ and $\bri|_{V_z}\in \mathcal{B}^{in}_z$
    \item let $(s'_1,s'_2)=sgn_y(\bri|_{V_y})$ and $(s''_1,s''_2)=sgn_z(\bri|_{V_z})$. Then for all $v\in B_x$:
    \[
        \begin{cases}
            s_1(v)=\max(s'_1(v),s''_1(v))\\ 
            s_2(v)=\min(s'_2(v),s''(v))\\ 
            s'_1(v)\leq s''_2(v)+1\ \text{and}\ s''_1(v) \leq s'_2(v)+1.
        \end{cases}
    \]
\end{itemize}

For all signatures $(s_1,s_2)\in A_x$, we can deduce the following recursive formula:
\begin{align*}
val_x((s_1,s_2))  = \max\Big(val_y\big((s'_1,s'_2))+val_z((s''_1,s''_2)\big)&\mid
\forall v\in B_x, \\
 s_1(v)&=\max(s'_1(v),s''_1(v)), \\
 s_2(v)&=\min(s'_2(v),s''_2(v)), \\
 s'_1(v) &\leq s''_2(v)+1,\\ 
 s''_1(v) &\leq s'_2(v)+1\Big).
\end{align*}

Regarding the complexity, assuming $val_y$ and $val_z$ have already been computed, each pair of \vals{} of $val_y$ and $val_z$ is used in the calculation of at most one \val{} of $val_x$. Therefore, we can compute it in time $O(|A_y| \cdot |A_z|)=O((p+1)^{4\tw +4})$.

\medskip

This completes the description and the proof of correctness of our algorithm. %

To conclude, since our given nice tree decomposition is of linear size and 
each step can be done in time $O((p+1)^{4\tw +4})$, this algorithm has complexity $O((p+1)^{4\tw +4})\cdot n$.
\end{proofE}

We now turn our attention to proving \cref{thm:bptw} which we recall here for the sake of readability. 

\ThmBPtw*

We once more prove that such a result holds for the weighted variant of \PBP. 
Our algorithm builds again upon a dynamic programming table computed on a nice tree decomposition of $G$ (see \cref{def:td,def:ntd}).

\mainresult{theorem}{thm:WPBPtw}
{
    The \WPBP{} problem can be solved in time $O^*((2p+2)^{O(\tw)})$ 
    on graphs of treewidth $\tw$ given a nice tree-decomposition of $G$. 
}
{progdyn}

\begin{proofE}
Recall that the \emph{natural extension} of any broadcast $\brp_x$ of ${G_x}$ is the broadcast $\brp$ defined by: 
\[
    \brp(v) = 
    \begin{cases} 
        \brp_x(v) & \text{if } v\in V_x,\\ 
        0       & \text{otherwise.}
    \end{cases}
\]

For the sake of simplicity, we do not distinguish between a broadcast of an intermediate graph $G_x$ and its natural extension to the final graph $G$. 
In the following proof, a broadcast of $G_x$, for some node $x \in V(T)$, is thus said to be a \emph{packing} if its natural extension is a broadcast packing in graph $G$. We define $\mathcal{B}(G_x)$ 
as the set of all broadcasts of $G_x$ and $\mathcal{B}^{pa}(G_x)$  
the set of all broadcast packings of $G_x$. 

\medskip

Given a node $x \in V(T)$, a \emph{signature} is a function $s : B_x \to %
\{-p-1, \ldots, p\}$.
We call $A_x$ the set of all signatures with respect to $x$. 
Note that there are for each bag at most $(2p+2)^{|B_x|} \leq (2p+2)^{tw+1}$ different signatures. %
We use such signatures on every node $x$ of $V(T)$ to provide compact representations of how a partial broadcast over $G_x$ affects the vertices of $B_x$. 

\begin{definition}[Signature of a broadcast]
\label{def:signpacking}
Let $x\in V(T)$ and $\brp\in \mathcal{B}(G_x)$. Recall that $\ABP{X}$ denotes the set of broadcasting vertices in subset of vertices $X$. Let $s\in A_x$ be defined by, for all $v\in B_x$:
\[
    s(v)=\max \Bigl(-p-1,\ \max_{\substack{u\in \ABP{V_x}}} \bigl( \brp(u) - d_G(v,u) \bigr)\Bigr).
\]
We say that $s$ is the \textit{signature of $\brp$ with respect to $x$} and define the function $sgn_x: \mathcal{B}(G_x)\rightarrow A_x$ that assigns to each broadcast its signature with respect to $x$. 
\end{definition}

Intuitively, a signature encodes, for every vertex $v\in B_x$, the distance from the nearest broadcasting neighborhood of vertices in $G_x$.
In particular, a strictly negative \val{} $s(v)=-d$ means that $v$ is in no current broadcasting vertices' broadcasting neighborhood and at distance $d$ from the nearest broadcasting neighborhood of $G_x$, and $-p-1$ is a convenient sentinel standing for \emph{no broadcasting neighborhood within distance less than or equal to $p$} in $G_x$. %
Conversely, a positive or zero \val{} $s(v)$ indicates that $v$ is
in the broadcasting range of the strongest broadcasting vertex $u$ (and then, $s(v) = s(u) - d(v, u)$). 

\medskip

Let us now describe the Dynamic Programming table. 
We associate a \val{} to each signature $s$ of a bag $x$ corresponding to the maximum \val{} of a broadcast packing with signature $s$, if it exists. 
Otherwise we set its \val{} to $-\infty$. 
In the following, we aim to compute in a bottom-up fashion the \val{} of the only signature associated to $r$, the root of the tree decomposition. 
Formally, 
for each $x \in V(T)$, we define a function $val_x:A_x\rightarrow \mathbb{N}$ as follows: 
\begin{align*}
    val_x\colon A_x & \to \mathbb{N} \\ 
   s & \mapsto \max\bigl\{\brp(V_x) \mid \brp \in \mathcal{B}^{pa}(G_x),\  sgn_x(\brp)=s \bigr\}.
\end{align*}

If $sgn_x^{-1}(s)\cap \mathcal{B}^{pa}(G_x)=\emptyset$, we set $val_x(s)=-\infty$.

\medskip

During the algorithm, an \emph{introduce node} does not decide whether the introduced vertex broadcasts or not, the decision will be taken at the corresponding \emph{forget node}. 

\medskip

Let $r$ be the root of $T$. 
By definition of a nice tree decomposition, 
$G_r=G$, and so, $\mathcal{B}^{pa}(G_r)$ is exactly the set of all the broadcast packings of $G$. 
Also, as $B_r=\emptyset$ there is only one signature $s \in A_r$. 
Therefore, $\mathcal{B}^{pa}(G_r)\subseteq sgn_r^{-1}(s)$ and $val_r(s)$ is the \val{} of the best broadcast packing of $G$. Hence, if we have an efficient way to compute the \val{} of $val_r$, then we can efficiently compute the solution to our problem. 
We now explain how to efficiently compute the \vals{} of $val_x$ for each type of node $x$. 
Take $x\in V(T)$.

\subparagraph*{Leaf node} 
If $x$ is a leaf node, then $B_x=\emptyset$ and $G_x$ is the empty graph. 
There is a single signature $s\in A_x$, and only one broadcast packing $\brp\in \mathcal{B}^{pa}(G_x)$: the empty broadcast (of \val{} 0). 
Hence $sgn_x(\brp)=s$ and $val_x(s)=0$. 
Regarding the complexity, $val_x$ can be computed in constant time.

\subparagraph*{Introduce node} 
Suppose $x$ is an introduce node, and that its unique child is $y$. 
Let $v$ be the vertex of $V$ introduced such that $B_x=B_y\cup \{v\}$. 
Take a broadcast $\brp\in \mathcal{B}(G_x)$ and $s=sgn_x(\brp)$. 
As $B_x \supseteq B_y$, we have $G_x=G_y$, 
so $\mathcal{B}(G_x)=\mathcal{B}(G_y)$ and $\mathcal{B}^{pa}(G_x)=\mathcal{B}^{pa}(G_y)$. This means that $\brp\in \mathcal{B}(G_y)$. Let $s'=sgn_y(\brp)$. As $B_x=B_y\cup \{v\}$, $s'=s|_{B_y}$. 
Furthermore, by definition of a tree decomposition, any path from 
 a vertex of $G_x$ to $v$ must go through the separator $B_y$.
Therefore: 
\begin{align*}
    s(v) & =  \max\bigl(-p-1,\max_{\substack{u\in \ABP{V_x}}}(\brp(u)-d_G(v,u))\bigr) \\ 
    & =  \max\Bigl(-p-1,\max_{\substack{u\in \ABP{V_x}}}\bigl(\max_{w\in B_y}\brp(u)-d_G(v,w)-d_G(w,u)\bigr)\Bigr) \\
    & =  \max\bigl(-p-1, \max_{w\in B_y}(s(w)-d_G(w,v))\bigr).
\end{align*} 

Let $s$ be a signature of $x$. If $s(v) = \max\bigl(-p-1, \max_{u\in B_y}(s|_{B_y}(u)-d(u,v))\bigr)$, the set of broadcast packings having $s$ as signature with respect to $x$ is exactly the set of broadcast packings having $s|_{B_y}$ as a signature with respect to $y$. Formally, $sgn_x^{-1}(s)\cap\mathcal{B}^{pa}(G_x)=sgn_y^{-1}(s|_{B_y})\cap\mathcal{B}^{pa}(G_y)$. Otherwise, no broadcast packing has $s$ as a signature.
For all signatures $s$ of $x$, we can deduce the following recursive formula:
\[val_x(s)= \begin{cases}
            val_y(s|_{B_y})& \text{if}\ s(v)=\max(-p-1, \max\limits_{u\in B_y}(s(u)-d_G(u,v))), \\
            -\infty & \text{otherwise}.
            \end{cases}\]

Regarding the complexity, assuming $val_y$ has already been computed, each \val{} of $val_x$ can be computed in time $O(|B_y|)=O(\tw)$, therefore $val_x$ can be computed in time $O(\tw \cdot |A_x|)= O(\tw \cdot (2p+2)^{\tw+1})$.
\medskip

\subparagraph*{Forget node} Suppose $x$ is a forget node and that its unique child is $y$. Let $v$ be a vertex of $V$ such that $B_y=B_x\cup \{v\}$. Take a broadcast $\brp\in \mathcal{B}(G_x)$, let $s=sgn_x(\brp)$, and $s'=sgn_y(\brp|_{V_y})$. If $\brp(v)=0$, then for all $u \in B_x$, $s(u)=s'(u)$.
Otherwise, for all $u \in B_x$, it holds that:
\begin{align*}
    s(u)&=\max\Bigl(-p-1,\max_{\substack{w\in \ABP{V_x}}}(\brp(w)-d_G(u,w))\Bigr) \\
    &=\max\Bigl(-p-1, \max_{\substack{w \in \ABP{V_y}}} (\brp(w)-d_G(u,w)),\brp(v)-d_G(u,v)\Bigr)\\
    &=\max\Bigl(s'(u),\brp(v)-d_G(u,v)\Bigr).
\end{align*}

Furthermore, $\brp\in \mathcal{B}^{pa}(G_x)$ if and only if $\brp|_{V_y}\in \mathcal{B}^{pa}(G_y)$ and either $\brp(v)=0$ or for all $w \in \ABP{V_y}$, $\brp(v) + \brp(w)-d(v,w) < 0$, i.e., $\brp(v)+s'(v)<0$. 
We can deduce that the set of broadcast packings $\brp$ having $s$ as a signature is exactly the set of broadcasts which verify one of the following properties:

\begin{itemize}
    \item $\brp|_{V_y}\in \mathcal{B}^{pa}(G_y)$, $\brp(v)=0$ and let $sgn_y(\brp|_{V_y})=s'$, where $s'|_{B_x}=s$.
    \item $\brp|_{V_y}\in \mathcal{B}^{pa}(G_y)$, $\brp(v)>0$, and let $sgn_y(\brp|_{V_y})=s'$, where $s'(v)+\brp(v)<0$ and for all $u \in B_x$, $s(u) = \max(s'(u), \brp(v) - d_G(v,u)$).
\end{itemize}
For all signatures $s\in A_x$, 
we can deduce the following recursive formula: 
\[val_x(s)=\max \begin{cases}
 val_y(s') & \mid s'|_{B_x}=s\\
 val_y(s') + l & \mid 1\leq l\leq p,\ s'(v)+l<0 \text { and for all } u \in B_x,\\
        & ~~s(u) = \max(s'(u), l - d(v,u)). 
\end{cases}\]

Regarding the complexity, assuming $val_y$ has already been computed, each \val{} of $val_y$ is used in the calculation of at most $p+1$ different \vals{} of $val_x$. Therefore, the complexity of this step is $O(p \cdot |A_y|)=O(p \cdot (2p+2)^{\tw+1})$.  

\subparagraph*{Join node} Suppose $x$ is a join node, and that its two children are $y$ and $z$. %
Let $\brp\in \mathcal{B}(G_x)$, $s'=sgn_{y}(\brp|_{V_{y}})$, $s''=sgn_{z}(\brp|_{V_{z}})$ and $s=sgn_x(\brp)$. As $V_x=V_{y}\biguplus V_{z}$ for all $v\in B_x$, %
\begin{align*}
 s(v) &= \max\bigl(-p-1, \max_{\substack{u\in \ABP{V_x}}} (\brp(u) - d_G(v,u))\bigr)\\
     &= \max\Bigl( 
            \max\bigl(-p-1, \max_{\substack{u\in \ABP{V_{y}}}} (\brp(u)-d_G(v,u))\bigr),
            \max\bigl(-p-1, \max_{\substack{u\in \ABP{V_{z}}}} (\brp(u)-d_G(v,u))\bigr)\Bigr) \\
    &= \max\bigl(s'(v),s''(v)\bigr).
\end{align*}

Furthermore, $\brp\in \mathcal{B}^{pa}(G_x)$ if and only if $\brp|_{V_{y}}\in \mathcal{B}^{pa}(G_{y})$, $\brp|_{V_{z}}\in \mathcal{B}^{pa}(G_{z})$, and for all $v_1\in \ABP{V_{y}}$, $v_2\in \ABP{V_{z}}$, $\brp(v_1)+\brp(v_2)-d(v_1,v_2)<0$. 
By property of a tree decomposition, every path between $V_{y}$ and $V_{z}$ must go through the separator $B_x$. 
Hence: 
\begin{align*}
    & \forall v_1\in \ABP{V_{y}}, \forall v_2\in \ABP{V_{z}}, \brp(v_1)+\brp(v_2)-d_G(v_1,v_2)<0 \\
    \iff 
    & \forall v_1\in \ABP{V_{y}}, \forall v_2\in \ABP{V_{z}}, \brp(v_1)+\brp(v_2)- \min_{v\in B_x}(d_G(v_1,v)+d_G(v,v_2))<0 \\
    \iff 
    & \max_{v\in B_x} \bigl( \max_{\substack{v_1\in \ABP{V_{y}}}}(\brp(v_1)
    -d_G(v_1,v))+ \max_{\substack{v_2\in \ABP{V_{z}}}}(\brp(v_2)-d_G(v_2,v)) \bigr)<0 \\
    \iff 
    & \forall v\in B_x, s'(v)+s''(v)<0.
\end{align*}

We can deduce that the set of broadcast packings $\brp$ having $s$ as a signature is exactly the set of broadcasts which verify the following:

\begin{itemize}
    \item $\brp|_{V_{y}}\in \mathcal{B}^{pa}(G_{y})$ and $\brp|_{V_{z}}\in \mathcal{B}^{pa}(G_{z})$.
    \item Let $s'=sgn_{y}(\brp|_{V_{y}})$ and $s''=sgn_{z}(\brp|_{V_{z}})$, for all $v\in B_x$, $s(v)=\max(s'(v),s''(v))$ and $s'(v)+s''(v)<0$.
\end{itemize}

For all signatures $s\in A_x$, we can deduce the following recursive formula:
\[
val_x(s) = 
\max \Bigl( val_{y}(s')+val_{z}(s'') \; 
    \mid \; 
\forall v\in B_x, \;
s(v) =\max(s'(v), s''(v)),
 \ \text{and}\ 
s'(v)+s''(v)<0 \Bigr).
 \]

Regarding the complexity, assuming $val_{y}$ and $val_{z}$ have already been computed, each pair of \vals{} of $val_{y}$ and $val_{z}$ is used in the calculation of at most one \val{} of $val_x$. Therefore, we can compute it in time $O(|A_{y}| \cdot |A_{z}|)=O((2p+2)^{2\tw+2})$.

\medskip
This completes the description and the proof of correctness of our algorithm. %
\end{proofE}

We have the following corollaries from \cref{thm:bptw,thm:bitw}. 

\apxresult{corollary}{cor:bi_fpt_vc_td}
{
    \BI{} and \BP{} can be solved in time %
    $O^*(2^{O(\td^2)})$ %
    on graphs of treedepth $\td$ given a treedepth decomposition of $G$, and in time $O^*(\vc^{O(\vc)})$ on graphs of vertex cover $\vc$ given a vertex cover of $G$.
}
{progdyn}

\begin{proofE}
    Given a graph $G$ of treewidth $\tw$, treedepth $\td$ and vertex cover $\vc$, we have $\tw \leq  \td \leq \vc+1$ 
    \cite{DBLP:books/sp/CyganFKLMPPS15,DBLP:books/daglib/0030491,Hierarchy}. 
    Furthermore, $\diam(G)\leq 2^{\td}$ \cite{DBLP:books/daglib/0030491}, and $\diam(G)\leq 2 \cdot \vc$. %
    Finally, given a treedepth decomposition of depth $\td$ or a vertex cover of size $\vc$, one can compute a tree decomposition of $G$ of width $\td$ or $\vc$ in polynomial time \cite{DBLP:books/daglib/0030491}. Therefore, we can apply \cref{thm:bitw} and \cref{thm:bptw} and see that $O^*((2^{td})^{O(td)}) = O^*(2^{O(td^2)})$ to obtain the aforementioned results.
\end{proofE}

\apxresult{corollary}{cor:BIBPtwkfpt}
{
    \BI{} and \BP{} are FPT for parameter $k+\tw(G)$.
}
{progdyn}

\begin{proofE}
    We use a standard win/win approach depending on the diameter of $G$. 
    
    \subparagraph*{Case 1: $\diam(G) > k$}  
    Consider two vertices $u,v \in G$ at distance $\diam(G)$, thus at distance at least $k$. 
    Broadcasting at \val{} $k$ from either $u$ or $v$ produces a valid (trivial) independent or broadcast packing, in polynomial time.
    
    \subparagraph*{Case 2: $\diam(G) \le k$} 
    Here, the diameter is bounded by $k$, so we can apply the FPT algorithm of \cref{thm:bptw,thm:bitw}, which are FPT for parameter $\diam(G) + \tw(G)$. Since $\diam(G) + \tw(G) \le k + \tw(G)$ the algorithm is also FPT for parameter $k + \tw(G)$. 
\end{proofE}

\section{W[1]-hardness reductions}
\label{sec:hardness}

\subsection{Proof of \cref{thm:BIFVSPWhard}}

Our reduction for \cref{thm:BIFVSPWhard} will be done through the \WBI{} problem, which we will prove hard using a reduction from the $k$-\textsc{multicolored clique}~\cite{DBLP:books/sp/CyganFKLMPPS15,DBLP:journals/jcss/Pietrzak03}. 

\paragraph*{Construction for \WBI}
Let $G_1=(I_1\cup I_2,\dots \cup I_k,E)$ be an instance of $k$-\textsc{multicolored clique}, in which %
each set $I_i$ induces an independent set of size $n$ for $i \in \bra{k}$ (we will assume without loss of generality that $n$ is even as we can always add a dummy vertex to each $I_i$). %
Recall that in such a problem one seeks a clique of size $k$ containing one vertex per set $I_i$, for $i \in \bra{k}$. 
For all $i,j\in \bra{k}$, $i<j$, we denote $E_{ij}:=E\cap \{ \{ u, v \} \mid u \in I_i, v \in I_j \}$, and $I_i:=\{v_i^1,\dots,v_i^n\}$. %

Inspired by the reduction by \authorcite{DBLP:journals/dam/KatsikarelisLP22} for the $p$-\textsc{Scattered Set} problem, %
we will construct a reduction with three types of gadgets: \emph{independent set} gadgets, that will simulate independent sets and allow to differentiate between the different vertices via small differences of the weights of the edges connecting them; 
\emph{edges gadgets}, that will simulate the edges between different vertices %
, and finally a \emph{forbid gadget}, whose sole objective is to forbid some vertices to be broadcasters in an optimal solution. Furthermore, we will make sure that all edge weights are even, which will simplify the reduction for the proof of \cref{thm:BIFVSPWhard}. %
Let us define a graph $G_2 = (V_2, E_2)$ as follows (see also \cref{fig:reduction bi weighted}). 

\begin{figure}
\centering
\begin{tikzpicture}[>=latex,scale=1.1,every node/.style={font=\small}]

\tikzstyle{bag}=[draw, blue!60!black, fill=blue!20, rounded corners, inner sep=3pt,fill opacity=0.3]
\tikzstyle{vis}=[circle, draw, inner sep=1.5pt, minimum size=6pt, fill=white, prefix after command= {\pgfextra{\tikzset{every label/.style={blue!60!black}}}}]
\tikzstyle{vforbid}=[circle, draw, inner sep=1.5pt, minimum size=6pt, fill=white, prefix after command= {\pgfextra{\tikzset{every label/.style={red!60!black}}}}]
\tikzstyle{vedge}=[circle, draw, inner sep=1.5pt, minimum size=6pt, fill=white, prefix after command= {\pgfextra{\tikzset{every label/.style={green!60!black}}}}]

\tikzstyle{forbidden}=[fill=red]

\tikzstyle{isedges}=[blue!60!black,every node/.append style={font=\scriptsize}]
\tikzstyle{edgeedges}=[green!60!black,every node/.append style={font=\scriptsize}]

\node[vis,label=right:{$v_1^1$}] (v11) {};
\node[below=2mm of v11] (v1dots) {$\vdots$};
\node[vis,below=2mm of v1dots,label=below right:{$v_1^{m_1}$}] (v1m) {};
\node[below=2mm of v1m] (v1dots2) {$\vdots$};
\node[vis,below=2mm of v1dots2,label=right:{$v_1^n$}] (v1n) {};

\node[bag, fit=(v11)(v1n)] (I1) {};

\node[above=1mm of I1] (I1label) {$I_1$};

\node[vis, forbidden, left=2cm of v1m,label=left:{$l_1$}] (l1) {};
\node[vis, forbidden, right=2cm of v1m,label=right:{$r_1$}] (r1) {};

\draw[isedges] (l1) --node[left]{$2$} (v11);
\draw[isedges] (l1) --node[above]{$2m_1$} (v1m);
\draw[isedges] (l1) --node[right]{$2n$} (v1n);

\draw[isedges] (r1) --node[right]{$2n$} (v11);
\draw[isedges] (r1) --node[above]{$2n+2-2m_1$} (v1m);
\draw[isedges] (r1) --node[left]{$2$} (v1n);

\begin{scope}[xshift=5.5cm]
\node[vis,label=right:{$v_2^1$}] (v21) {};
\node[below=2mm of v21] (v2dots) {$\vdots$};
\node[vis,below=2mm of v2dots,label=below right:{$v_2^{m_2}$}] (v2m) {};
\node[below=2mm of v2m] (v2dots2) {$\vdots$};
\node[vis,below=2mm of v2dots2,label=right:{$v_2^n$}] (v2n) {};

\node[bag, fit=(v21)(v2n)] (I2) {};

\node[above=1mm of I2] () {$I_2$};

\node[vis, forbidden, left=2cm of v2m,label=left:{$l_2$}] (l2) {};
\node[vis, forbidden, right=2cm of v2m,label=right:{$r_2$}] (r2) {};

\draw[isedges] (l2) --node[left]{$2$} (v21);
\draw[isedges] (l2) --node[above]{$2m_2$} (v2m);
\draw[isedges] (l2) --node[right]{$2n$} (v2n);

\draw[isedges] (r2) --node[right]{$2n$} (v21);
\draw[isedges] (r2) --node[above]{$2n+2-2m_2$} (v2m);
\draw[isedges] (r2) --node[left]{$2$} (v2n);

\node[right=1.2cm of r2] (troisptitspoints) {$\ldots$} ;

\end{scope}

\node[vedge, forbidden, below right=4cm of I1, label=below:{$c_{12}$}] (c12) {};
\node[vedge, forbidden, right=of c12, label=below:{$c_{13}$}] (c13) {};
\node[vedge, forbidden, right=of c13, label=below:{$c_{23}$}] (c23) {};

\foreach \dx/\dy in {-0.3/0.3, 0.3/0.3}
    \draw[edgeedges] (c12) -- ++(\dx,\dy);
    
\foreach \dx/\dy in {-0.3/0.3, 0/0.3, 0.3/0.3} {
    \draw[edgeedges] (c13) -- ++(\dx,\dy);
    \draw[edgeedges] (c23) -- ++(\dx,\dy);
}

\node[vedge, above= of c12, label=above right:{$v_e$}] (ve) {};
\draw[edgeedges] (c12) --node[right]{$n$} (ve);

\draw[edgeedges] (ve) edge[bend left] node[right]{$2b+2n+2-2m_1$} (l1);
\draw[edgeedges] (ve) edge[] node[left]{$2b+2m_1$} (r1);
\draw[edgeedges] (ve) edge[] node[right]{$2b+2n+2-2m_2$} (l2);
\draw[edgeedges] (ve) edge[bend right] node[right]{$2b+2m_2$} (r2);

\node[vforbid, left=3.5cm of c12,  label=below:{$s$}] (s) {};
\node[vforbid, minimum size=2pt, label=left:{$f_1$}, above left=9mm of s] (f1) {};
\node[vforbid, minimum size=2pt, label=left:{$f_2$}, below=2mm of f1] (f2) {};
\node[below=0.1mm of f2] (fdots) {\vdots};
\node[vforbid, minimum size=2pt, label=left:{$f_a$}, below=1mm of fdots] (fa) {};

\node[bag, red!60!black, fill=red!20, fit=(s)(f1)(fa)] (F) {};
\node[below=1mm of F] () {$F$};

\draw[red!70!black] (s) --node[above]{$\alpha$} (f1);
\draw[red!70!black] (s) -- (f2);
\draw[red!70!black] (s) -- (fa);

\draw[red!70!black] (s) --node[left]{$\alpha - 2$} (l1);
\draw[red!70!black] (s) --node[below]{$\alpha - 2$} (c12);

\foreach \dx/\dy in {0.3/0.3, 0.25/0.35, 0.35/0.25}
    \draw[red!70!black] (s) -- ++(\dx,\dy);

\node[red!70!black, below=5mm of fa] {Forbid gadget};

\foreach \forbidden in {r1, l2, r2, c13, c23} 
    \draw[red!70!black] (\forbidden) -- ($(\forbidden)!5mm!(s)$);
    
\foreach \dx/\dy in {0.3/-0.3, 0.25/-0.35, 0.35/-0.25} {
    \draw[edgeedges] (l1) -- ++(\dx,\dy);
}

\foreach \dx/\dy in {-0.3/-0.3, -0/-0.35, 0.25/-0.25} {
    \draw[edgeedges] (r1) -- ++(\dx,\dy);
}

\end{tikzpicture}
\caption{Schematic example from a graph with $k=3$ colors. For readability reasons, only the Edge gadget $e = (v_1^{m_1}, v_2^{m_2}$) is depicted and the gadget for $I_3$ is not shown. The \emph{forbid gadget} is shown in red, with $\alpha = b+n+2$. Nodes filled with red are ``forbidden'', i.e. there is an edge from $s$ to them with weight $\alpha-2$. 
The \emph{independent set gadget} of the construction is in blue, while the \emph{edge gadget} 
of the construction is in green. Weights are written on the edges.}
\label{fig:reduction bi weighted}
\end{figure}

\subparagraph*{Independent set gadget} For all $i \in  \bra{k} $, we create a copy of $I_i$ (in which we keep the same vertex names for simplicity) and add two vertices $l_i,r_i$. For all $j\in \bra{n}$, we add an edge $(l_i,v_i^j)$ of weight $2j$ and an edge $(v_i^j,r_i)$ of weight $2n+2-2j$. 
Let $I'_i := I_i \cup \{l_i,r_i\}$. 
We therefore have $|I'_i| = n+2$ for each $i \in \bra{k}$ and the total number of vertices in this gadget is $k \cdot (n+2)$. 

\subparagraph*{Edges gadget} Let $b:=6n^3 \cdot k^2$. For all $i,j\in  \bra{k}$, $i<j$, we add 
a vertex $c_{ij}$.  %
For all $e=\{v_i^{m_i},v_j^{m_j}\}\in E_{ij}$, we create a vertex $v_e$, connect it to $c_{ij}$ with an edge of weight $n$, and add four edges: $(v_e,l_i)$, $(v_e,r_i)$, $(v_e,l_j)$ and $(v_e,r_j)$, of weights $2b+2n+2-2m_i$, $2b+2m_i$, $2b+2n+2-2m_j$ and $2b+2m_j$, respectively. Let $E'_{ij}:=\{v_e \mid e \in  E_{ij}\} \cup \{c_{ij}\}$.
We therefore have $\sum_{i<j}|E'_{ij}| = \sum_{i<j}(1+|E_{ij}|) = {k \choose 2} + |E| \leq {k \choose 2} + {k \choose 2} \cdot n^2$.

\subparagraph*{Forbid gadget} 
Let $a = 13k^2 \cdot n^3 \cdot b$. %
We add $a$ vertices $f_1,\dots f_a$ and a vertex $s$. 
Each $f_x$, with $x\in  \bra{a}$, is linked to $s$ with an edge of weight $\alpha := b+n+2$. 
For all $i,j\in  \bra{k}$ , with $i<j$, we connect $c_{ij}$ to $s$ with an edge of weight $ \alpha -2 $. 
For all $i\in  \bra{k}$, 
we add two edges $(l_i,s)$ and $(r_i,s)$, each of weight $\alpha -2 $. 

\medskip

Note that all weights considered in the construction are even as we assumed $n$ to be even. %
Finally, we ask whether $G_2$ admits an independent broadcast of \val{}:  
\[
    M=a \cdot \big(2\alpha -1\big) + \big(k+\tfrac{k(k-1)}{2}\big) \cdot (2b+2n+1).
\]

We first give a bound for the diameter of $G_2$. 

\apxresult{proposition}
{prop:diamG2bi}
{    
    The diameter of $G_2$ is at most $2b + 4n + 2$.
}
{hardness}

\begin{proofE}
    We first consider the eccentricity of vertex $s$. Note that its shortest distance toward all vertices except 
    $v_e \bigcup \cup_{i \in \bra{n}} I_i$ is $\alpha - 2$. By construction and since $b > n$, we have $d_{G_2}(s,v_e) = \alpha - 2 + n$. 
    Recall that $b = 5k^2n^3$%
    and that $n \geqslant 2$ since we assumed $n$ to be even. Finally, any vertex $v_i^j$ 
    from $I_i$, with $i,j \in \bra{n}$ is at distance at most $\alpha - 1 + n$ from $s$. It follows that: 
    \[
        \ecc(s) \leqslant \alpha -2 + n = b+2n.
    \]
    
    Observe that the eccentricity of any vertex $v \neq s$ is at most $\ecc(s) + d_{G_2}(v,s)$. 
    If $v = c_{ij}$ for $i,j \in \bra{k}$, $i<j$ it holds that $d_{G_2}(v,s) = \alpha-2$ and hence we have:
    \[
        \ecc(c_{ij}) \leqslant b + 2n + \alpha - 2 = 2b + 3n.
    \]
    If $v = l_i$ or $v = r_i$ for $i \in \bra{k}$ then a similar result holds since there is an edge of weight $\alpha-2$ 
    between $s$ and any such vertex. 
    If $v = v_e$, then $d_{G_2}(s,v_e) = \alpha - 2 + n$ and hence:
    \[
        \ecc(v_e) \leqslant 2 \cdot (b + 2n) = 2b + 4n.
    \]
    If $v = v_i^j$ for $i,j \in \bra{k}$, $i \neq j$, then $d(v,s)$ is at most $\alpha - 2 + n$ from $s$ and hence its eccentricity is at most $2b+4n$. 
    Finally, if $v = f_x$ for $x \in [a]$ then its eccentricity is at most $b + 2n + \alpha = 2b +4n +2$. 
    This concludes the proof of \cref{prop:diamG2bi}. 
\end{proofE}

We now prove the correctness of the reduction through the following results.

\apxresult{lemma}
{le:mcliquetoib}
{
    The graph $G_1$ has a multicolored clique $\{v_1^{m_1},\dots,v_k^{m_k}\}$ if and only if $G_2$ has an independent broadcast $\bri$ of \val{} at least $M$.
}
{hardness}

\begin{proofE}

	We first consider the forward direction. 
	Let $\bri$ be a broadcast such that for all $x\in  \bra{a}$, $\bri(f_x)=2\alpha -1=2b+2n+1$, for all $i\in  \bra{k}$, $\bri(v_i^{m_i})=2b+2n+1$, 
	for all $i,j\in \bra{k}$, $i\neq j$, %
	$\bri(v_e) = 2b+2n+1$ where $e = (v_i^{m_i},v_j^{m_j})$ is an edge of the multicolored clique $\{v_1^{m_1},\dots,v_k^{m_k}\}$,  
	and all other vertices are not broadcasting. 
	By construction, $\bri$ is independent and has \val{} exactly $M$.
	Indeed, it is not hard to see that all the distances between the broadcasters are greater or equal to $2b+2n+2$; furthermore $\bri(V_2)=a \cdot (2\alpha-1) + k \cdot (2b+2n+1) + \frac{k(k-1)}{2} \cdot (2b+2n+1)=M$.
	
	\medskip
	
    We now turn our attention to the reverse direction. 
    We proceed with a series of observations. All distances used in the following proof are taken in $G_2$. %

    \begin{claim}
	In any solution, at least one vertex $f_x$ of the forbid gadget is 
	broadcasting with \val{} at least $2\alpha -1$. 
	\end{claim}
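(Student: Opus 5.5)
The claim should follow from a counting argument. Assume, for contradiction, that an independent broadcast $\bri$ of \val{} at least $M$ has no $f_x$ broadcasting at \val{} $2\alpha-1$ or more. We will bound the total \val{} of $\bri$ strictly below $M$. The leading term of $M$ is $a(2\alpha-1)$, and $a = 13k^2n^3 b$ is enormous compared with the number of non-$f$ vertices times the diameter. So the plan is to show that the $f$-vertices can then contribute at most roughly $a\alpha$ in total. Everything else contributes at most $|V_2\setminus F|\cdot\diam(G_2)$, which is polynomially small relative to $a b$.

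The first step is to bound the contribution of the vertices $f_1,\dots,f_a$. Any two of them are at distance exactly $2\alpha$, since the only path between them goes through $s$. If some $f_x$ broadcasts with \val{} $\bri(f_x)\ge 2\alpha$, it hears every other $f_y$ and also $s$. Independence then forces all other $f_y$ and $s$ to be non-broadcasting. In that case the $f$-contribution is at most $\diam(G_2)\le 2b+4n+2$ by \cref{prop:diamG2bi}, which is negligible. So either some $f_x$ has \val{} at least $2\alpha-1$ (and we are done), or every $f_x$ broadcasts with \val{} at most $2\alpha-2$. In the latter case the $f$-vertices contribute at most $a(2\alpha-2)$, a loss of at least $a$ relative to $a(2\alpha-1)$.

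The second step is to bound everything else. The vertices outside $F$ are $s$, the $k(n+2)$ independent-set gadget vertices, and at most $\binom{k}{2}(n^2+1)$ edge-gadget vertices. Their number is at most about $k^2n^2$, and each broadcasts with \val{} at most $\diam(G_2)\le 2b+4n+2$ by \cref{lem:bounddiam}. Their total is therefore $O(k^2n^2 b)$, which is far below the deficit. Concretely, the total is at most
\[
a(2\alpha-2) + (1+k(n+2)+\tbinom{k}{2}(n^2+1))(2b+4n+2),
\]
and we need this to be less than $a(2\alpha-1) = M - (k+\binom{k}{2})(2b+2n+1)$. Rearranged, we need $a$ to exceed a quantity of order $k^2n^2 b$. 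This holds since $a=13k^2n^3b$. The case where a single $f_x$ broadcasts at $\ge 2\alpha$ is handled the same way: the total is then at most $(|V_2\setminus F|+1)\diam(G_2)$, far below $M$.

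The main obstacle is the bookkeeping in this last inequality, in particular checking that $a$ dominates the non-forbid vertex count times the diameter with room to spare. That margin is exactly what the choice of $a$ is designed to provide, so I expect the step to be routine once the counts are written explicitly. The case distinction in the first step, $\ge 2\alpha$ versus $\le 2\alpha-2$, is where independence is really used. In that step it will matter that $d(f_x,f_y)=2\alpha$ exactly.
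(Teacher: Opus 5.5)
Your proof is correct and follows the paper's argument: assume every $f_x$ has value at most $2\alpha-2$, bound each remaining vertex by $\diam(G_2)\le 2b+4n+2$, and use that $a$ dominates the number of remaining vertices times the diameter (you also count $s$, which the paper's displayed bound harmlessly leaves out). The separate case $\bri(f_x)\ge 2\alpha$ and the appeal to independence are unnecessary here, since such an $f_x$ already satisfies the claim; that reasoning only matters for the paper's next step showing that at least two $f_x$ broadcast.
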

	\begin{claimproof}
	Suppose for a contradiction that this is not the case, i.e. that $\bri(f_x)\leq 2\alpha -2$ for all $x\in  \bra{a}$.
	By \cref{prop:diamG2bi} the diameter of $G_2$ is smaller than $2b+4n+2$; 
	and the total number of vertices that are not in the forbid gadget is at most  $k \cdot(n+2+\frac{(k-1) \cdot (n^2+1)}{2})$. 
	Therefore, the total broadcast \val{} would be at most: %
	\begin{align*}
	\bri(G_2) &\leq 
	\max\limits_{x \in  \bra{a}} \bigl ( a \cdot (\bri(f_x)) + (|V_2|-a) \cdot \diam(G_2) \bigr ) \\
	&\leq 
	a \cdot \big(2\alpha -2\big) + k\big(n+2+\tfrac{(k-1)(n^2+1)}{2}\big) \cdot (2b+4n+2) \\ &
	< M,
	\end{align*}
    
	since $a > 12k^2\cdot n^3\cdot b$, which contradicts the assumption. 
	Hence, some $f_x$ has \val{} at least $2\alpha -1$ in $\bri$.
	
	\medskip
	
	Furthermore, we show that there are at least two broadcasting vertices $f_{x_1},f_{x_2}$ among the forbid gadget. 
	Indeed, if there was only one, its broadcast \val{} would be bounded by the graph diameter and 
	therefore the overall broadcast \val{} would be at most $2b+4n+2+k(n+2+\frac{(k-1)(n^2+1)}{2}) \cdot (2b+4n+2) < a < M$. %
	Therefore, there exist distinct $x_1,x_2$ such that $f_{x_1},f_{x_2}$ are broadcasting. 
	It follows that for every $x \ne x_1$, 
	\[ \bri(f_x)<d(x,x_1)= 2\alpha, \] 
	and similarly:
	\[ \bri(f_{x_1})< d(x_1,x_2)=2\alpha.\] 
	
	Thus, $\sum_{x\in \bra{a}} \bri(f_x)\leq a \cdot (2\alpha-1)$. Let us set $F:=\{f_1,\dots,f_a,s\}$ and 
	choose $x_1\in \bra{a}$ such that $\bri(f_{x_1})=2\alpha -1$. 
	Since $d(s,f_{x_1})=\alpha< \bri(f_{x_1})$, vertex $s$ cannot be broadcasting. 
	Consequently, $\bri(F)\leq a \cdot \big(2\alpha-1\big)$.
	\end{claimproof}
	
    \begin{claim}
        Every set $I'_i$ and $E'_{ij}$ contains exactly one broadcaster.
    \end{claim}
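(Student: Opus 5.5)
The claim splits into an upper bound and a lower bound. First I would show each set $I'_i$ and each $E'_{ij}$ contains \emph{at most} one broadcaster whose value exceeds a small threshold. Then a counting argument against $M$ forces every set to contain one broadcaster of value close to $2b+2n+1$. From this, exactly one broadcaster per set follows.

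\emph{Step 1: the forbidden vertices.} By the previous claim, some $f_{x_1}$ broadcasts with value $2\alpha-1$. Every vertex $l_i,r_i,c_{ij}$ is at distance $2\alpha-2<2\alpha-1$ from $f_{x_1}$, so it hears $f_{x_1}$ and cannot broadcast. A vertex $v_i^j$ or $v_e$ is at distance at most $\alpha-2+2n$ from $s$, hence at distance less than $2\alpha+2n$ from $f_{x_1}$. So such a vertex may broadcast, but only with value below its distance to $f_{x_1}$; independence forces this. It also means the $f_x$'s contribute at most $a(2\alpha-1)$, so the remaining broadcasters must contribute at least $(k+\binom{k}{2})(2b+2n+1)$.

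\emph{Step 2: at most one large broadcaster per gadget.} Two vertices of the same $I_i$ are at distance at most $2n+2$ via $l_i$ or $r_i$. Two vertices $v_e,v_{e'}$ of the same $E'_{ij}$ are at distance $2n$ via $c_{ij}$. Hence if a gadget contains two broadcasters, the larger has value at most $2n+2$ (respectively $2n$), and both have small value. A crude bound on the total value then shows this loses too much: each gadget can contribute at most about $|I'_i|(2n+2)$ or $|E'_{ij}|\cdot 2n$, far below $2b$ since $b=6n^3k^2$.

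\emph{Step 3: counting.} Any single broadcaster has value at most $\diam(G_2)\le 2b+4n+2$ by \cref{prop:diamG2bi}. There are $k+\binom{k}{2}$ gadgets. Suppose some gadget had no broadcaster, or only small ones (value $O(n^3k^2)$ in total). Then the remaining total is at most $(k+\binom{k}{2}-1)(2b+4n+2)+O(n^3k^2)$. This is less than $(k+\binom{k}{2})(2b+2n+1)$, because the deficit $2b+2n+1$ dominates the surplus $(k+\binom{k}{2})(2n+1)+O(n^3k^2)$; the choice $b=6n^3k^2$ is made exactly for this. Hence every gadget has at least one large broadcaster, and by Step 2 it has exactly one. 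A small extra broadcaster coexisting with a large one is also excluded: the large one would have value greater than $2n+2$ and the small one would hear it.

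\emph{Main obstacle.} The hard part is making the arithmetic in Step 3 rigorous. The slack between $2b+4n+2$ and $2b+2n+1$, summed over all gadgets, plus the total possible value of small broadcasters, must stay below $2b$. Everything is polynomial in $n,k$, so this holds by the choice of $b$, but the bound on the number of vertices in the edge gadgets has to be tracked carefully.
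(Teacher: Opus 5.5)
Your proposal is correct and follows essentially the same route as the paper's proof. Intra-gadget distances of at most $2n+2$ force every broadcaster in a gadget with two or more broadcasters to be small, a lone broadcaster is bounded by roughly $2b+4n$, and a count against $M$ (using $\bri(F)\le a(2\alpha-1)$) rules out any gadget with zero or only small broadcasters. The differences are minor: you exclude $l_i,r_i,c_{ij}$ up front via $f_{x_1}$, which the paper does only in the next claim, and you also note that the unique broadcaster must be large.
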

    \begin{claimproof}
	For each $i \in [k]$, every vertex of $I'_i$ lies within distance at most $2b+4n+2$ from $f_{x_1}$ (\cref{prop:diamG2bi}). %
	Hence, any broadcasting vertex of $I'_i$ has a broadcast \val{} at most $2b+4n+1$. %
	If $I'_i$ contained at least two broadcasting vertices, 
	they would be at distance at most $2n+2$ from each other, implying $\bri(v) \leq 2n+1$ for every $v \in I'_i$  
	and $\bri(I'_i)\leq n \cdot |I'_i|\leq n \cdot (n+2)$. 
	Thus, in all cases, $\bri(I'_i) < 2b +4n+1$ since $b \gg n^2$. 
	A symmetric argument holds for each $E'_{ij}$: if multiple broadcasters existed inside $E'_{ij}$, each would have broadcast \val{} bounded by $2n-1$, giving $\bri(E'_{ij})\leq |E'_{ij}| \cdot (n-1)\leq 2n^3 < 2b +4n+1$.
	
	Therefore, suppose that one edge or independent set gadget contains several or zero broadcasters. In that case and by the previous arguments, 
	the total \val{} of the broadcast would be bounded by:  
	\[
	a \cdot \big(2\alpha-1\big) 
	+ \big(k+\tfrac{k(k-1)}{2}-1\big) \cdot (2b+4n+1)
	+ 2n^3
	< M
	\]
	as $b > 5 \cdot n^3 \cdot k^2$, which is impossible. 
	\end{claimproof}
	
	We now identify which vertices can broadcast among such sets. 
	
	\begin{claim}
	    Let $i \in \bra{k}$. Then the only broadcaster of $I'_i$ belongs to $I_i$. 
	    Moreover, let $i,j \in \bra{k}$, $i < j$. Then the only broadcaster of $E'_{ij}$ is $v_e$ for some edge $e \in E_{ij}$. 
	\end{claim}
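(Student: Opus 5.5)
Since the two previous claims fix the setting, I will argue by contradiction on the position of the broadcaster. Fix $x_1$ with $\bri(f_{x_1}) = 2\alpha-1$; the first claim gives it, and it also forces $s$ to be non-broadcasting. The forbidden vertices are $l_i, r_i$ and $c_{ij}$, each joined to $s$ by an edge of weight $\alpha-2$. Any such vertex $w$ therefore satisfies
\[
d(w,f_{x_1}) \le (\alpha-2)+\alpha = 2\alpha-2 < \bri(f_{x_1}).
\]
So $w$ hears $f_{x_1}$, and independence forbids $w$ from broadcasting. Hence the unique broadcaster of $I'_i$ (from the second claim) cannot be $l_i$ or $r_i$, so it lies in $I_i$. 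Likewise the unique broadcaster of $E'_{ij}$ cannot be $c_{ij}$, so it is some $v_e$ with $e\in E_{ij}$.

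The one point to check is that the bound $\bri(f_{x_1}) = 2\alpha -1$ really is attained, and not just that some $f_x$ has value at least $2\alpha-1$. In the proof of the first claim, $\sum_x \bri(f_x) \le a(2\alpha-1)$ and each $\bri(f_x)\le 2\alpha-1$. If no $f_x$ reached $2\alpha-1$, the forbid gadget would lose at least $a$. The remaining broadcasters could not compensate: their total is at most $(k+\binom{k}{2})(2b+4n+1)$, which is much smaller than $a$ plus the slack in $M$. So, after redoing that counting once, we may pick $x_1$ with $\bri(f_{x_1})=2\alpha-1$.

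Alternatively, it suffices to know that $\bri(f_{x_1})\ge 2\alpha-1$, which the first claim states directly. The hearing argument only needs $d(w,f_{x_1})\le 2\alpha-2 < \bri(f_{x_1})$, so either bound works. I expect this step to be the main obstacle. Everything else is a one-line distance computation through $s$, using that all forbidden vertices are adjacent to $s$ with weight $\alpha-2$ and that $d(s,f_{x_1})=\alpha$.
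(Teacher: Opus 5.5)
Your argument is correct and is the paper's proof: $l_i$, $r_i$ and $c_{ij}$ are each joined to $s$ by an edge of weight $\alpha-2$, so they lie at distance $2\alpha-2 < \sigma(f_{x_1})$ from $f_{x_1}$ and cannot broadcast, and the previous claim then puts the unique broadcaster in $I_i$ or at some $v_e$ with $e \in E_{ij}$. The attainment step you flag as the main obstacle is not one: the first claim already yields some $f_{x_1}$ with $\sigma(f_{x_1}) = 2\alpha-1$, and, as you note yourself, $\sigma(f_{x_1}) \geq 2\alpha-1$ would suffice anyway, so your recounting paragraph can be dropped.
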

    \begin{claimproof}
	Notice first that for all $i \in \bra{k}$, $d(l_i,f_{x_1})= d(r_i,f_{x_1}) = 2\alpha -2 < \bri(f_{x_1})$, %
	so neither $l_i$ nor $r_i$ is a broadcaster. 
	Similarly, $d(c_{ij},f_{x_1})= 2 \alpha -2 < \bri(f_{x_1})$, %
	so no vertex $c_{ij}$ is a broadcaster.   
	It follows that the only broadcaster of $I'_i$ belongs to $I_i$ for $i \in \bra{k}$. 
	Similarly, the only broadcaster of $E'_{ij}$ for $i,j \in \bra{k}$, $i < j$ is $v_e$ for some edge $e \in E_{ij}$. 
    \end{claimproof}
	
	Let $v_i^{m_i}$ be the only broadcaster of $I'_i$, $i \in \bra{k}$ and 
	let $v_{e}$ be the only broadcaster of $E'_{ij}$ for some $j \in \bra{k}$ and some edge $e \in E'_{ij}$. %
	By construction, $d(v_i^{m_i},v_e)\leq 2b+2n+2$: indeed, if $e= (v_i^{m_i},v_j^{m_j})$, the further vertex from $v_e$ in $I'_i$ is 
	$v_i^{m_i}$ which is at distance exactly $2b+2n+2$ by going through $l_i$ or $r_i$. 
	It follows that $\bri(v_i^{m_i})\leq 2b+2n+1$ and $\bri(v_{e_{ij}})\leq 2b+2n+1$. 
	Therefore, the total \val{} of the broadcast is at most:
	\[
	a \cdot \big(2\alpha-1\big) + \big(k+\tfrac{k(k-1)}{2}\big) \cdot (2b+2n+1)=M.
	\]
	Hence, equality must hold:
	\[
	\bri(v_i^{m_i}) = \bri(v_{e_{ij}}) = 2b+2n+1.
	\]
	
    For all $i,j\in\bra{k}$, $i<j$, let $v_i^{m_i}$ and $v_j^{m_j}$ be the unique broadcasters of $I'_i$ and $I'_j$, respectively. 
    Moreover, let $v_{e_{ij}}$ be the only broadcaster of $E'_{ij}$, with broadcast \val{} $2b+2n+1$. 
    The following holds:
    \[
        d(v_i^{m_i},v_{e_{ij}}) > \max (\bri(v_i^{m_i}), \bri(v_{e_{ij}})) = 2b+2n+1,
    \]
    and similarly $d(v_j^{m_j},v_{e_{ij}}) > 2b+2n+1$. 
    By construction of $G_2$, the only 
    vertex of $E'_{ij}\setminus \{c_{ij}\}$ at distance more than $2b+2n+1$ from $v_i^{m_i}$ and 
    $v_j^{m_j}$ 
	is one that corresponds to an edge $(v_i^{m_i}, v_j^{m_j})$. 
	So, $e_{ij} = (v_i^{m_i}, v_j^{m_j})$, and, in particular, 
	$(v_i^{m_i}, v_j^{m_j})\in E(G_1)$.  %
    Hence, $\{v_1^{m_1},\dots,v_k^{m_k}\}$ forms a multicolored clique in $G_1$. 
    This concludes the proof of \cref{le:mcliquetoib}. 
\end{proofE}

\mainresult{theorem}{thm:WBPIVChard}
{
     \WBI{} parameterized by vertex cover is W[1]-hard even when the weights are even and polynomially bounded.
}
{hardness}

\begin{proofE}
\Cref{le:mcliquetoib} proves that $(G_1,k)$ is a yes instance of $k$-\textsc{multicolored clique} if and only if $(G_2,M)$ is a yes instance of \BI{}. Note that $S= \{l_i,r_i\}_{i \in \bra{k}}\cup \{c_{ij}\}_{i,j\in \bra{k}, i<j}\cup \{s\}$ is a vertex cover of $G_2$ of size $2k + \tfrac{k(k-1)}{2}+1$ and that all weights are polynomially bounded; furthermore, all weights are even, yielding the claimed result. 
\end{proofE}

We now prove \cref{thm:BIFVSPWhard} by transferring \cref{thm:WBPIVChard} to the unweighted case. 
Indeed, we can naturally replace edge weights by subdividing edges into paths, and add a gadget to avoid broadcasting from the internal vertices of such paths.  
This will however give a weaker result for the parameters in play. 

\paragraph*{Construction for \BI}  
Take an instance  $(G_1=(V_1,E_1),\omega:E_1\rightarrow \mathbb{N},M_1)$ of \WBI. We assume the following without loss of generality:
 \begin{itemize}
    \item All edge weights are even and polynomially bounded, 
    \item Let $\diam(G_1)$ be the (weighted) diameter of $G_1$ (so $\diam(G_1)$ is even); we may assume that all weights are bounded by $\diam(G_1)$, since edges with weight exceeding $\diam(G_1)$ can be safely removed without changing shortest path distances between vertices, 
    \item $M_1> \diam(G_1)$, as otherwise the instance is trivially positive. 
 \end{itemize}
 
 We construct an instance $(G_2=(V_2,E_2), M_2)$ of \BI{} as follows. 
 Let $n := |V_1|$, and let $a = 3  n^2 \cdot \diam(G_1)^2$. 
 We start by adding a copy of $V_1$ to $G_2$, which we still denote by $V_1$ for simplicity. 
 The remaining gadgets are similar to those from the reduction in \cref{thm:WBPIVChard} (see \cref{fig:unweighted}). 
 
\begin{figure}
\begin{tikzpicture}[>=latex,scale=1.1,every node/.style={font=\small}]

\tikzstyle{bag}=[draw, fill=blue!10, rounded corners, inner sep=3pt,fill opacity=0.3]

\tikzstyle{vertex} =[circle, draw, inner sep=1.5pt, minimum size=6pt, fill=white]

\tikzstyle{vgadget}=[prefix after command= {\pgfextra{\tikzset{every label/.style={blue!60!black}}}}]
\tikzstyle{vedge}=[prefix after command= {\pgfextra{\tikzset{every label/.style={green!60!black}}}}]
\tikzstyle{vforbid}=[prefix after command= {\pgfextra{\tikzset{every label/.style={red!60!black}}}}]

\tikzstyle{forbidden}=[fill=red]

\tikzstyle{isedges}=[blue!60!black]
\tikzstyle{edgeedges}=[green!60!black]
\tikzstyle{forbidedges}=[red!60!black]

\node[vertex, label=above:$v_1$] (v1) {};
\node[vertex, label=above:$v_2$, right=of v1] (v2) {};
\node[vertex, label=above:$v_3$, right=of v2] (v3) {};
\node[vertex, label=below:$v_4$, below=of v2] (v4) {};

\draw[] (v1) to node[above] {$6$} (v2);
\draw[] (v2) to node[above] {$2$} (v3);
\draw[] (v1) to node[left] {$4$} (v4);
\draw[] (v2) to node[right] {$4$} (v4);

\begin{scope}[xshift=6cm]
\def\dist{0.4cm}

\node[vertex, vgadget, label=above:$v_1$] (v1) {};

\def\e{12}
\foreach \i [remember=\lastnode as \prev (initially v1)] in {1,...,5} {
    \node[vertex, vedge, right=\dist of \prev, label=above:$p_{\e}^{\i}$] (\e\i) {};
    \draw [edgeedges] (\prev) to (\e\i);
    \xdef\lastnode{\e\i}
}

\node[vertex, vgadget, label=above:$v_2$, right=\dist of \lastnode] (v2) {};
\draw [edgeedges] (\lastnode) to (v2);

\draw[decorate, decoration={brace, amplitude=5pt, raise=20pt}]
    ($(v1)!0!(v2)$) -- ($(v1)!1!(v2)$)
    node[midway, yshift=35pt, font=\footnotesize] {$w(12)$};

\foreach \i [remember=\lastnode as \prev (initially v2)] in {1,...,3} {
    \node[vertex, vedge, below=\dist of \prev] (24\i) {};
    \draw [edgeedges] (\prev) to (24\i);
    \xdef\lastnode{24\i}
}
\node[vertex, vgadget, label=below:$v_4$, below=\dist of \lastnode] (v4) {};
\draw [edgeedges] (\lastnode) to (v4);

\foreach \i/\name [remember=\name as \prev (initially v1)] in {0.25/141, 0.5/142, 0.75/143} {
    \node[vertex, vedge] (\name) at ($(v1)!\i!(v4)$) {};
    \draw [edgeedges] (\prev) to (\name);
    \xdef\lastnode{\name}
}
\draw [edgeedges] (\lastnode) to (v4);

\foreach \i [remember=\lastnode as \prev (initially v2)] in {1,...,1} {
        \node[vertex, vedge, right=\dist of \prev] (23\i) {};
        \draw [edgeedges] (\prev) to (23\i);
        \xdef\lastnode{23\i}
    }

\node[vertex, vgadget, label=above:$v_3$, right=\dist of \lastnode] (v3) {};
\draw [edgeedges] (\lastnode) to (v3);

\begin{scope}[yshift=-3.5cm]
    \node[vertex, vforbid, label=above:$s$] (s) {};
    \node[vertex, vforbid, label=left:{$f_1$}, left=20mm of s] (f1) {};
    \node[below=5mm of f1] (fdots) {\vdots};
    \node[vertex, vforbid, label=left:{$f_a$}, below=5mm of fdots] (fa) {};

    \foreach \i/\name [remember=\name as \lastnode (initially f1)] in {0.25/f_1^1, 0.5/f_1^2, 0.75/f_1^3} {
        \node[vertex, vforbid, label=above:{$\name$}] (\name) at ($(f1)!\i!(s)$) {};
        \draw [forbidedges] (\lastnode) to (\name);
        \xdef\lastnode{\name}
    }

    \draw[decorate, decoration={brace, amplitude=5pt, raise=20pt}]
    ($(f1)!0!(s)$) -- ($(f1)!1!(s)$)
    node[midway, yshift=35pt,font=\footnotesize] {$\diam(G_1)/2$};

    \draw [forbidedges] (\lastnode) to (s);
        \foreach \i/\name [remember=\name as \lastnode (initially fa)] in {0.25/f_a^1, 0.5/f_a^2, 0.75/f_a^3} {
        \node[vertex, vforbid, label=below:{$\name$}] (\name) at ($(fa)!\i!(s)$) {};
        \draw [forbidedges] (\lastnode) to (\name);
        \xdef\lastnode{\name}
    }
    \draw [forbidedges] (\lastnode) to (s);

    \draw [forbidedges] (s) edge[dashed, bend left=10] (123);
    \draw [forbidedges] (s) edge[dashed] (142);
    \draw [forbidedges] (s) edge[bend right=10, dashed] (242);

    \draw[
    forbidedges,
    postaction={
        decorate,
        decoration={
            markings,
            mark=at position 0.33 with {\node[vertex,draw=black] {};},
            mark=at position 0.66  with {\node[vertex,draw=black] {};},
        }
    }
   ] 
(s) .. controls +(-20:4cm) and +(-90:3cm) .. (231)
node[midway, below=1mm, sloped, font=\tiny] {$(\diam(G_1)-w(v_2 v_3))/2 = 3$}; 

\end{scope}

\end{scope}
\end{tikzpicture}
\caption{\label{fig:unweighted}On the left: an edge-weighted graph $G_1$, part of the instance of \WBI, of diameter 8. On the right: the graph $G_2$, part of the instance of \BI, obtained from $G_1$ after applying the proposed reduction}
\end{figure}%

\subparagraph*{Forbid Gadget} We add a vertex $s$ and vertices $f_1,\dots, f_a$.
For each $x\in \bra{a}$, add a path of length $\frac{diam(G_1)}{2}$ between $f_x$ and $s$ and denote this path by $(f_x,f_x^1,\dots f_x^{\frac{\diam(G_1)}{2}-1},s)$. 

Let $F_{x}:=\{f_x,f_x^1,\dots f_x^{\frac{\diam(G_1)}{2}-1}\}$ and $F := \{s\}\cup\bigcup_{x=1}^a F_x$ (that is $F$ contains all vertices of the forbid gadget).
The aim of this gadget is to forbid some vertices of the graph to broadcast. 
To do so, the vertices in the forbid gadget will broadcast with sufficiently large \val{} to include the forbidden vertices in their broadcast neighborhood.
 
\subparagraph*{Edge gadget}  
For every edge $e=(u,v)\in E_1$ with weight $\omega(e)$, add a path of length $\omega(e)$ between the corresponding copies $u$ and $v$ in $V_1$:
\[(u,p_e^1,\dots, p_e^{\omega(e)-1},v).\] 

Also connect $s$ to the middle vertex $p_e^{\frac{\omega(e)}{2}}$ (recall that all weights are even in $G_1$) with a path of length $\dfrac{(\diam(G_1)-{\omega(e))}}{2}$, denoted as follows: 
\[(s,q_e^1,\dots,q_e^{\frac{\diam(G_1)-\omega(e)}{2}-1},p_e^{\frac{\omega(e)}{2}}).\]
Let $V_e$ be the set of all vertices introduced in the edge gadget for edge $e$. 
Because all weights and $\diam(G_1)$ are polynomially bounded, the construction is polynomial in size. Finally, set: 
\[
    M_2:=a \cdot (\diam(G_1)-1) + M_1.
\] 
 
Let $d_1$ and $d_2$ denote the (weighted) distance functions in $G_1$ and $G_2$ respectively. %
By construction, for all $u,v \in V_1$, we have $d_2(u,v) = d_1(u,v)$, and the diameter of $G_2$ is exactly equal to $\diam(G_1)$.

\apxresult{lemma}
{le:wibtoib}
{
    The graph $G_1$ has an independent broadcast $\bri_1$ of \val{} at least $M_1$ if and only if 
    the graph $G_2$ has an independent broadcast $\bri_2$ of \val{} at least $M_2$.
}
{hardness}

\begin{proofE}
We first consider the forward direction. 
Define $\bri_2$ by: 
\[
\bri_2(v) =
\begin{cases}
\bri_1(v), & v\in V_1,\\
\diam(G_1)-1, & v = f_x \text{ for some } x\in \bra{a},\\
0, & \text{otherwise.}
\end{cases}
\] 

We first claim that $\bri_2$ is an independent broadcast. 
For any distinct $x,x' \in \bra{a}$ and any $v \in V_1$, we have
\(d_2(f_x,v) = d_2(f_x,f_{x'}) = \diam(G_1).\) %
Therefore, given $x \in \bra{a}$, no vertex from $(\{ f_1, \dots, f_a \} \setminus \{f_x\}) \cup V_1 $ hears $f_x$. %
Since $\bri_1$ is an independent broadcast of $G_1$ and since distances inside $V_1$ are not changed, for all $v,v'\in V_1$ such that $\bri_1(v) >0$ and $\bri_1(v') >0$ it holds that $\bri_2(v)=\bri_1(v)<d_1(v,v')=d_2(v,v')$. 
Hence $\bri_2$ is indeed an independent broadcast, of \val: 
\[ \bri_2(V_2) = \bri_1(V_1) + \sum_{x=1}^a (\diam(G_1)-1) \ge M_1 + a \cdot (\diam(G_1)-1) = M_2 \]
as required.

\medskip

We now turn our attention to the reverse direction. 
We first bound the contribution of the forbid gadget, and show that $\bri_2(F) \le a \cdot (\diam(G_1)-1)$ and that $F$ 
contains a broadcaster $b$ with $\bri_2(b) = \diam(G_1)-1$. 

To that aim, we first prove that there exists $x \in \bra{a}$ such that $\bri_2(F_x)\geq \diam(G_1)-1$. %
Assume for contradiction that for all $x\in \bra{a}$, $\bri_2(F_x)\leq \diam(G_1)-2$. 
Then, since $|V_2\setminus \bigcup_{x\in \bra{a}}F_x|\leq 2 \cdot \diam(G_1) \cdot n^2$, %
and since $a \gg n^2 \cdot \diam(G_1)^2$ by choice of $a$, we have: 
\[ \bri_2(V_2) \le a \cdot (\diam(G_1)-2)+ 2 \cdot \diam(G_1)^2 \cdot n^2<M_2,\]
because $a>2n^2 \cdot \diam(G_1)^2$, a contradiction. 
Hence, there exists $x \in \bra{a}$ with $\bri_2(F_x)\geq \diam(G_1)-1$.  

Assume such an $F_x$ contains multiple broadcasters $b_1,\dots,b_\ell$, ordered along the path formed by $F_x$. Then, by independence of the solution, their contribution is limited by the pairwise distances along the path:
\begin{align*}
\bri_2(F_x)=\sum_{i=1}^{\ell}\bri_2(b_i) & 
\leq \bigl(d_2(b_1,b_2) - 1 \bigr) + \sum_{i=2}^{\ell} \bigl(d_2(b_{i-1},b_{i}) - 1 \bigr) \\
& \leq \frac{\diam(G_1)}{2}-1+\frac{\diam(G_1)}{2}-(\ell-1) \\
& <\diam(G_1)-1,
\end{align*} 
leading to a contradiction. 
Thus, $F_x$ contains exactly one broadcaster $b$, and $\bri_2(b)=\bri_2(F_x)\geq \diam(G_1)-1$. 
As the diameter of $G_2$ is equal to $\diam(G_1)$, $\bri_2(b)\leq \diam(G_1)$. 
But if $\bri_2(b)=\diam(G_1)$, then $b$ is the only broadcaster of the graph, 
which is impossible as $\diam(G_1)< M_2$. 
So %
$\bri_2(b)=\diam(G_1)-1$ and for all $x'\in \bra{a}$, $\bri_2(F_{x'})\leq \diam(G_1)-1$ holds.
Furthermore, $s$ is not a broadcaster because $d(s,b) \leq \diam(G_1)/2$.
Therefore, 
\[
    \bri_2(F)
    = \sum_{x=1}^a \bri_2(F_x)
    \leq a \cdot (\diam(G_1)-1).
\]

Next, we show that no vertex of any edge gadget $V_e$ can be a broadcaster. 
Let $v_e \in V_e$ for some edge $e = uv$ in $E_1$. 
We show that in all cases, $d_2(v_e,b) < \diam(G_1)$ %
Indeed, if $v_e$ lies on the path from $s$ to the middle vertex $p_e^{\omega(e)/2}$, i.e. $v_e \in \{q_e^1,\dots,q_e^{\frac{\diam(G_1)-\omega(e)}{2}-1}\}$, then: 
\begin{align*}
    d_2(v_e,b) & \leq d_2(v_e,s)+d_2(s,b) \\
    & \leq \frac{\diam(G_1)}{2}-1+\frac{\diam(G_1)}{2} \\
    & < \diam(G_1).
\end{align*}
Otherwise, $v_e$ lies on the path from $u$ to $v$, and:
\begin{align*}
    d_2(v_e,b)
    & \leq d_2(v_e,p_e^{\omega(e)/2})+d_2(p_e^{\omega(e)/2},s)+d_2(s,b) \\
    & \leq \frac{\omega(e)}{2}-1+ \frac{\diam(G_1)-\omega(e)}{2} +\frac{\diam(G_1)}{2} \\
    & < \diam(G_1).
\end{align*} 
Therefore, there is no broadcaster in $V_e$. Thus, all broadcasters outside $F$ are in $V_1$ and: %
\[
\bri_2(V_2) = \bri_2(F) + \bri_2(V_1) \le a \cdot (\diam(G_1)-1) + \bri_2(V_1).
\]

Since $\bri_2(V_2) \geq M_2=a \cdot (\diam(G_1)-1) + M_1$, we deduce 
\(\bri_2(V_1)\geq M_1.\)

\medskip

Define $\bri_1:=\bri_2|_{V_1}$. 
It is a valid independent broadcast of $G_1$: by construction $\bri_1(v) \le \diam(G_1)$ for any $v \in V_1$, so it is a broadcast. Moreover, since $d_1:=d_2|_{V_1\times V_1}$ and $\bri_2$ is independent, $\bri_1$ also is. %
Finally, $\bri_1(V_1)= \bri_2(V_1)\geq M_1$. 
\end{proofE}

\ThmBIFVSPWhard*

\begin{proof}
\Cref{le:wibtoib} proves that $(G_1, M_1)$ is a yes instance of \WBI{} if and only if $(G_2,M_2)$ is a yes instance of \BI{}. 
Let $X\subseteq V_1$ be a vertex cover of $G_1$, such that $\vc(G_1) := |X|$. 
Removing $X\cup\{s\}$ from $G_2$ leaves a disjoint union of trees, each of pathwidth at most~$3$. %
Therefore, 
\[
\fvs(G_2) \leq \vc(G_1)+1 \quad \text{ and } \quad \pw(G_2)\leq \vc(G_1)+4.
\]
This concludes the proof of \cref{thm:BIFVSPWhard}. 
\end{proof}

\subsection{Proof of \cref{thm:WBPVChard}}

We finally turn our attention to the \BP{} problem and provide the reduction for \cref{thm:WBPVChard}. We once again present a reduction from $k$-\textsc{multicolored clique}~\cite{DBLP:books/sp/CyganFKLMPPS15,DBLP:journals/jcss/Pietrzak03}.

\paragraph*{Construction for \WBP}  
Let $G_1=(I_1\cup I_2,\dots \cup I_k,E)$ be an instance of $k$-\textsc{multicolored clique}, in which %
each set $I_i$ induces an independent set of size $n$ for $i \in \bra{k}$ (we will assume without loss of generality that $n$ is even as we can always add a dummy vertex to each $I_i$). %
Recall that in such a problem one seeks a clique of size $k$ containing one vertex per set $I_i$, for $i \in \bra{k}$. 
For each pair $i,j \in \bra{k}$, $i<j$, let $E_{ij}:=E\cap (I_i\times I_j)$ and  $I_i:=\{v_i^1,\dots,v_i^n\}$. %

\medskip

The construction resembles the one in the proof of \cref{thm:WBPIVChard}, yet with some key differences. We notably replace the previous forbid gadget with a \emph{bounding gadget} that limits the broadcast \vals{} of the other vertices of the graph. Let us define a graph $G_2 = (V_2, E_2)$ as follows. 

\subparagraph*{Independent set gadget} For each color class $i \in  \bra{k}$, create a copy of $I_i$ (we keep the same vertex names for simplicity), and two vertices $l_i,r_i$. 
For each $j\in \bra{n}$, add edge $(l_i,v_i^j)$ of weight $j+1$, and edge $(v_i^j,r_i)$ of weight $n+2-j$. 
Let $I'_i := I_i \cup \{l_i,r_i\}$.
We therefore have $|I'_i| = n+2$ for each $i \in \bra{k}$ and the total number of vertices in this gadget is $k \cdot (n+2)$. 

\subparagraph*{Edges gadget} Let $b:=5k^2 \cdot n^3$. 
For each pair $i<j$, add a vertex $c_{ij}$. 
For each edge $e=(v_i^{m_i},v_j^{m_j}) \in E_{ij}$, add a vertex $v_e$, connect it to $c_{ij}$ by an edge of weight $\dfrac{n}{2}$, and add the four edges 
$(v_e,l_i),\ (v_e,r_i),\ (v_e,l_j),\ (v_e,r_j)$
of weights $2b+2n-m_i$, $2b+n+m_i-1$, $2b+2n-m_j$, and $2b+n+m_j-1$, respectively. 
Let $E'_{ij}:=\{v_e \mid e \in  E_{ij}\} \cup \{c_{ij}\}$.
The total number of vertices in this gadget is thus $\sum_{i<j}|E'_{ij}| = \sum_{i<j}(1+|E_{ij}|) = {k \choose 2} + |E| \leq {k \choose 2} + {k \choose 2} \cdot n^2$. %

\subparagraph*{Bounding gadget} Let $a = 11 k^2 \cdot n^3 \cdot b$. %
We add $a$ vertices $f_1,\dots f_a$ and a vertex $s$.
For each $x \in \bra{a}$, connect $f_x$ to $s$ with an edge of weight $2$. 
Connect $s$ 
to each $l_i$ and $r_i$ with edges of weight $b+n$, 
to each $v_i^j$ with an edge of weight $b+n+1$, 
to each $c_{ij}$ with an edge of weight $b+n-1$,
and to each $v_e$ with an edge of weight $b+n+1$. 
Let $F:=\{f_x \mid x\in\bra{a}\}\cup\{s\}$. Intuitively, this gadget will force other vertices of the graph to have bounded broadcasting \vals. Note that $s$ is universal in $G_2$.  

\begin{figure}
\centering
\begin{tikzpicture}[>=latex,scale=1.1,every node/.style={font=\small}]

\tikzstyle{bag}=[draw, blue!60!black, fill=blue!20, rounded corners, inner sep=3pt,fill opacity=0.3]
\tikzstyle{vis}=[circle, draw, inner sep=1.5pt, minimum size=6pt, fill=white, prefix after command= {\pgfextra{\tikzset{every label/.style={blue!60!black}}}}]
\tikzstyle{vbounding}=[circle, draw, inner sep=1.5pt, minimum size=6pt, fill=white, prefix after command= {\pgfextra{\tikzset{every label/.style={red!60!black}}}}]
\tikzstyle{vedge}=[circle, draw, inner sep=1.5pt, minimum size=6pt, fill=white, prefix after command= {\pgfextra{\tikzset{every label/.style={green!60!black}}}}]

\tikzstyle{bounding}=[]

\tikzstyle{isedges}=[blue!60!black,every node/.append style={font=\scriptsize}]
\tikzstyle{edgeedges}=[green!60!black,every node/.append style={font=\scriptsize}]
\tikzstyle{boundedges}=[red!70!black,every node/.append style={font=\scriptsize}]

\node[vis,label=right:{$v_1^1$}] (v11) {};
\node[below=2mm of v11] (v1dots) {$\vdots$};
\node[vis,below=2mm of v1dots,label=below right:{$v_1^{m_1}$}] (v1m) {};
\node[below=2mm of v1m] (v1dots2) {$\vdots$};
\node[vis,below=2mm of v1dots2,label=right:{$v_1^n$}] (v1n) {};

\node[bag, fit=(v11)(v1n)] (I1) {};

\node[above=1mm of I1] (I1label) {$I_1$};

\node[vis, bounding, left=2cm of v1m,label=left:{$l_1$}] (l1) {};
\node[vis, bounding, right=2cm of v1m,label=right:{$r_1$}] (r1) {};

\draw[isedges] (l1) --node[left]{$2$} (v11);
\draw[isedges] (l1) --node[above]{$m_1+1$} (v1m);
\draw[isedges] (l1) --node[right]{$n+1$} (v1n);

\draw[isedges] (r1) --node[right]{$n+1$} (v11);
\draw[isedges] (r1) --node[above]{$n+2-m_1$} (v1m);
\draw[isedges] (r1) --node[left]{$2$} (v1n);

\begin{scope}[xshift=5.5cm]
\node[vis,label=right:{$v_2^1$}] (v21) {};
\node[below=2mm of v21] (v2dots) {$\vdots$};
\node[vis,below=2mm of v2dots,label=below right:{$v_2^{m_2}$}] (v2m) {};
\node[below=2mm of v2m] (v2dots2) {$\vdots$};
\node[vis,below=2mm of v2dots2,label=right:{$v_2^n$}] (v2n) {};

\node[bag, fit=(v21)(v2n)] (I2) {};

\node[above=1mm of I2] () {$I_2$};

\node[vis, bounding, left=2cm of v2m,label=left:{$l_2$}] (l2) {};
\node[vis, bounding, right=2cm of v2m,label=right:{$r_2$}] (r2) {};

\draw[isedges] (l2) --node[left]{$2$} (v21);
\draw[isedges] (l2) --node[above]{$m_2+1$} (v2m);
\draw[isedges] (l2) --node[right]{$n+1$} (v2n);

\draw[isedges] (r2) --node[right]{$n+1$} (v21);
\draw[isedges] (r2) --node[above]{$n+2-m_2$} (v2m);
\draw[isedges] (r2) --node[left]{$2$} (v2n);

\node[right=1.2cm of r2] (troisptitspoints) {$\ldots$} ;

\end{scope}

\node[vedge, bounding, below right=4cm of I1, label=below:{$c_{12}$}] (c12) {};
\node[vedge, bounding, right=of c12, label=below:{$c_{13}$}] (c13) {};
\node[vedge, bounding, right=of c13, label=below:{$c_{23}$}] (c23) {};

\foreach \dx/\dy in {-0.3/0.3, 0.3/0.3}
    \draw[edgeedges] (c12) -- ++(\dx,\dy);
    
\foreach \dx/\dy in {-0.3/0.3, 0/0.3, 0.3/0.3} {
    \draw[edgeedges] (c13) -- ++(\dx,\dy);
    \draw[edgeedges] (c23) -- ++(\dx,\dy);
}

\node[vedge, above= of c12, label=above right:{$v_e$}] (ve) {};
\draw[edgeedges] (c12) --node[right]{$n/2$} (ve);

\draw[edgeedges] (ve) edge[bend left] node[right]{$2b+2n-m_1$} (l1);
\draw[edgeedges] (ve) edge[] node[left]{$2b+n+m_i-1$} (r1);
\draw[edgeedges] (ve) edge[] node[right]{$2b+2n-m_2$} (l2);
\draw[edgeedges] (ve) edge[bend right] node[right]{$2b+n+m_2-1$} (r2);

\node[vbounding, left=3.5cm of c12,  label=below:{$s$}] (s) {};
\node[vbounding, minimum size=2pt, label=left:{$f_1$}, above left=9mm of s] (f1) {};
\node[vbounding, minimum size=2pt, label=left:{$f_2$}, below=2mm of f1] (f2) {};
\node[below=0.1mm of f2] (fdots) {\vdots};
\node[vbounding, minimum size=2pt, label=left:{$f_a$}, below=1mm of fdots] (fa) {};

\node[bag, red!60!black, fill=red!20, fit=(s)(f1)(fa)] (F) {};
\node[below=1mm of F] () {$F$};

\draw[boundedges] (s) -- (f1);
\draw[boundedges] (s) -- (f2);
\draw[boundedges] (s) --node[below]{2} (fa);

\draw[boundedges] (s) --node[left]{$b + n$} (l1);
\draw[boundedges] (s) --node[below]{$b + n -1$} (c12);
\draw[boundedges] (s) --node[above]{$b + n +1$} (ve);
\draw[boundedges] (s) --node[right]{$b + n +1$} (v1n);

\foreach \dx/\dy in {0.3/0.3, 0.4/0.2, 0.2/0.4, 0.45/0.08, 0/0.5}
    \draw[boundedges] (s) -- ++(\dx,\dy);

\node[red!70!black, below=5mm of fa] {Bounding gadget};

\foreach \bounding in {r1, l2, r2, c13, c23, v11, v1m, v21, v2m, v2n} %
    \draw[boundedges] (\bounding) -- ($(\bounding)!5mm!(s)$);
    
\foreach \dx/\dy in {0.3/-0.3, 0.25/-0.35, 0.35/-0.25} {
    \draw[edgeedges] (l1) -- ++(\dx,\dy);
}

\foreach \dx/\dy in {-0.3/-0.3, -0/-0.35, 0.25/-0.25} {
    \draw[edgeedges] (r1) -- ++(\dx,\dy);
}

\end{tikzpicture}
\caption{Schematic example from a graph with $k=3$ colors. 
For readability reasons, only the Edge gadget $e = (v_1^{m_1}, v_2^{m_2}$) is depicted and the gadget for $I_3$ is not shown. 
The \emph{bounding gadget} is shown in red. 
The \emph{independent set gadget} of the construction is in blue, while the \emph{edge gadget} 
of the construction is in green. Weights are written on the edges.
\label{fig:bp weighted}
}
\end{figure}
    
Finally, we ask whether $G_2$ admits a broadcast packing of total \val:
\[
    M = a +1 + (k+\frac{k(k-1)}{2}) \cdot (b+n).
\]

Note that all weights used are polynomial. The construction is depicted in the appendix (see \cref{fig:bp weighted}).  

\apxresult{proposition}
{prop:diamG2bp}
{    
    The diameter of $G_2$ is at most $2b + 2n + 2$.
}
{hardness}

\begin{proofE}
    We first consider the eccentricity of vertex $s$. Since it is universal, its eccentricity corresponds to the 
    maximum weight of any edge incident to it, that is:
    \[
        \ecc(s) \leqslant b + n + 1.
    \]
    
    Observe that the eccentricity of any vertex $v \neq s$ is at most $\ecc(s) + d_{G_2}(v,s)$. 
    If $v = c_{ij}$ for $i,j \in \bra{k}$, $i<j$ it holds that $d_{G_2}(v,s) = b+n-1$ and hence we have:
    \[
        \ecc(c_{ij}) \leqslant 2b + 2n.
    \]
    If $v = l_i$ or $v = r_i$ for $i \in \bra{k}$ it holds that $\ecc(v) \leqslant 2b + 2n + 1$. 
    If $v = v_e$ or $v = v_i^j$ for $i,j \in \bra{k}$, $i < j$, 
    then $d_{G_2}(s,v_e) = b + n + 1$ and hence $\ecc(v_e) \leqslant 2b + 2n + 2$. 
    Finally, if $v = f_x$ for $x \in [a]$ then $\ecc(v)$ is at most $b+n+3$. 
    This concludes the proof of \cref{prop:diamG2bp}. 
\end{proofE}

\apxresult{lemma}
{le:bptomclique}
{ 
    The graph $G_1$ has a multicolored clique $\{v_1^{m_1},\dots,v_k^{m_k}\}$ if and only if 
    the graph $G_2$ has a broadcast packing of \val{} at least $M$.
}
{hardness}

\begin{proofE}
We first prove the forward direction. Let $\brp$ be a broadcast such that $\brp(f_1)=2$, $\brp(f_x)=1$ for all $x \ge 2$, $\brp(v_i^{m_i})=b+n$ for each $i\in \bra{k}$, $\brp(v_{(v_i^{m_i},v_j^{m_j})})=b+n$ for each pair $i<j$, and $\brp(u) = 0$ otherwise. 
By construction, it is a packing and it has \val{} exactly $M$. 

\medskip

We now turn our attention to the reverse direction. We proceed with a series of observations. Note that all the distances used in the following proof are in $G_2$. %

\begin{claim}
    Any solution must have at least one vertex $f_x$ as a broadcaster. 
\end{claim}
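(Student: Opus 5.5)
We argue by contradiction: suppose some broadcast packing $\brp$ of $G_2$ has \val{} at least $M$ but no vertex $f_x$ broadcasts, i.e.\ $\brp(f_x)=0$ for all $x\in\bra{a}$. We then bound $\brp(V_2)$ strictly below $M$ using a counting argument. The key point is that almost all of $M$ comes from the term $a$, while the graph outside the $f_x$'s has only polynomially many vertices in $n$ and $k$, each with broadcast \val{} bounded by the diameter.

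First, the broadcasters then all lie in $V_2\setminus\{f_1,\dots,f_a\}$, namely in $\{s\}\cup\bigcup_i I'_i\cup\bigcup_{i<j}E'_{ij}$. This set has at most $1+k(n+2)+\binom{k}{2}(n^2+1)\le 2k^2n^2$ vertices for $n\ge 2$. By \cref{prop:diamG2bp}, every broadcast \val{} is at most $\diam(G_2)\le 2b+2n+2$. (Alternatively, since we use the diameter-bounded definition justified by \cref{lem:bounddiam}, values are at most $\diam(G_2)$ anyway.)

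Second, we obtain the bound
\[
\brp(V_2)\le 2k^2n^2\,(2b+2n+2)\le 8k^2n^2 b,
\]
using $b\ge n+1$. We compare this with $M\ge a=11k^2n^3b$. Since $n\ge 2$, we have $8k^2n^2b<11k^2n^3b\le a<M$, which is the desired contradiction. Hence some $f_x$ must broadcast.

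Since $a$ was chosen with a large polynomial slack precisely for this purpose, the main step is not delicate. The only care needed is to get a correct vertex count of the non-$f$ part, including $s$, the $l_i, r_i$, the $c_{ij}$ and one $v_e$ per edge, with $|E_{ij}|\le n^2$. We then check the elementary inequality with the constants $b=5k^2n^3$ and $a=11k^2n^3b$; nothing finer than the diameter bound is required.
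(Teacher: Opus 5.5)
Your proof is correct and follows essentially the same argument as the paper. Both assume no $f_x$ broadcasts, bound the total \val{} by the number of remaining vertices, $1+k(n+2)+\tfrac{k(k-1)}{2}(n^2+1)$, times $\diam(G_2)\le 2b+2n+2$ from \cref{prop:diamG2bp}, and compare this with $a<M$. Your simplification to $8k^2n^2b<11k^2n^3b=a$, using $n\ge 2$ and $b\ge n+1$, just makes the paper's final inequality explicit.
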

\begin{claimproof}
Assume for a contradiction that this is not the case. This means that there are at most $|V_2| - a$ broadcasters. Moreover, each broadcasting vertex has a \val{} bounded by its eccentricity, which is at most $\diam(G_2) \leq 2b + 2n + 2$. 
Hence,
\begin{align*}
    \brp(V_2) & \leq (|V_2|-a) \cdot \diam(G_2)\\
    & \leq \big( k \cdot (n+2+\frac{(k-1)(n^2+1)}{2}) + 1 \big) \cdot (2b+2n+2)\\
    & < a < M
\end{align*}

since $a > 10k^2\cdot n^3\cdot b$, contradicting the assumption. Hence, at least one vertex $f_x$ is a broadcaster.
\end{claimproof}

\begin{claim}
    There are least two vertices $f_{x_1},f_{x_2}$ that are broadcasters among the bounding gadget. 
\end{claim}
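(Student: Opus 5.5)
We argue by contradiction, assuming that exactly one vertex $f_{x_1}$ of $\{f_1,\dots,f_a\}$ broadcasts; the previous claim guarantees at least one. The plan is to show that such a solution cannot reach the target \val{} $M = a + 1 + (k+\frac{k(k-1)}{2})(b+n)$, because then no vertex of $\{f_1,\dots,f_a\}$ other than $f_{x_1}$ contributes anything.

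First, bound the contribution of $f_{x_1}$. Its broadcast \val{} is at most its eccentricity. By \cref{prop:diamG2bp} this is at most $\diam(G_2) \le 2b+2n+2$, and more precisely $\ecc(f_{x_1}) \le b+n+3$, as computed in that proof.

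Second, bound every other broadcaster. These lie in $V_2 \setminus \{f_1,\dots,f_a\}$, which has at most $k(n+2) + \binom{k}{2}(n^2+1) + 1$ vertices (the vertex $s$ included). Each of them has \val{} at most $\diam(G_2) \le 2b+2n+2$. Summing gives
\[
\brp(V_2) \le (2b+2n+2) + \Bigl(k(n+2) + \tbinom{k}{2}(n^2+1) + 1\Bigr)(2b+2n+2).
\]
This is $O(k^2 n^2 b)$, whereas $a = 11k^2 n^3 b$ dominates it. Hence $\brp(V_2) < a < M$, a contradiction, and at least two vertices $f_{x_1}, f_{x_2}$ must broadcast.

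The only real work is the final numeric comparison. Using $n \ge 2$ and $k \ge 2$, I would bound $k(n+2) + \binom{k}{2}(n^2+1) + 2 \le k^2 n^2$, say, and $2b+2n+2 \le 3b$. The total is then at most $3k^2n^2b < 11k^2n^3b = a$. This mirrors the first claim, so no new obstacle arises. It would also be a natural place to note the consequence used later: once two $f$'s broadcast with $d(f_{x_1}, f_{x_2}) = 4$, the packing condition gives $\brp(f_{x_1}) + \brp(f_{x_2}) \le 3$. Every $f_x$ then has small \val{}, and $s$ cannot broadcast.
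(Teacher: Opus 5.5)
Your proposal is correct and follows the paper's own argument. Assuming a single broadcasting $f_{x_1}$, you bound its \val{} and that of each vertex outside $\{f_1,\dots,f_a\}$ by $\diam(G_2)\le 2b+2n+2$, then compare the resulting sum with $M$. Your explicit estimate $\le 3k^2n^2b < a < M$ just spells out the inequality the paper asserts without detail.
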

\begin{claimproof}
Assume for a contradiction that there is only one such vertex. Then its contribution is bounded by the graph diameter and therefore the overall broadcast \val{} is at most $2b+2n+2 + k(n+2+\frac{(k-1)(n^2+1)+1}{2}) \cdot (2b+2n+2) < M$.
\end{claimproof}

Therefore there exist $x_1,x_2\in \bra{a}$, $x_1\neq x_2$, such that $f_{x_1},f_{x_2}$ are broadcasters. 

\begin{claim}
    \label{cl:overall}
    The overall contribution of $F$ to $\brp$ is bounded by $a+1$.
\end{claim}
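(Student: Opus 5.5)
We need to show $\brp(F)\le a+1$, using the packing condition $d(u,v) > \brp(u)+\brp(v)$ between broadcasters, together with the fact that there are distinct broadcasters $f_{x_1},f_{x_2}$ (previous claim). We split $F$ into the leaves $f_x$ and the center $s$.

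First, every $f_x$ hears $s$ at distance $2$, and any two leaves are at distance exactly $4$, since the only short route between them passes through $s$; every other edge at $s$ has weight at least $b+n-1 \gg 4$. We treat $s$ first. If $s$ broadcast, then for the broadcaster $f_{x_1}$ we would need $2 = d(s,f_{x_1}) > \brp(s)+\brp(f_{x_1}) \ge 2$, which is impossible. Hence $\brp(s)=0$.

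Next we bound the leaves. For any two distinct broadcasting leaves $f_x, f_y$ the packing condition gives $\brp(f_x)+\brp(f_y) \le 3$. So at most one leaf broadcasts with value $2$, every other broadcasting leaf has value $1$, and no leaf has value $\ge 3$. To see the last point: if some $\brp(f_x)\ge 3$, then every other leaf would need $\brp \le 0$, contradicting the existence of the two distinct broadcasters $f_{x_1},f_{x_2}$. Summing, we get $\sum_x \brp(f_x) \le 2 + (a-1)\cdot 1 = a+1$. Combined with $\brp(s)=0$, this gives $\brp(F)\le a+1$.

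The only step needing care is checking that $d(f_x,f_y)=4$ and $d(s,f_x)=2$ in $G_2$. Both follow because $f_x$ has a single incident edge, of weight $2$, going to $s$. Everything else is an immediate application of the packing inequality, so I expect no real obstacle.
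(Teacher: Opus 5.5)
Your proposal is correct and follows essentially the same route as the paper. You use $d(f_x,f_y)=4$ together with the two distinct broadcasting leaves to show that every leaf has value at most $2$ and at most one leaf reaches $2$, and you use $d(s,f_{x_1})=2$ to rule out $s$ as a broadcaster, which gives $\rho(F)\le a+1$.
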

\begin{claimproof}
    For all $x\in \bra{a}\setminus\{x_1\}$, 
\[\brp(f_x)<d(f_x,f_{x_1})-\brp(f_{x_1})\leq 3\]

and similarly: 
\[\brp(f_{x_1})<d(f_{x_1},f_{x_2})-\brp(f_{x_2})\leq3.\]

So, for all $x\in\bra{a}$, $\brp(f_x)\leq 2$. But, suppose there is $x_1\in \bra{a}$ such that $\brp(f_{x_1})=2$. Then, for all $x\in \bra{a}\setminus \{x_1\}$,
\[\brp(f_x)<d(f_x,f_{x_1})-\brp(f_{x_1})=2.\]

So, for all $x\in\bra{a}\setminus \{x_1\}$, $\brp(f_x)\leq 1$.

\medskip

Finally, let $x_1\in \bra{a}$ such that $f_{x_1}$ is a broadcaster. Then $\brp(s)< d(f_{x_1},s)-\brp(f_{x_1})\leq 1$, and therefore, $s$ is not a broadcaster: $\brp(s)=0$. 
We can deduce that: 
\[
    \brp(F) = \brp(s)+\sum_{x\in\bra{a}}\brp(f_x)\leq a+1.
\]
This concludes the proof of \cref{cl:overall}. 
\end{claimproof}

\begin{claim}
    \label{cl:exactlyone}
    Every set $I'_i$ and $E'_{ij}$ contains exactly one broadcaster.
\end{claim}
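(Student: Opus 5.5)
We mirror the argument used for the analogous claim in the proof of \cref{le:mcliquetoib}: first bound what a single gadget can contribute, then use the value target $M$ to force exactly one broadcaster per gadget.

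\textbf{Step 1: bound every broadcaster by the universal vertex $s$.} By \cref{cl:overall} and the preceding claims, some $f_{x_1}$ is broadcasting. For the packing condition with any other broadcaster $v$ we need $\brp(v) < d(v,f_{x_1}) - \brp(f_{x_1}) \leq d(v,s)+2-1 = d(v,s)+1$. So $\brp(v)\leq d(v,s) \leq b+n+1$ for every broadcaster $v\notin F$; in particular $\brp(v)\le b+n+1$.

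\textbf{Step 2: several broadcasters in one gadget are cheap.} Within $I'_i$, any two vertices are at distance at most $2n+2$ (via $l_i$ or $r_i$, whose edges have weights at most $n+1$). Two broadcasters $u,v$ there satisfy $\brp(u)+\brp(v) < d(u,v)\le 2n+2$. Hence if $I'_i$ has at least two broadcasters, each has value at most $2n+1$, and $\brp(I'_i)\le (n+2)(2n+1)$, which is $O(n^2)$. Similarly, within $E'_{ij}$ any two vertices are at distance at most $n$ through $c_{ij}$ (each $v_e$ is at distance $n/2$ from $c_{ij}$). So several broadcasters there give $\brp(E'_{ij}) \le |E'_{ij}|\cdot n \le (n^2+1)n$. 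A gadget with zero broadcasters contributes $0$. In either case the gadget contributes at most roughly $2n^3 \ll b$.

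\textbf{Step 3: counting.} Suppose some gadget among the $k+\binom{k}{2}$ gadgets $I'_i$, $E'_{ij}$ does not have exactly one broadcaster. The gadgets with a single broadcaster each contribute at most $b+n+1$ by Step 1, and $F$ contributes at most $a+1$ by \cref{cl:overall}. Since $V_2 = F \cup \bigcup_i I'_i \cup \bigcup_{i<j} E'_{ij}$, this gives
\[
\brp(V_2) \le a+1 + \bigl(k+\tfrac{k(k-1)}{2}-1\bigr)(b+n+1) + 2n^3 .
\]
We compare this with $M = a+1+\bigl(k+\frac{k(k-1)}{2}\bigr)(b+n)$. The excess $\bigl(k+\binom{k}{2}\bigr)\cdot 1$, coming from the $+1$ terms, together with $2n^3$ is at most $k^2+2n^3$. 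This is smaller than the deficit $b+n+1$, since $b=5k^2n^3$. Hence $\brp(V_2)<M$, a contradiction.

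\textbf{Main obstacle.} The only delicate point is Step 1: getting the per-broadcaster bound $b+n+1$ rather than the diameter $2b+2n+2$ from \cref{prop:diamG2bp}. This bound must come from interaction with $f_{x_1}$ through $s$. Using the diameter instead would destroy the counting, because a single broadcaster could then pay for a missing gadget. Note that the argument only needs $\brp(f_{x_1})\ge 1$.
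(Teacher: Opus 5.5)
Your proof is correct and follows the paper's argument essentially step for step. You bound every broadcaster outside $F$ by $b+n+1$ using $f_{x_1}$ and the universal vertex $s$, show that a gadget with zero or several broadcasters contributes only $O(n^3)$, and conclude with the same counting inequality against $M$. Your intra-gadget distance bounds ($2n+2$ in $I'_i$, $n$ in $E'_{ij}$) are somewhat looser than the paper's ($n+3$, and $n-1$ per broadcaster), but the slack from $b=5k^2n^3$ easily absorbs this.
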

\begin{claimproof}
Let $x_1\in\bra{a}$ be such that $\brp(f_{x_1})\geq 1$. 
Note that every vertex of $V_2\setminus F$ lies within distance at most $b+n+3$ from $f_{x_1}$ through $s$. %
Hence, any broadcaster in $V_2\setminus F$ has a broadcast \val{} strictly less than $b+n+3 - \brp(f_{x_1}) \leq b+n+2$. 
Moreover, within each independent set gadget $I'_i$, two broadcasters would be at distance at most $n+3$ from each other, 
implying that for all $v \in I'_i$, $\brp(v)\leq n+2$, and thus forcing $\brp(I'_i) \leq |I'_i| \cdot (n+2) = (n+2)^2$. 

An analogous argument holds for all edge gadgets $E'_{ij}$:  if multiple broadcasters are inside $E'_{ij}$, each of their broadcast \vals{} are bounded by $n-1$, giving $\brp(E'_{ij}) \leq |E'_{ij}| \cdot (n-1) \leq n^3$ %

\medskip

Therefore, suppose that one edge or independent set gadget contains several or zero broadcasters. In that case and by the previous arguments, %
the overall broadcast \val{} would be at most: 
\[
	a+1 
	+ \big(k+\tfrac{k(k-1)}{2}-1\big)(b+n+1)
	+ 2n^3
	< M,
\]
which is impossible.
\end{claimproof}

\begin{claim}
    \label{cl:bound}
    Every set $I'_i$ and $E'_{ij}$ has broadcast \val{} exactly $b+n$.
\end{claim}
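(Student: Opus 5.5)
We prove \cref{cl:bound} with a counting argument that matches an upper bound against the target value $M$. By \cref{cl:exactlyone}, each $I'_i$ (for $i \in \bra{k}$) and each $E'_{ij}$ (for $i<j$) contains exactly one broadcaster. Let $u_i$ denote the broadcaster of $I'_i$ and $w_{ij}$ the broadcaster of $E'_{ij}$. The plan has two steps: first show that every such broadcaster has value at most $b+n$, and then use the bound of \cref{cl:overall} on $F$ to force equality everywhere.

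For the upper bound, fix a broadcaster $f_{x_1}$ of $F$ with $\brp(f_{x_1}) \geq 1$; one exists by the earlier claims. We bound the distance from every vertex of $V_2 \setminus F$ to $f_{x_1}$ using the path through $s$. The edge $f_{x_1}s$ has weight $2$, and $s$ is joined to $l_i, r_i$ by weight $b+n$, to $c_{ij}$ by weight $b+n-1$, and to $v_i^j$ and $v_e$ by weight $b+n+1$. Hence $d(f_{x_1},u) \leq b+n+3$ for all $u \in V_2 \setminus F$. The packing condition gives $\brp(u) + \brp(f_{x_1}) < d(u,f_{x_1})$, so $\brp(u) \leq b+n+1$ for every broadcaster $u$ outside $F$. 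This is one more than we want, so we need to gain one unit, and this is the step I expect to be the main obstacle. The trick is to use that $\brp(f_{x_1}) \geq 1$, and that we may choose $f_{x_1}$ to be a broadcaster of value $2$ whenever one exists.

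To gain that unit, I would split into cases on $\brp(F)$.
\begin{itemize}
\item If some $f_x$ has $\brp(f_x) = 2$, the packing condition gives $\brp(u) \leq b+n+3-2-1 = b+n$ directly for every broadcaster $u$ outside $F$.
\item Otherwise every $f_x$ has value at most $1$ and $s$ does not broadcast, so $\brp(F) \leq a$. The remaining $k + \binom{k}{2}$ broadcasters then contribute at most $(k+\binom{k}{2})(b+n+1)$.
\end{itemize}
The second case looks like a true slack of $k+\binom{k}{2}-1$ units, so a finer argument is needed. I would exploit the pairwise packing constraints between the gadget broadcasters themselves. For instance, $d(u_i, w_{ij})$ goes either through $l_i$ or $r_i$, with total weight about $2b+2n+1$, or through $s$, with total weight $2b+2n+2$. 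The packing condition then gives $\brp(u_i) + \brp(w_{ij}) \leq 2b+2n+1$, so the two cannot both reach $b+n+1$. Summing such constraints over all pairs $(u_i, w_{ij})$, or pairing each $w_{ij}$ with the adjacent $u_i$, should show that at most a bounded number of the gadget broadcasters can exceed $b+n$. Any excess is then absorbed by the loss of at least $1$ in $\brp(F)$ relative to $a+1$. I would also check directly that two broadcasters with value $b+n+1$, in different $I'_i$ or in different $E'_{ij}$, are within distance $2b+2n+2$ of each other via $s$, which contradicts the packing condition. This would allow at most one broadcaster of value $b+n+1$ outside $F$.

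Once every gadget broadcaster has value at most $b+n$ and $\brp(F) \leq a+1$, we get $\brp(V_2) \leq a+1+(k+\binom{k}{2})(b+n) = M$. Since the solution has value at least $M$, equality holds throughout, which gives $\brp(I'_i) = \brp(E'_{ij}) = b+n$ for all $i$ and all $i<j$. If the case with a single broadcaster of value $b+n+1$ survives, then $\brp(F) \leq a$ makes the total at most $M$ only when all other gadget broadcasters have value exactly $b+n$. To close this leftover case I would verify that such a configuration violates a packing constraint, using the unique broadcaster of value $b+n+1$ against its neighbouring gadget via the distance computation above.
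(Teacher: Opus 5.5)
Your proof is correct and uses the same distance facts as the paper's. First, $d(u,f_{x_1})\le b+n+3$ for every $u\notin F$, so gadget broadcasters have value at most $b+n+1$. Second, any two vertices outside $F$ are within $2b+2n+2$ of each other through $s$, so at most one gadget broadcaster reaches $b+n+1$. Third, every vertex of $I'_i$ is within $2b+2n+1$ of every vertex of $E'_{ij}$.

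You differ from the paper in where the lost unit comes from that pays for a broadcaster of value $b+n+1$. The paper assumes such a $v\in I'_i$ exists and keeps $\rho(F)\le a+1$. It then charges one unit to each of the $k-1$ adjacent edge gadgets, which drop to at most $b+n-1$ (to the two adjacent independent-set gadgets if $v\in E'_{ij}$). You instead split on whether some $f_x$ has value $2$. If one does, every gadget broadcaster is already at most $b+n$. If none does, $\rho(F)\le a$, and this lost unit pays for the unique broadcaster of value $b+n+1$. That forces the tight configuration where every other gadget has value exactly $b+n$, and a single adjacent pair then violates the packing condition.

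Your accounting is slightly more robust. With $\rho(F)\le a+1$, the paper's chain of inequalities in the $I'_i$ case evaluates to $M+2-k$, which is strictly below $M$ only for $k\ge 3$ (harmless, since one may assume $k\ge 3$). Yours works for every $k\ge 2$.

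Two small fixes for the write-up:
\begin{enumerate}
\item There is an off-by-one. From $d(u_i,w_{ij})\le 2b+2n+1$, the packing condition gives $\rho(u_i)+\rho(w_{ij})\le 2b+2n$, not $\le 2b+2n+1$ as you wrote. It is exactly this sharper bound that excludes the leftover pair of values $b+n+1$ and $b+n$.
\item The distance bound should be checked for all positions of the two broadcasters, including $l_i$, $r_i$ and $c_{ij}$; those pairs are even within $2b+2n$. Alternatively, note directly that $c_{ij}$ cannot reach value $b+n$, since $d(c_{ij},f_{x_1})=b+n+1$.
\end{enumerate}
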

\begin{claimproof}
By \cref{cl:exactlyone}, every such set contains exactly one broadcaster. We first turn our attention to independent set gadgets. 
Assume for a contradiction that a vertex $v$ in set $I'_i$ broadcast at \val{} $\brp(v)=b+n+1$. 
Since $v$ is at distance at most $2b+2n+2$ from all other vertices, %
any other broadcaster has a broadcast \val{} bounded by $b+n$. %
Since every such gadget has exactly one broadcaster by the previous arguments, 
it follows that the broadcast \val{} of any gadget is bounded by $b+n$. %
Furthermore, for all $j\in \bra{k}$, $j\neq i$, every vertex $E'_{ij}\setminus \{c_{ij}\}$ is at distance at most $2b+2n+1$ 
from $v$, so they cannot have a broadcast \val{} superior to $b+n-1$, which is also true for $c_{ij}$ as it is at distance $b+n+1$ from $f_{x_1}$. 
Once again, since every such gadget has exactly one broadcaster by the previous arguments, 
it follows that $\brp(E'_{ij}) \leq b+n-1$. 
Thus, the overall broadcast \val{} is bounded by (we let $J = \bra{k} \setminus \{i\}$): %
    \begin{align*}
        \brp(V_2) &= \brp(F) + \brp(v) + \sum\limits_{j \in J} \brp(I'_j) + 
        \sum\limits_{j,l \in J} \brp(E'_{jl}) + \sum\limits_{j \in J} \brp(E'_{ij}) \\
        & \leq a+1 + (b+n+1)+(k-1+ \frac{(k-1)(k-2)}{2})(b+n) + (k-1)(b+n-1) \\
        & < a+1 + (k+ \frac{k(k-1)}{2}) \cdot (b+n) \\
        & = M,
    \end{align*}
which is impossible. 

\medskip

For similar reasons, it is impossible for an edge gadget $E'_{ij}$ to have broadcast \val{} $b+n+1$: each independent set and edge gadget has thus a broadcast \val{} at most $b+n$. %
Therefore, equality must hold: each independent set and edge gadget has broadcast \val{} exactly $b+n$. 
\end{claimproof}

Note that for all $i\in \bra{k}$, $d(l_i,f_{x_1})=d(r_i,f_{x_1})= b+n+1$ and therefore $\brp(l_i),\brp(r_i)<b+n$, hence \cref{cl:bound} implies that they are not broadcasters in $I'_i$.  
A similar argument holds for all $i,j\in \bra{k}$ $i<j$ regarding $c_{ij}$.

For all $i,j\in\bra{k}$, $i<j$, let $v_i^{m_i}$ and $v_j^{m_j}$ be the unique broadcasters of $I'_i$ and $I'_j$, respectively. 
Moreover, let $v_{e_{ij}}$ be the only broadcaster of $E'_{ij}$, with broadcast \val{} $b+n$. 
The following holds:
\[d(v_i^{m_i},v_{e_{ij}})>\brp(v_i^{m_i})+\brp(v_{e_{ij}})=2b+2n\]

and symmetrically, $d(v_j^{m_j},v_{e_{ij}})>2b+2n$. 
But, by construction of $G_2$, this only holds if $e_{ij}=(v_i^{m_i},v_j^{m_j})$: 
indeed, if we consider $v_i^{m'}$ with $m' < m_i$ then: %
\begin{align*}
    d(v_i^{m'}, v_{e_{ij}}) & = d(v_i^{m'}, l_i) + d(l_i, v_{e_{ij}}) \\ 
    & = m' + 1 + 2b + 2n - m_i \\
    & \leq 2b + 2n. 
\end{align*}
A similar argument holds for $m' > m_i$ by going through $r_i$ instead. 
It follows that 
for all $i,j\in\bra{k}$ , $i<j$, there is an edge between $v_i^{m_i}$ and $v_j^{m_j}$ in $G_1$: $\{v_1^{m_1},\dots,v_k^{m_k}\}$ is a multicolored clique of $G_1$. 

\medskip

This concludes the proof of \cref{le:bptomclique}. 
\end{proofE}

\ThmWBPVCHard*

\begin{proofE}
\Cref{le:bptomclique} proves that $G_1$ is a yes instance of $k$-multicolored clique if and only if $(G_2,M)$ is a yes instance of \BP{}. 
Finally, note that the set 
\[
 X = \{s\} \cup \{l_i,r_i \mid i \in \bra{k}\} \cup \{c_{ij} \mid i,j\in \bra{k}, i<j \} 
 \]
is a vertex cover of $G_2$ of size $1 + 2k + \tfrac{k(k-1)}{2}$, yielding the claimed result.
\end{proofE}

Somewhat surprisingly, the edge-subdivision technique that works for \BI{} does not seem to extend to the unweighted version of \BP{}, as avoiding broadcasting on the newly created paths appears to be more difficult in this setting.
We leave this as an open question.

\section{Approximation algorithm parameterized by treewidth}
\label{sec:approx}

We complement our W[1]-hardness result for \BI{} (\cref{thm:BIFVSPWhard}) by providing a parameterized approximation algorithm 
within a ratio of $\frac{1}{2} - \epsilon$ for any $\epsilon > 0$. 
We will achieve this through a structural result that bounds 
the size on any independent $p$-broadcast with respect to that of a broadcast independent. 
For $p=1$ our result generalizes a result by \authorcite{DBLP:journals/dm/BessyR19} who showed 
that the size of a maximum independent broadcast is at most $4$ times the size of a maximum independent set 
(which corresponds to a maximum independent $1$-broadcast by definition). 

\apxresult{theorem}
{thm:bistruct}
{
    Let $G=(V,E)$ be a connected graph. If $G$ admits an independent broadcast $\bri$, then it admits an independent $p$-broadcast $\bri'$, where $\bri'(V)\geq \frac{p}{2p+2}\bri(V)$.
}
{approx}

\begin{proofE}
    Let $G=(V,E)$ be a connected graph and $\bri$ an independent broadcast of $G$. Intuitively, we construct an independent $p$-broadcast $\bri'$ as follows: we will keep all of the broadcasters of $\bri$ of \val{} at most $p$, truncate to value $p$ the broadcasters of \val{} between $p+1$ and $2p+2$, and replace each broadcaster of larger \val{} by several broadcasters placed on a shortest path starting at the original broadcaster (see \cref{fig:approx_indep}).%
    More formally, we construct $\bri'$ in the following way. For all $v\in V$ such that $\bri(v)>0$:
    \begin{itemize}
        \item if $\bri(v)\leq 2p+2$, then set $\bri'(v)=\min(\bri(v),p)$;
        \item if $\bri(v)> 2p+2$, then, as $\bri(v)\leq \ecc(v)$, there exists a vertex $u$ at distance $\beta(v) := \lfloor \dfrac{(\bri(v)+1)}{2} \rfloor$ from $v$. 
        Let $u_0^v,u_1^v,\dots,u_{\beta(v)}^v$ be a shortest path from $v$ to $u$ (with $v=u_0^v$ and $u=u_{\beta(v)}^v$). 
        Let $a, b \in \mathbb{N}$ such that %
        $\beta(v) = a(p+1)+b$, with $0 \leq b \leq p$. 
        Set $\bri'(u_{k(p+1)}^v):=p$ for all $k\in \{0,\dots, a-1\}$. %
        Then, if $b\geq 1$, set $\bri'(u_{a(p+1)}^v):=b$; 
        \item for every vertex $w\in V$ where $\bri'$ is still undefined, set $\bri'(w)=0$. 
    \end{itemize}
    
    By construction, $0 \le \bri'(w) \le p$ for all $w \in V$, hence $\bri'$ is a $p$-broadcast. 
    
    \begin{figure}[ht!]
\centering
\begin{tikzpicture}[
  v/.style  = {circle, draw=black, fill=white, inner sep=0pt, minimum size=6pt},
  bv/.style = {circle, draw=black, fill=black, inner sep=0pt, minimum size=8pt},
  >=Stealth,
  font=\small
]

\def\step{0.65}
\def\xcenter{4.88}

\foreach \i/\st in {0/bv,1/v,2/v,3/v,4/v,5/v,6/v,7/v,8/v,9/v,10/v,11/v,12/v,13/v,14/v,15/v}{
  \node[\st] (T\i) at (\i*\step, 0) {};
}
\foreach \i in {0,...,14}{
  \pgfmathtruncatemacro{\j}{\i+1}
  \draw (T\i) -- (T\j);
}
\node[below=3pt] at (T0) {$v$};
\node[below=3pt] at (T15) {$u^v_{15}$};

\draw[very thick, blue!75!black, ->, bend left=14]
  (T0) to node[above=5pt, midway, blue!75!black] {} (T15);

\draw[->, thick, gray!55] (\xcenter, -0.55) -- (\xcenter, -1.45);

\foreach \i/\st in {0/bv,1/v,2/v,3/bv,4/v,5/v,6/bv,7/v,8/v,9/v,10/v,11/v,12/v,13/v,14/v,15/v}{
  \node[\st] (B\i) at (\i*\step, -2.45) {};
}
\foreach \i in {0,...,14}{
  \pgfmathtruncatemacro{\j}{\i+1}
  \draw (B\i) -- (B\j);
}

\draw[fill=red!20, draw=red!70!black, line width=0.8pt, opacity=0.42]
  (B0) ellipse [x radius={1.45}, y radius=0.65];
\draw[fill=blue!20, draw=blue!70!black, line width=0.8pt, opacity=0.42]
  (B3) ellipse [x radius={1.45}, y radius=0.65];
\draw[fill=violet!20, draw=violet!60!black, line width=0.8pt, opacity=0.46]
  (B6) ellipse [x radius={1.45}, y radius=0.65];

\node[below=6pt] at (B0)  {$v$};
\node[below=6pt] at (B3)  {$u^v_{3}$};
\node[below=6pt] at (B6)  {$u^v_{6}$};
\node[below=6pt] at (B8)  {$u^v_{8}$};
\node[below=6pt] at (B15) {$u^v_{15}$};

\node[red!70!black, font=\footnotesize] at ($(B0)+(0,0.88)$) {$p$};
\node[blue!70!black, font=\footnotesize] at ($(B3)+(0,0.88)$) {$p$};
\node[violet!70!black, font=\footnotesize] at ($(B6)+(0,0.88)$) {$b$};

\end{tikzpicture}

\caption{\label{fig:approx_indep}
Example of a broadcast $\bri(v)=15$ transformed into an independent $p$-broadcast $\bri'$, with $p=2$. 
As $\beta(v) = \lfloor\frac{\bri(v)+1}{2}\rfloor = 8 = 2\cdot 3 + 2$, we have $a=b=2$, and $\bri'$ is of value $2\cdot a + b = 6 \geq \frac{2}{6}\bri(v)=5$.
}

\end{figure}
    
    \medskip
    
    \begin{claim}
    \label{cl:subballapx}
        The broadcast $\bri'$ is independent. 
    \end{claim}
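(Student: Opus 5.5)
We have to show that for any two distinct broadcasters $w,w'$ of $\bri'$ we have $d(w,w') > \max(\bri'(w),\bri'(w'))$. Since $\bri'(w),\bri'(w')\le p$, it suffices to prove $d(w,w')>p$, or the sharper bound $d(w,w')>\max(\bri'(w),\bri'(w'))$ when one of the values is small. Every broadcaster $w$ of $\bri'$ has an \emph{origin}: a broadcaster $v$ of $\bri$ with either $w=v$ (when $\bri(v)\le 2p+2$) or $w=u^v_j$ on the chosen shortest path from $v$ (when $\bri(v)>2p+2$). We record two facts about $j$. First, $j$ is a multiple of $p+1$. Second, $j\le \beta(v)-1$: if $b\ge 1$ the last index is $a(p+1)=\beta(v)-b$, and if $b=0$ it is $(a-1)(p+1)=\beta(v)-p-1$. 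Hence $d(v,w)\le \lfloor(\bri(v)-1)/2\rfloor$ in the large case, and $d(v,w)=0$ in the small case. We then split according to whether $w$ and $w'$ have the same origin.

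\emph{Same origin.} Two distinct broadcasters coming from the same $v$ both lie on the shortest path $u^v_0,\dots,u^v_{\beta(v)}$, at indices that are distinct multiples of $p+1$. Their distance is therefore at least $p+1>p$, and this case is immediate.

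\emph{Different origins $v\neq v'$.} Here we use $d(v,v')>\max(\bri(v),\bri(v'))$ and the triangle inequality $d(w,w')\ge d(v,v')-d(v,w)-d(v',w')$. If both origins fall in the small case, then $w=v$, $w'=v'$ and $\bri'\le\bri$ pointwise, so independence is inherited. If exactly one is large, say $\bri(v)>2p+2$ and $w'=v'$, then $d(w,w')\ge \bri(v)+1-\lfloor(\bri(v)-1)/2\rfloor\ge \lceil\bri(v)/2\rceil+1> p+1$, which is enough. The remaining subcase has both origins large. The same computation only gives
\[
d(w,w')\;\ge\; \max(\bri(v),\bri(v'))+1-\tfrac{\bri(v)-1}{2}-\tfrac{\bri(v')-1}{2}\;\ge\;2,
\]
which is not enough when $p\ge 2$.

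I expect this last subcase (two long broadcasts whose chosen paths point toward each other) to be the main obstacle. The first thing I would try is a refined triangle inequality that exploits asymmetry: assuming $\bri(v)\ge\bri(v')$, use $d(w,w')\ge d(v,w')-d(v,w)$ together with the fact that $v$ does not reach $v'$, so that the walk from $v$ is counted only once. If that still falls short, the fallback is to modify the construction slightly. Options include choosing $u$ as a vertex at distance $\beta(v)$ that is farthest from the other origins, or keeping only the broadcasters at indices at most $\beta(v)-p-1$. The latter guarantees $d(v,w)+d(v',w')\le \max(\bri(v),\bri(v'))-p$, hence $d(w,w')>p$. It loses at most one block of value $p$ per long broadcaster, and we would check that this loss is still compatible with the ratio $\frac{p}{2p+2}$ claimed in \cref{thm:bistruct}.
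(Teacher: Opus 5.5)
Your same-origin, small/small and mixed cases are correct. However, the proof stops at the one case that needs a real idea: $w=u^v_j$ and $w'=u^{v'}_{j'}$ with distinct origins and $\bri(v),\bri(v')>2p+2$. This is a genuine gap. No refinement of the triangle inequality that uses only the \emph{positions} $d(v,w)\le\beta(v)-1$ can close it, because $d(w,w')>p$ is simply false there. Take a path with $p=2$, $\bri(v)=\bri(v')=7$ and $d(v,v')=8$, and let the two chosen paths point at each other. Then $\beta=4=1\cdot 3+1$, and the broadcasters at index $3$ on each side are at distance $2=p$. What saves the construction is that the \emph{values} shrink near the end of the path: those two vertices get value $b=1$. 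The missing invariant, which the paper uses, is
\[
d(v,u^v_k)+\bri'(u^v_k)\le\beta(v)\qquad\text{whenever }\bri'(u^v_k)>0.
\]
It holds because $k(p+1)+p\le a(p+1)-1<\beta(v)$ for $k\le a-1$, and $a(p+1)+b=\beta(v)$. Independence of $\bri$ gives $\bri(v),\bri(v')\le d(v,v')-1$, so $\beta(v),\beta(v')\le\lfloor d(v,v')/2\rfloor$ and therefore $\beta(v)+\beta(v')\le d(v,v')$. Hence
\[
\bri'(w)+\bri'(w')\le \beta(v)-d(v,w)+\beta(v')-d(v',w')\le d(v,v')-d(v,w)-d(v',w')\le d(w,w').
\]
Since both values are at least $1$, this gives $\max(\bri'(w),\bri'(w'))<d(w,w')$. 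Put differently, the balls of radius $\beta(\cdot)$ around the original large broadcasters behave like a packing, and each new broadcast ball is nested inside one of them.

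You should drop your fallback plans. The claim is about the specific $\bri'$ built in the proof of \cref{thm:bistruct}, so altering the construction (a different choice of $u$, or truncating the path) does not prove it. Moreover, truncating to indices at most $\beta(v)-p-1$ would break the ratio of \cref{thm:bistruct}. For $\bri(v)=2p+3$ one has $\beta(v)=p+2$, so only $u^v_0$ survives, with value $p$, while $\frac{p}{2p+2}(2p+3)>p$.
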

    
    \begin{claimproof}
    For every original broadcaster $v$ such that $\bri(v)> 2p+2$, note that by construction, for every $u_k^v$ s.t. $\bri'(u_k^v)>0$, we have: %
    \begin{equation}
        \label{eq:subball2apx}
        d(v,u^v_k) + \bri'(u^v_k) \le \beta(v). %
    \end{equation}

    Let $u_1,u_2\in V$ be such that $u_1\neq u_2$ and $\bri'(u_1)>0$ and $\bri'(u_2)>0$. Let us show that $\bri'(u_1)<d(u_1,u_2)$ and $\bri'(u_2)<d(u_1,u_2)$:
    \begin{itemize}
        \item if $\bri(u_1)>0$ and $\bri(u_2)>0$, then $\bri'(u_1) \leqslant \bri(u_1)<d(u_1,u_2)$ by construction, and because $\bri$ is independent. The same holds for $\bri'(u_2)$ symmetrically;
        
        \item if $\bri(u_1)>0$ and $\bri(u_2)=0$, then there exist $v\in V$ and $k\in \bra{\beta(v)-1}$ such that $\bri(v)> 2p+2$ and $u_2=u_k^v$. 
        
        If $v \neq u_1$ then, $d(v,u_1)>\bri(v)\geq \beta(v) + p+1$ because $\bri$ is independent and $\bri(v)> 2p+2$, $d(v,u_2)+\bri'(u_2) \leq \beta(v)$ by \cref{eq:subball2apx}, and $\bri'(u_1)\leq p$ by construction. 
        
        We obtain that $\bri'(u_1)\leq p < d(v,u_1)-\beta(v) \leq d(v,u_1)-d(v,u_2)-\bri'(u_2) \leq d(v,u_1)-d(v,u_2)$. By triangular inequality, we also have $d(v, u1) \leq d(v, u_2) + d(u_2, u_1)$. Then, $\bri'(u_1) < d(u_1,u_2)$. %
        Similarly, in the other direction, we get $\bri'(u_2)\leq \beta(v) - d(v,u_2)< d(v,u_1)-d(v,u_2)\leq d(u_1,u_2)$. 
        
        If $v = u_1$, $d(u_1,u_2) > \bri'(u_1) + \bri'(u_2)$ by construction. %
        The same result holds symmetrically if $\bri(u_1)=0$ and $\bri(u_2)>0$;
        
        \item if $\bri(u_1)=0$ and $\bri(u_2)=0$, then there exist $v_1,v_2\in V$ and $k_1\in \{1,\beta(v_1)-1\}$, $k_2\in \{1,\beta(v_2)-1\}$ such that $\bri(v_1)> 2p+2$, $u_1=u_{k_1}^{v_1}$, $\bri(v_2)> 2p+2$, and $u_2=u_{k_2}^{v_2}$.
        
        If $v_1\neq v_2$, $\bri(v_1)<d(v_1,v_2)$ and $\bri(v_2)<d(v_1,v_2)$ because $\bri$ is independent; furthermore, $d(v_1,u_1)+\bri'(u_1)\leq \beta(v_1)$ and $d(v_2,u_2)+\bri'(u_2)\leq \beta(v_2)$ by \cref{eq:subball2apx}.
        
        We obtain that $\bri'(u_1)\leq \beta(v_1) - d(v_1,u_1)$. Furthermore, $\bri(v_1)<d(v_1,v_2)$, so $\beta(v_1) \leq \dfrac{d(v_1,v_2)}{2}$ and $\bri'(u_1)\leq \dfrac{d(v_1,v_2)}{2}-d(u_1,v_1)$. Symmetrically, $\bri'(u_2)\leq \dfrac{d(v_1,v_2)}{2}-d(u_2,v_2)$. As $\bri'(u_2)>0$, we have $\dfrac{d(v_1,v_2)}{2}-d(u_2,v_2)>0$. 
        
        Therefore, $\bri'(u_1)< \dfrac{d(v_1,v_2)}{2}-d(u_1,v_1)+\dfrac{d(v_1,v_2)}{2}-d(u_2,v_2)$. %
        Using two triangular inequalities, $d(v_1, v_2) \leq d(v_1, u_1)+d(u_1, v_2)$ and $d(u_1, v_2) \leq d(u_1, u_2)+d(u_2, v_2)$, we can conclude that $\bri'(u_1)< d(v_1, v_2)-d(u_1, v_1)-d(u_2, v_2) \leq d(u_1,u_2)$. The same holds symmetrically for $\bri'(u_2)$.
        
        If $v_1=v_2$, we have $d(u_1,u_2)\geq p+1$ and $\bri'(u_1),\bri'(u_2)\leq p$ by construction, which is the promised result.
    \end{itemize}
    \smallskip
    This concludes the proof of \cref{cl:subballapx}. 
    \end{claimproof}
    
    By \cref{cl:subballapx}, $\bri'$ is indeed an independent $p$-broadcast. 
    It remains to bound its \val. For each original broadcaster $v$, we define $\omega(v)$ as such:
    \[
    \omega(v) = 
    \begin{cases}
        \min(\bri(v), p) & \mid \bri(v) \leq 2p+2 \\
        \sum\limits_{i \in [\bri(v)]} \bri'(u^v_i) & \mid \bri(v) > 2p+2.
    \end{cases}
    \]
    If $\bri(v)\leq 2p+2$, then $\omega(v)=\min(\bri(v),p)\geq \frac{p}{2p+2}\bri(v)$. 
    If $\bri(v)> 2p+2$, there exist $a, b \in \mathbb{N}$ such that $\beta(v) = a(p+1)+b$, with $0 \leq b \leq p$. %
    Then, by construction, $\omega(v)=a\cdot p + b$, and $\omega(v)\geq\frac{p}{p+1}\beta(v)\geq\frac{p}{2p+2}\bri(v)$. 
    We then sum over the original broadcasters and obtain $\bri'(V)\geq\frac{p}{2p+2}\bri(V)$, as claimed.
\end{proofE}

\ThmBIapprox*

\begin{proofE}
    Let $G=(V,E)$ be a graph, $\epsilon>0$, and $\bri_1$ an optimal independent broadcast. Let $p:= \lfloor \frac{1}{2\epsilon}\rfloor$. From \cref{thm:WBPItw}, we can compute an optimal independent $p$-broadcast of $G$ in time $(p+1)^{O(\tw)}n^{O(1)}$, i.e. in time $(\frac{1}{\epsilon})^{O(\tw)}n^{O(1)}$. Let $\bri_2$ be such a $p$-broadcast. Furthermore, from \cref{thm:bistruct}, there is an independent $p$-broadcast of value at least $\frac{p}{2p+2}\bri_1(V)$. As $\bri_2$ is an optimal independent $p$-broadcast, 
    $\bri_2$ satisfies this inequality and has value at least $\frac{p}{2p+2}\bri_1(V)$, i.e. at least $(\frac{1}{2}-\epsilon) \bri_1(V)$.
\end{proofE}

We finally present a similar structural result for \BP. Note that an approximation algorithm can also be deduced from it. 
However, we were not able to prove that the problem is W[1]-hard parameterized by treewidth. 

\apxresult{theorem}
{thm:bpstruct}
{
    Let $G=(V,E)$ be a connected graph. If $G$ admits a broadcast packing $\brp$, then it admits a $p$-broadcast packing $\brp'$, where $\brp'(V)\geq \frac{p}{2p+1}\brp(V)$.
}
{approx}

\begin{proofE}
    Let $G=(V,E)$ be a simple graph, and $\brp$ a broadcast packing of $G$. Intuitively, we construct a $p$-broadcast packing $\brp'$ as follows: we will keep all the broadcasters of $\brp$ of \val{} at most $p$, and replace each broadcaster of \val{} larger than $p$ by several broadcasters on a shortest path having the original broadcaster as an endpoint (see \cref{fig:approx_packing}). %
    
    We construct $\brp'$ in the following way.
    For all $v\in V$ such that $\brp(v)>0$:
    \begin{itemize}
        \item if $\brp(v)\leq p$, then set $\brp'(v)= \brp(v)$;%
        \item if $\brp(v)> p$, then, as $\brp(v)\leq \ecc(v)$, there exists a vertex $u$ at distance $\brp(v)$ from $v$. 
        Let $u_0^v, u_1^v,\dots, u_{\brp(v)}^v$ be a shortest path from $v$ to such a $u$ (with $v = u_0^v$ and $u=u_{\brp(v)}^v$).
        Since $\brp$ is a packing, all vertices $u_1^v,\dots, u_{\brp(v)}^v$ are non-broadcasters in $\brp$.
        
        Let $a, b \in \mathbb{N}$ such that $\brp(v)-p = a(2p+1)+b$, with $a \geqslant 0$ %
        and $0 \le b \le 2p$.
        For every $k\in \braz{a}$, set $\brp'\left(u^v_{k(2p+1)}\right)=p$.
        If $b \ge 3$, set $\brp'\left(u_{p+a(2p+1)+\lceil b/2\rceil}^v\right)= \lceil b/2\rceil -1$; 
        \item for every vertex $w\in V$ where $\brp'$ is still undefined, set $\brp'(w):=0$. %
    \end{itemize}
    
    By construction, $0 \le \brp'(w) \le p$ for all $w \in V$, hence $\brp'$ is a $p$-broadcast.
    
    \begin{figure}[ht!]
\centering
\begin{tikzpicture}[
  v/.style  = {circle, draw=black, fill=white, inner sep=0pt, minimum size=6pt},
  bv/.style = {circle, draw=black, fill=black, inner sep=0pt, minimum size=8pt},
  >=Stealth,
  font=\small
]

\def\step{0.65}
\def\xcenter{4.88}

\foreach \i/\st in {0/bv,1/v,2/v,3/v,4/v,5/v,6/v,7/v,8/v,9/v,10/v,11/v,12/v,13/v,14/v,15/v}{
  \node[\st] (T\i) at (\i*\step, 0) {};
}
\foreach \i in {0,...,14}{
  \pgfmathtruncatemacro{\j}{\i+1}
  \draw (T\i) -- (T\j);
}
\node[below=3pt] at (T0) {$v$};
\node[below=3pt] at (T15) {$u^v_{15}$};

\draw[very thick, blue!75!black, ->, bend left=14]
  (T0) to node[above=5pt, midway, blue!75!black] {} (T15);

\draw[->, thick, gray!55] (\xcenter, -0.55) -- (\xcenter, -1.45);

\foreach \i/\st in {0/bv,1/v,2/v,3/v,4/v,5/bv,6/v,7/v,8/v,9/v,10/bv,11/v,12/v,13/v,14/bv,15/v}{
  \node[\st] (B\i) at (\i*\step, -2.45) {};
}
\foreach \i in {0,...,14}{
  \pgfmathtruncatemacro{\j}{\i+1}
  \draw (B\i) -- (B\j);
}

\draw[fill=red!20, draw=red!70!black, line width=0.8pt, opacity=0.42]
  (B0) ellipse [x radius={1.45}, y radius=0.65];
\draw[fill=blue!20, draw=blue!70!black, line width=0.8pt, opacity=0.42]
  (B5) ellipse [x radius={1.45}, y radius=0.65];
\draw[fill=blue!20, draw=blue!70!black, line width=0.8pt, opacity=0.42]
  (B10) ellipse [x radius={1.45}, y radius=0.65];
\draw[fill=violet!20, draw=violet!60!black, line width=0.8pt, opacity=0.46]
  (B14) ellipse [x radius={0.80}, y radius=0.30];

\node[below=6pt] at (B0)  {$v$};
\node[below=6pt] at (B5)  {$u^v_5$};
\node[below=6pt] at (B10) {$u^v_{10}$};
\node[below=6pt] at (B14) {$u^v_{14}$};
\node[below=6pt] at (B15) {$u^v_{15}$};

\node[red!70!black, font=\footnotesize] at ($(B0)+(0,0.88)$) {$p$};
\node[blue!70!black, font=\footnotesize] at ($(B5)+(0,0.88)$) {$p$};
\node[blue!70!black, font=\footnotesize] at ($(B10)+(0,0.88)$) {$p$};
\node[violet!60!black, font=\footnotesize] at ($(B14)+(0,0.88)$) {$\lceil b/2\rceil - 1$};%

\draw[decorate, decoration={brace, amplitude=5pt, mirror}, red!70!black, line width=1pt]
  ($(B0.south west)+(0,-0.62)$) -- ($(B2.south east)+(-0.06,-0.62)$)
  node[midway, below=7pt, font=\footnotesize, red!70!black] {$p$};

\draw[decorate, decoration={brace, amplitude=5pt, mirror}, blue!70!black, line width=1pt]
  ($(B2.south west)+(0.06,-0.62)$) -- ($(B12.south east)+(-0.06,-0.62)$)
  node[midway, below=7pt, font=\footnotesize, blue!70!black] {$a(2p+1)$};

\draw[decorate, decoration={brace, amplitude=5pt, mirror}, violet!60!black, line width=1pt]
  ($(B12.south west)+(0.06,-0.62)$) -- ($(B15.south east)+(0,-0.62)$)
  node[midway, below=7pt, font=\footnotesize, violet!60!black] {$b$};

\end{tikzpicture}

\caption{\label{fig:approx_packing}
Example of a broadcast $\brp(v)=15$ transformed into a $p$-broadcast packing $\brp'$, with $p=2$. 
As $\brp(v)-p = 2\cdot 5 + 3$, we have $a = 2$ and $b=3$, and
$\brp'$ is of value $2 + 2\cdot a + \lceil b/2\rceil - 1 = 7 \geq \frac{2}{5}\brp(v)=6$.
}

\end{figure}

    \begin{claim}
    \label{cl:subball}
        The broadcast $\brp'$ is a packing.
    \end{claim}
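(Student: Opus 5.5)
The proof rests on one containment property. For every original broadcaster $v$ of $\brp$ and every new broadcaster $w$ that the construction derives from $v$, I will show that the ball of $w$ under $\brp'$ lies inside the ball of $v$ under $\brp$. Concretely, the target inequality is
\[ d(v,w)+\brp'(w)\le \brp(v). \]
A kept broadcaster with $\brp(v)\le p$ is treated as $w=v$, so the inequality is immediate there. Once this holds, the packing condition between broadcasters coming from two different originals follows from the packing condition of $\brp$ and the triangle inequality. Broadcasters coming from the same original lie on a common shortest path, and I will handle them by direct index arithmetic.

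First I verify the containment inequality for each broadcaster placed on the path $u_0^v,\dots,u_{\brp(v)}^v$. Because this path is a shortest path, $d(v,u_i^v)=i$. For $w=u^v_{k(2p+1)}$ with $k\le a$ we get $d(v,w)+\brp'(w)\le a(2p+1)+p=\brp(v)-b\le\brp(v)$. For the last broadcaster, which exists when $b\ge 3$, write $c=\lceil b/2\rceil$. Then $d(v,w)+\brp'(w)=p+a(2p+1)+c+(c-1)\le p+a(2p+1)+b=\brp(v)$, using $2c-1\le b$.

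Next take broadcasters $w_1,w_2$ of $\brp'$ derived from distinct originals $v_1\ne v_2$. The triangle inequality gives $d(w_1,w_2)\ge d(v_1,v_2)-d(v_1,w_1)-d(v_2,w_2)$. The packing condition of $\brp$ gives $d(v_1,v_2)>\brp(v_1)+\brp(v_2)$. Combining these with the containment inequality yields $d(w_1,w_2)>\brp'(w_1)+\brp'(w_2)$.

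Finally take two broadcasters derived from the same $v$. Distances again equal index differences along the shortest path. Consecutive regular broadcasters are at distance $2p+1>p+p$. The last broadcaster sits at distance $p+c$ after $u^v_{a(2p+1)}$, and $p+c>p+(c-1)$. Any non-consecutive pair is even farther apart, while the values stay at most $p$, so the condition holds for every pair.

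I expect the main obstacle to be the index bookkeeping for the final broadcaster. One must check that its position $p+a(2p+1)+c$ does not exceed $\brp(v)$, so that it exists on the path. This holds because $p+a(2p+1)+c\le p+a(2p+1)+b=\brp(v)$. One must also check that its value $c-1$ is positive, which holds since $b\ge 3$ forces $c\ge 2$. The remaining steps are routine.
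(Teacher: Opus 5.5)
Your proof is correct and follows essentially the same route as the paper: the paper also relies on the containment inequality $d(v,u^v_k)+\brp'(u^v_k)\le\brp(v)$, combines it with the packing condition of $\brp$ and the triangle inequality when the two broadcasters come from different originals, and uses direct position arithmetic along the shortest path when they come from the same original. The differences are only organizational: you split cases by origin, treating kept broadcasters as $w=v$, whereas the paper splits by whether $u_1,u_2$ were already broadcasters in $\brp$; and you write out the same-origin and final-broadcaster index checks that the paper leaves as ``by construction''.
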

    
    \begin{claimproof}
    First, note that by construction, for every original broadcaster $v$ with $\brp(v)>p$ and every $u^v_k$ s.t. $\brp'(u^v_k)>0$, we have
    \begin{equation}
        \label{eq:subball}
        d(v,u^v_k) + \brp'(u^v_k) \le \brp(v).
    \end{equation}
    
    Let $u_1,u_2$ be such that $u_1\neq u_2$ and $\brp'(u_1)>0$, $\brp'(u_2)>0$. 
    Then: 
    \begin{itemize}
        \item if $\brp(u_1)> 0$ and $\brp(u_2)>0$, then $\brp'(u_1)+\brp'(u_2)\leq \brp(u_1)+\brp(u_2)<d(u_1,u_2)$, by construction of $\brp'$ and because $\brp$ is a packing;
        
        \item if $\brp(u_1)>0$ and $\brp(u_2)=0$, then there exist $v\in V$ and $k\in  \bra{\brp(v)}$ such that $u_2=u_k^v$. %
        If $v \neq u_1$, then both $u_1$ and $v$ are distinct original broadcasters in $\brp$, 
        so the packing condition gives $d(u_1,v) > \brp(u_1) + \brp(v)$.
        Moreover, by \cref{eq:subball}, $d(v,u_2)+\brp'(u_2) \leq \brp(v)$ and by construction, $\brp(u_1) \geq \brp'(u_1)$. Therefore, by the triangle inequality, we have $d(u_1,v) \leqslant d(u_1, u_2) + d(u_2, v)$ and hence: %
        \begin{align*}
                d(u_1,u_2) & \geq d(u_1,v) - d(v,u_2) \\ 
                & > \brp(u_1) + \brp(v) - (\brp(v) - \brp'(u_2)) \\
                & > \brp(u_1) + \brp'(u_2) \\
                & > \brp'(u_1) + \brp'(u_2) 
        \end{align*}
        
        as desired. If $v = u_1$, $d(u_1,u_2) > \brp'(u_1) + \brp'(u_2)$ by construction. 
        The proof is symmetric if $\brp(u_1)=0$ and $\brp(u_2)>0$;
        
        \item if $\brp(u_1)=\brp(u_2)=0$, then there are $v_1,v_2\in V$, $k_1\in \bra{\brp(v_1)}$, $k_2\in \bra{\brp(v_2)}$, such that $u_1=u_{k_1}^{v_1}$ and $u_2=u_{k_2}^{v_2}$. 
        If $v_1\neq v_2$, as $\brp$ is a packing, $\brp(v_1) + \brp(v_2) < d(v_1,v_2)$. 
        Moreover, by \cref{eq:subball}, $d(u_1,v_1)+\brp'(u_1)\leq \brp(v_1)$, and $d(u_2,v_2)+\brp'(u_2)\leq \brp(v_2)$.
        Therefore, $\brp'(u_1)+\brp'(u_2)<d(v_1,v_2)-d(u_1,v_1)-d(u_2,v_2)\leq d(u_1,u_2)$, by the triangle inequality.
        Otherwise, if $v_1=v_2$, then $\brp'(u_1)+\brp'(u_2)< d(u_1,u_2)$ by construction.
        
    \end{itemize}
    This concludes the proof of \cref{cl:subball}. 
    \end{claimproof}
    
    By \cref{cl:subball}, $\brp'$ is indeed a $p$-broadcast packing. 
    It remains to bound its \val.
    For each original broadcaster $v$, we define $\omega(v)$ as such:
    \[
    \omega(v) = 
    \begin{cases}
        \brp(v) & \mid \brp(v) \leq p \\
        \sum\limits_{i \in [\brp(v)]} \brp'(u^v_i) & \mid \brp(v) > p.
    \end{cases}
    \]
    If $\brp(v)\leq p$, then $\omega(v) = \brp(v) \geq \frac{p}{2p+1}\brp(v)$. 
    If $\brp(v) > p$, there exist $a, b \in \mathbb{N}$ such that $\brp(v) - p = a(2p+1) + b$, with $0 \leq b \leq 2p$.
    Then, by construction, $\omega(v) = p + a\cdot p + \max(0,\lceil b/2 \rceil -1)$. 
    We want to show that $(2p+1) \cdot w(v)\ge p\cdot \brp(v)$, that is:
    \begin{align*}
         (2p+1) \cdot \bigl(p+ap+\max\{0,\lceil b/2\rceil-1\}\bigr)&\ge p \cdot \bigl(p+a(2p+1)+b\bigr) \\
        \Leftrightarrow 
         (2p+1) \cdot \max\{0,\lceil b/2\rceil-1\}&\ge p \cdot (b-p-1).
    \end{align*}
    
    Which is true for all $b \in [2p]$. %
    Summing over all original broadcasters, we obtain $\brp'(V) \geq \frac{p}{2p+1}\brp(V)$ as claimed.
\end{proofE}

\section{Conclusion}
In this work, we initiated the study of the parameterized complexity of \BI{} and \BP{} for standard structural parameters. Prior to our results it was only known that \BI{} is W[1]-hard parameterized by the natural parameter $k$ %
through an easy reduction from \textsc{Independent Set}. 
Using dynamic programming on nice tree decompositions, we obtained FPT algorithms parameterized by 
$\tw$ plus diameter. 
Our result generalizes the known polynomial-time algorithm for trees by \authorcite{DBLP:journals/dam/BessyR22}. 
We complemented this result by providing hardness for several parameterizations, namely vertex cover for weighted variants of both problems and feedback vertex set size plus pathwidth for \BI. 
Together, these results imply hardness parameterized by treewidth for \BI. We moreover provided constant factor approximation algorithms parameterized by treewidth, thus %
completing the picture for this problem-parameter combination. 

The first natural open problem that arises from our work is to determine the complexity of \BP{} parameterized by treewidth. Note that if it turns out to be W[1]-hard then \cref{thm:bpstruct} would provide a constant-factor parameterized approximation algorithm as well. 
Finally, we believe that our algorithms can be extended to graphs of bounded cliquewidth and generalized to other metric-distance problems. We leave this as an open problem. %

\bibliographystyle{plainurl}%
\bibliography{bib}

\end{document}